\documentclass[10pt,reqno]{amsart}
\usepackage[pdftex]{graphicx}
\graphicspath{ {./figures/} }
\usepackage[left=3cm,right=3cm,top=3cm,bottom=3cm]{geometry}
\usepackage{amsmath, amssymb,amsthm}
\usepackage{mathrsfs}
\usepackage[shortlabels]{enumitem}
\usepackage{mlmodern}
\usepackage{bbm}
\usepackage{stmaryrd} 
\usepackage{esint} 
\usepackage{tikz}
\usetikzlibrary{arrows,arrows.meta,patterns,decorations.pathreplacing,external}
\usepackage{mlmodern}
\usepackage{upref} 
\usepackage{trimclip}
\usepackage{subcaption}
\usepackage{hyperref} 
\hypersetup{colorlinks,citecolor=blue,filecolor=blue,linkcolor=blue,urlcolor=navyblue}
\definecolor{navyblue}{rgb}{0.0, 0.0, 0.5}

\newtheorem{thm}{Theorem}[section]
\newtheorem{prop}[thm]{Proposition}
\newtheorem{lem}[thm]{Lemma}
\newtheorem{cor}[thm]{Corollary}
\theoremstyle{definition}
\newtheorem{defn}{Definition}
\newtheorem{rmk}{Remark}[section]

\newtheorem*{claim*}{Claim}

\newcommand{\N}{\mathbb{N}}
\newcommand{\Z}{\mathbb{Z}}
\newcommand{\R}{\mathbb{R}}
\newcommand{\C}{\mathbb{C}}

\newcommand{\T}{\mathbb{T}}
\renewcommand{\Im}{\operatorname{\mathrm{Im}}}
\renewcommand{\Re}{\operatorname{\mathrm{Re}}}
\DeclareRobustCommand{\Chi}{{\mathpalette\irchi\relax}}
\newcommand{\irchi}[2]{\raisebox{\depth}{$#1\chi$}}
\renewcommand{\epsilon}{\varepsilon}
\renewcommand{\hat}{\widehat}
\renewcommand{\tilde}{\widetilde}

\newcommand{\intbrr}[1]{\llbracket#1\rrbracket}

\newcommand{\E}{\mathbb{E}}
\newcommand{\oneb}{\mathbbm{1}}
\renewcommand{\P}{\mathbb{P}}

\newcommand{\floor}[1]{\left\lfloor#1\right\rfloor}

\newcommand\numberthis{\stepcounter{equation}\tag{\theequation}}
\numberwithin{equation}{section}

\newcommand{\x}{\mathbf{x}}
\newcommand{\z}{\mathbf{z}}
\newcommand{\W}{\mathrm{W}}
\newcommand{\da}{DA}
\newcommand{\jj}{\mathfrak{j}}
\newcommand{\Wa}{\mathrm{Wa}}



\begin{document}
\title[Eigenstates of quantized baker's map]{Eigenstates and spectral projection for quantized baker's map}
\author{Laura Shou}
\address{School of Mathematics, University of Minnesota, 206 Church St SE, Minneapolis, MN 55455 USA}
\curraddr{Joint Quantum Institute, Department of Physics, University of Maryland, College Park, MD 20742}
\email{lshou@umd.edu}
\begin{abstract}
We extend the approach from \cite{pw} to prove windowed spectral projection estimates and a generalized Weyl law for the (Weyl) quantized baker's map on the torus. The spectral window is allowed to shrink in the semiclassical (large dimension) limit. 
As a consequence, we obtain a strengthening of the quantum ergodic theorem from \cite{DNW} to hold in shrinking spectral windows, a Weyl law on uniform spreading of eigenvalues, and statistics of random quasimodes. 
Using similar techniques, we also investigate random eigenbases of a different (non-Weyl) quantization, the Walsh-quantized baker's map, which has high degeneracies in its spectrum. For such random eigenbases, we prove that Gaussian eigenstate statistics and QUE hold with high probability in the semiclassical limit.
\end{abstract}

\maketitle

\section{Introduction}

Quantum chaos seeks to understand the relationship between classically chaotic dynamical systems and their quantum counterparts. The correspondence principle from quantum mechanics suggests the classical behavior should manifest itself in the associated quantum system in the semiclassical limit.
It is conjectured that spectra and eigenfunctions of a quantum system associated with a classically chaotic system should reflect such behavior in the following ways: the eigenvalues should generally exhibit random matrix ensemble spectral statistics (BGS conjecture \cite{bgs}), and the eigenfunctions should behave like random waves (Berry random wave conjecture \cite{berry}). 

Mathematically, one manifestation of the classical-quantum correspondence principle is given by the quantum ergodic theorem of Shnirelman--Zelditch--Colin de Verdi\`ere \cite{shnirelman,deverdiere,zelditch}. In this setting, classical dynamics are described by geodesic flow $\varphi_t$ on the unit tangent bundle of a manifold $M$, and the associated quantum Hamiltonian is the Laplacian.  For $\varphi_t$ ergodic, the quantum ergodic theorem guarantees, in the large eigenvalue limit, a density 1 subsequence of Laplace eigenfunctions on $M$ that equidistribute in all of phase space.  Similar quantum ergodic properties for many other models have also been investigated, including for Hamiltonian flows \cite{HMR}, torus maps \cite{bdb,KurlbergRudnick,scar-cat,DBDE,MOK,DNW}, and graphs \cite{qgraphs, alm,as,anantharaman}.
The large variety of models on the torus and on graphs have been popular as simpler models for studying quantum chaos, as they tend to have minimal technical complications, yet can still exhibit fundamental or generic quantum chaotic behavior.

In most of this paper, we study the Balazs--Voros quantization \cite{BV}  of the baker's map on the two-torus $\T^2=\R^2/\Z^2$. We will also refer to this quantization as a Weyl-quantized baker map, or sometimes for convenience simply as quantized baker's map, since it was the first and is generally one of the most well-known quantizations of the baker's map. Along with cat maps, baker maps have been among the most well-studied models for quantum chaos on the torus. This quantization and other quantizations of the baker map have been of interest in both the physics and mathematics literature, e.g. \cite{BV,Saraceno,SaracenoVoros,RubinSalwen,DBDE,DNW,AN,bakergap}, as well as in quantum computing \cite{exp2,exp,ScottCaves,modmul}.
As noted in \cite{BV,Saraceno}, this type of baker's map quantization appears to suffer less from number theoretical degeneracies than the quantum cat maps, for which the eigenspaces can be highly degenerate. In particular, the spectrum for the Balazs--Voros quantization appears to be non-degenerate, and in the symmetrized Saraceno quantization \cite{Saraceno}, for most dimensions one even numerically recovers random matrix ensemble (COE) level spacing statistics in agreement with the BGS conjecture, after separating the spectra by symmetry class \cite{Saraceno}.

To introduce quantized baker's map, we start by briefly describing classical functions and maps on the torus.
A more detailed overview of the quantization will be given in Section~\ref{subsec:quant}.
The torus $\T^2$ is viewed as a classical phase space, with 1D coordinates of position $q$ and momentum $p$, which are required to behave periodically. These dimensions correspond to having a (2D) surface as phase space. 
Classical observables are smooth functions on the torus, $a\in C^\infty(\T^2)$. 
The classical baker's map $B:\T^2\to\T^2$ is the ergodic measure-preserving transformation defined by
\begin{align}\label{eqn:baker}
B(q,p) &= \begin{cases}
(2q,\frac{p}{2}), & 0\le q<\frac{1}{2}\\
(2q-1,\frac{p}{2}+\frac{1}{2}), &\frac{1}{2}\le q<1.
\end{cases}.
\end{align}
The position coordinate $q$ always is sent to $2q\mod 1$, while the momentum coordinate $p$ is sent to either $\frac{p}{2}$ or $\frac{p}{2}+\frac{1}{2}$, depending on whether $q$ is in the left or right half of $\T^2$. 
The map is named for its similarity to the process of kneading dough; the unit square is first stretch lengthwise, then cut in half and stacked back into the unit square (Figure~\ref{fig:baker}). 
By encoding the coordinates $(q,p)\in\T^2$ as an infinite binary sequence, one can also see the baker's map is equivalent to a two-sided Bernoulli shift.

\begin{figure}[!ht]
\centering
\includegraphics{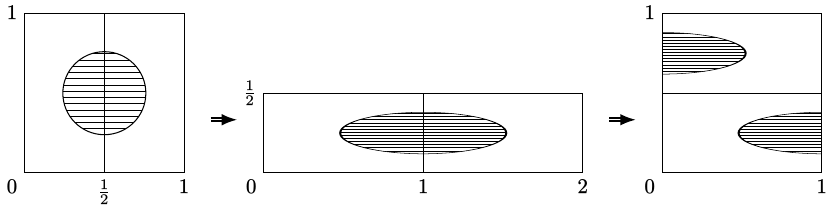}
\caption{Classical baker's map operation.}
\label{fig:baker}
\end{figure}

Quantization of the torus $\T^2$ associates for each $N\in\N$ an $N$-dimensional Hilbert space $\mathcal{H}_N$ of quantum states. Operators on $\mathcal{H}_N$ can thus be represented as $N\times N$ matrices. In terms of the semiclassical parameter $\hbar$, we have $N=(2\pi\hbar)^{-1}$, and so the semiclassical limit $\hbar\to0$ is the limit $N\to\infty$.
Given a classical observable $a\in C^\infty(\T^2)$, one has the Weyl quantization (derived from the usual Weyl quantization on $\R^2$), which for each $N$ identifies the classical observable $a$ with a quantum operator $\operatorname{Op}_N^\W(a):\mathcal{H}_N\to\mathcal{H}_N$.

For the baker's map $B:\T^2\to\T^2$, a quantization was first proposed by Balazs and Voros in \cite{BV}.
For $N\in2\N$, this is a unitary $N\times N$ matrix $\hat{B}_N$ 
which recovers\footnote{This classical-quantum correspondence can be stated precisely as an Egorov theorem, which states that the quantum evolution $\hat{B}_N^{-1}\operatorname{Op}_N^\W(a)\hat{B}_N$ should be well-approximated by the quantization of the classical evolution, $\operatorname{Op}_N^\W(a\circ B)$, as $N\to\infty$ for suitable observables $a$ supported away from discontinuities of $B$. 
The precise Egorov's theorem for $\hat{B}_N$ as proved in \cite{DNW} is stated in Theorem~\ref{thm:egorov}.} the classical map $B$ in the semiclassical limit $N\to\infty$. 
In the position basis, this quantized baker's map is defined as the $N\times N$ matrix
\begin{align}\label{eqn:bv}
\hat{B}_N &= \hat{F}_N^{-1}\begin{pmatrix}\hat{F}_{N/2}&0\\0&\hat{F}_{N/2}\end{pmatrix},
\end{align} 
where $(\hat{F}_N)_{jk}=\frac{1}{\sqrt{N}}e^{-2\pi i jk/N}$ for $j,k\in\intbrr{0:N-1}$ is the discrete Fourier transform (DFT) matrix. 
We note that \eqref{eqn:bv} is not the only possible quantization of the baker map; various other quantizations with certain properties have also been proposed and studied, e.g. \cite{Saraceno,RubinSalwen,SchackCaves}. 
In Section~\ref{sec:walshintro} and Section~\ref{sec:walsh}, we will also look at an alternative (non-Weyl) quantization method for the torus and baker map that was studied in \cite{SchackCaves,TracyScott,NonnenmacherZworski,AN}. 
While most of this article will focus on the Balazs--Voros quantization \eqref{eqn:bv}, similar techniques also apply to other related Weyl quantizations such as the symmetrized Saraceno quantization \cite{Saraceno}.

Due to discontinuities of the classical baker's map, which is discontinuous at the unit square boundaries as well as along the vertical line $q=1/2$, there are some complications with the classical-quantum correspondence (Egorov theorem). The correspondence can only hold for observables supported away from the discontinuities, otherwise there can be diffraction effects \cite{DNW}.
This type of correspondence was proved and was still sufficient to prove a quantum ergodic theorem \cite{DNW}.
The quantum ergodic theorem for this model states that a limiting density one subset of the eigenvectors in an orthonormal basis of $\hat{B}_N$ equidistribute in phase space, in the following sense: Selecting, for each $N$, an orthonormal eigenbasis $\{\varphi^{(j,N)}\}_j$ of $\hat{B}_N$, there are sets $\Lambda_N\subseteq\intbrr{0:N-1}$ with $\frac{\#\Lambda_N}{N}\to1$ as $N\to\infty$ so that for any sequence $(j_N\in\Lambda_N)_{N\in2\N}$ and $a\in C^\infty(\T^2)$,
\begin{align}\label{eqn:qe}
\lim_{N\to\infty}\langle \varphi^{(j_N,N)}|\operatorname{Op}_N^\W(a)|\varphi^{(j_N,N)}\rangle &= \int_{\T^2}a(\x)\,d\x.
\end{align}
This was proved in \cite{DNW} (and earlier for observables of position only, $a=a(q)$, in \cite{DBDE}).
As usual, this quantum ergodic theorem allows for a limiting density zero set of exceptional eigenvectors that may fail to equidistribute. Such exceptional eigenvectors are often expected to concentrate, or scar, near periodic orbits of the classical map. The property of having \emph{all} eigenvectors equidistribute in the sense of satisfying \eqref{eqn:qe} was termed ``quantum unique ergodicity'' (QUE) in \cite{RudnickSarnak}, and is still a major question for many quantum chaotic systems.

\vspace{2.5mm}
In this article, we study the eigenvectors of the quantized baker map $\hat{B}_N$.
We first prove spectral estimates for $\hat{B}_N$, including spectral projection estimates and a generalized Weyl law for eigenvectors corresponding to eigenvalues in a shrinking arc on the unit circle.
By taking spectral projections in shrinking arcs as $N\to\infty$, we will obtain a strengthening of the quantum ergodicity statement \eqref{eqn:qe}, which will hold for a limiting density one set of eigenvectors within a shrinking spectral window. This guarantees phase space equidistribution for a limiting density one set of eigenvectors \emph{within} a limiting density zero set. 
This puts a limit on how many of the possible exceptional non-equidistributing eigenstates can accumulate in a small eigenvalue range: if one is searching for such exceptional eigenstates, then there cannot be too many all belonging to too small a spectral window. 
As another application of the windowed spectral estimates, we will also prove properties on the statistics of random states (quasimodes) within these spectral arcs. It is expected due to random matrix statistics and the random wave conjecture that actual eigenvectors should have random Gaussian statistics. Due to the difficulty of studying  statistics of non-random eigenstates, one often considers random linear combinations of eigenstates with nearby eigenvalues as a proxy average for the actual eigenstates, e.g. see discussion in \cite{HejhalRackner,NonnenmacherAnatomy,Canzani}.
We will show such random wave analogues for $\hat{B}_N$ have many of the desired statistics and properties, such as Gaussian value statistics, equidistribution in phase space, and uniform expected sign change distribution for its real and imaginary parts, with high probability as $N\to\infty$.

Secondly, we will look at an alternate Walsh (non-Weyl) quantization of the baker map studied in \cite{SchackCaves,TracyScott,NonnenmacherZworski,AN} which has highly degenerate eigenspaces. Due to high spectral degeneracies, we can study statistics of actual (randomly chosen) eigenbases, rather than just those of random quasimodes. In particular, we will prove Gaussian value statistics and QUE for actual eigenbases with high probability (according to Haar measure in each eigenspace) as $N\to\infty$. We note that explicit non-equidistributing eigenstates were constructed in \cite{AN}, so this result will show that such choices of non-QUE eigenstates must be rare.

\subsection{Outline}
In Section~\ref{sec:main} we present the main results of the paper.
In Section~\ref{sec:setup}, we introduce background on torus quantization, define several useful subsets of coordinate pairs in $\intbrr{0:N-1}^2$, and give the proof outlines for Theorems~\ref{thm:Pmat} and \ref{thm:lweyl} on windowed spectral estimates and generalized Weyl law.

Sections~\ref{sec:Bpowers-proof}--\ref{sec:rw} contain the proofs of the main results concerning the Balazs--Voros baker map quantization \eqref{eqn:bv}. More specifically, the sections are divided as follows.
\begin{itemize}
\item Sections~\ref{sec:Bpowers-proof}--\ref{sec:p-sf-proof}: Proof of Theorem~\ref{thm:Pmat} for the spectral projection. 
\item Section~\ref{sec:lweyl-proof}: Proof of Theorem~\ref{thm:lweyl}, the windowed generalized Weyl law. The application to proving windowed quantum ergodicity is given in Section~\ref{subsec:qe} and Appendix~\ref{sec:wqe}.
\item Section~\ref{sec:rw}: Proof of Theorem~\ref{thm:rw} on random quasimode properties.
\end{itemize}
Finally, in Section~\ref{sec:walsh}, we discuss the Walsh-quantized baker map and prove Theorem~\ref{thm:walsh-main} on statistics of randomly chosen eigenbases.

\section{Main results}\label{sec:main}
Our main results are as follows. We first have two spectral estimates: pointwise estimates on spectral projection and spectral functions (Theorem~\ref{thm:Pmat}), and a windowed generalized Weyl law (Theorem~\ref{thm:lweyl}).
As a consequence we then obtain several applications: a Weyl law for uniform spreading of eigenvalues (Corollary~\ref{cor:weyl}), a windowed quantum ergodic theorem (Theorem~\ref{thm:qv}, Corollary~\ref{cor:qe}),
and properties of random quasimodes (Theorem~\ref{thm:rw}). All of the above is for the quantization \eqref{eqn:bv}, although similar arguments also apply to related Weyl quantizations such as the Saraceno quantization \cite{Saraceno}.
In the last subsection here, we introduce the Walsh quantization and Walsh quantized baker map, and state Theorem~\ref{thm:walsh-main} on statistics of random eigenbases for the Walsh quantized baker map.

Several of the methods we use are related to \cite{pw}, where we studied eigenvectors of quantum graph models associated with 1D ergodic interval maps. 
The baker's map quantizations here however require more involved analysis and some different methods to prove the necessary spectral estimates and time evolution behavior.

\subsection{Windowed spectral function} Here we consider eigenvectors with eigenvalues in a spectral window $I(N)\subset\R/(2\pi\Z)$, which is allowed to shrink as $N\to\infty$. We will always assume that eigenvectors are orthonormal, and that $N$ is even so \eqref{eqn:bv} is defined.

\begin{thm}[windowed spectral projection]\label{thm:Pmat} 
Let $N\in2\N$, and let $(e^{i\theta^{(j,N)}},\varphi^{(j,N)})_{j}$ be  eigenvalue-eigenvector pairs corresponding to an orthonormal eigenbasis $(\varphi^{(j,N)})_j$ of $\hat{B}_N$. Suppose $(I(N))_{N\in2\N}$ is a sequence of intervals in $\R/(2\pi \Z)$ such that
$|I(N)|\log N\to\infty$ as $N\to\infty$.
Let $P_{I(N)}$ be the spectral projection matrix of $\hat{B}_N$ on the interval $I(N)$,
\begin{align*}
P_{I(N)}=\sum_{j:\theta^{(j,N)}\in I(N)}|\varphi^{(j,N)}\rangle\langle\varphi^{(j,N)}|,
\end{align*}
where $|\varphi^{(j,N)}\rangle\langle\varphi^{(j,N)}|$ is the orthogonal projection onto the eigenstate $\varphi^{(j,N)}$.
Then for at least $N(1-o(|I(N)|))$ coordinates $x\in\intbrr{0:N-1}$, we have the pointwise estimate
\begin{align}\label{eqn:P-diag}
(P_{I(N)})_{xx}&=\frac{|I(N)|}{2\pi}(1+o(1)),
\end{align}
and for at least $N^2(1-o(|I(N)|))$ pairs $(x,y)\in\intbrr{0:N-1}^2$, we have the bound,
\begin{align}\label{eqn:P-offdiag}
(P_{I(N)})_{xy}&=o(|I(N)|),\; x\ne y,
\end{align}
with asymptotic decay rates uniform over the allowable $x,y$, and the location of $I(N)$.
The points $x$ and pairs $(x,y)$ to avoid are taken independent of the location of $I(N)$, and will correspond to coordinates near short-time forward or backward iterations of the graph of the classical baker's map (cf. \S\ref{subsec:sets}, Fig.~\ref{fig:avoid}), or near classical discontinuities. More precisely, in terms of sets to be defined in Section~\ref{subsec:sets} and parameters to be defined in \eqref{eqn:param2}, Eq.~\eqref{eqn:P-diag} holds for $x\not\in\da_{J,\delta,\gamma,N}^W$, and \eqref{eqn:P-offdiag} holds for $(x,y)\not\in\tilde{A}_{J,\delta,\gamma,N}^W$.
\end{thm}

Taking the trace of $P_{I(N)}$ using \eqref{eqn:P-diag} will then give the eigenvalue counting function for the window $I(N)$, which produces a more common Weyl law (Corollary~\ref{cor:weyl}). The estimates \eqref{eqn:P-diag} and \eqref{eqn:P-offdiag} are a pointwise (or local) Weyl law, the name coming from the literature on asymptotics of the spectral projection kernel for the Laplacian on Riemannian manifolds, e.g. see \cite{Zelditch-book}.
To avoid confusion with the generalized Weyl law and Weyl law in the next sections, we will generally refer to \eqref{eqn:P-diag} and \eqref{eqn:P-offdiag} just as spectral projection estimates.

\begin{rmk}
\begin{enumerate}[(i)]

\item For $q_N:\R/(2\pi\Z)\to\C$ a sequence of $C^2$ functions with $\|q_N''\|_\infty=o(\log N)$, one can extend the pointwise estimates of Theorem~\ref{thm:Pmat} to the operator
\begin{align*}
Q_{N,I(N)}:=(q_N\Chi_{I(N)})(\hat{B}_N)=\sum_{\theta^{(j,N)}\in I(N)}q_N(\theta^{(j,N)})|\varphi^{(j,N)}\rangle\langle\varphi^{(j,N)}|,
\end{align*}
where $\Chi_{I(N)}$ is the indicator function of the interval $I(N)$.
This is stated and proved in Appendix~\ref{sec:qN}.

\item Position vs momentum basis:
Theorem~\ref{thm:Pmat} is written for coordinates in the position basis $\{|x\rangle\}_{x=0}^{N-1}$, but the same results hold in momentum basis  $\{|p\rangle\}_{p=0}^{N-1}$ (defined in Section~\ref{subsec:quant}), as the matrix $\hat{B}_N$ in momentum basis is just the transpose, $\hat{B}_{N,\text{momentum}}=\hat{F}_N\hat{B}_N\hat{F}_N^{-1}=\hat{B}_N^T$, and so Theorem~\ref{prop:mpowers} applies in this case with just $x$ and $y$ swapped.

\item We show in Proposition~\ref{lem:notall} that in general the estimate \eqref{eqn:P-diag} cannot hold for all $x$; one does indeed need to exclude some coordinates. We expect the same to hold for off-diagonal estimates as well (cf. Figure~\ref{fig:P}).

\end{enumerate}
\end{rmk}

The main points of Theorem~\ref{thm:Pmat} are that the window $I(N)$ is allowed to shrink, and that the estimates are pointwise rather than a trace. Both of these more precise results will be necessary for the applications of a stronger quantum ergodic theorem and statistics of random quasimodes.
The estimates where one considers the trace $\operatorname{Tr}P_I$ with a fixed (non-shrinking) spectral window $I$ were proved for a general class of operators on the torus in \cite{MOK}. However, for the shrinking window $I(N)$ and the pointwise estimates that are required here, we will need more precise information on the time evolution of our specific operator $\hat{B}_N$.
Additionally, since the cut-off function $\Chi_{I(N)}$ is not continuous, its approximation by smooth functions is a bit delicate, in that the choice will determine the condition $|I(N)|\log N\to\infty$, which requires that $I(N)$ does not shrink too rapidly.

The proof of Theorem~\ref{thm:Pmat} will start by using the approximation used in \cite{pw}; we approximate $\Chi_{I(N)}$ by certain trigonometric polynomials to write,
\[
(P_{I(N)})_{xy}\approx \delta_{xy}\frac{|I(N)|}{2\pi}(1+o(1))+\sum_{|k|\ge 1} a_k(\hat{B}_N^k)_{xy}.
\]
We then need to understand the matrix entries $(\hat{B}_N^k)_{xy}$, or time evolution of $\hat{B}_N$, for short times $|k|$ up to (just below) $\sim\log N$.
The proof method for diagonal entries in \cite{pw} relied on knowing the matrix elements for powers $k$ up to the Ehrenfest time $\sim\log N$, and that the large matrix elements only occurred near coordinates corresponding to periodic orbits of the classical map. As pictured in Figure~\ref{fig:Bpowers}, a related, though more dispersed behavior, occurs for these matrices $\hat{B}_N$. Small powers trace out the classical map of the transformation $q\mapsto 2q\mod 1$ on position $q$, up to what we will consider an Ehrenfest-like time $J$, where $J$ will be of order just below $\log N$. We will prove the general behavior seen in the specific case of Figure~\ref{fig:Bpowers} in  Theorem~\ref{prop:mpowers}, using the evolution of coherent states under the matrices $\hat{B}_N$. This behavior of powers of $\hat{B}_N$ will explain the structure of the spectral projection matrices $P_{I(N)}$ shown in Figure~\ref{fig:P} and proved as Proposition~\ref{prop:mp-sf}.

We note two comparisons to the graph model we studied in \cite{pw}: First, the matrices $\hat{B}_N$ here are more complicated than those considered in the graph model, resulting in more complicated time iterates $\hat{B}_N^k$. Second, the windowed local Weyl law Theorem~\ref{thm:lweyl} requires more work than the diagonal equivalent in \cite{pw}, which in that case was a quick consequence of the quantization method from \cite{qgraphs}. The quantum observables  there were diagonal  since there was no momentum in the graph quantization, and so we only cared about how many {on-diagonal} entries of the spectral projection matrix were asymptotically $\frac{|I(N)|}{2\pi}$. In contrast, since here we have the standard Weyl quantization involving a usual position and momentum phase space, in order to use the projection estimates to prove a windowed local Weyl law, we will need off-diagonal estimates as well as the locations of the possible exceptional off-diagonal entries. 
The structure of the possible exceptional coordinates will allow us to prove Theorem~\ref{thm:lweyl}.

\begin{figure}[!ht]
\centering
\includegraphics[trim={2.1cm 3.2cm 2.1cm 3.5cm},clip,width=6in]{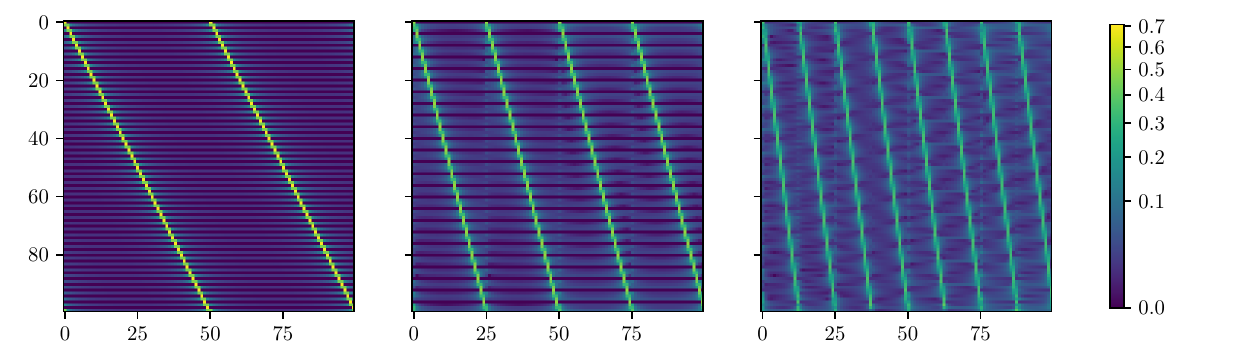}
\caption{The absolute value of the matrix entries of $\hat{B}_N^k$, for $N=100$ and $k=1,2,3$, plotted on a power scale. For these small powers $k$, the large matrix entries trace out the classical map $x\mapsto 2^kx\;\mathrm{mod}\;1$ (flipped vertically). However, as $k$ becomes larger and reaches the Ehrenfest time, the relation to the classical map begins to collapse, and the matrix entry patterns begin to look fairly random. 
(By the Ehrenfest time, the graph of $x\mapsto 2^kx\;\mathrm{mod}\;1$ fills up the entire grid.)
As $N$ increases, one can allow longer times $k$ before the collapse.
}
\label{fig:Bpowers}
\end{figure}

\subsection{Windowed generalized Weyl law} 

The pointwise estimates in Theorem~\ref{thm:Pmat} will be used to prove Theorem~\ref{thm:lweyl} below. This windowed local Weyl law will be one of the steps for proving windowed quantum ergodicity.

In what follows, the quantum operator $\operatorname{Op}_N^\W(f)$, expressible as an $N\times N$ matrix, is the Weyl quantization of the classical observable $f:\T^2\to\C$, and will be defined in Section~\ref{subsec:quant}.

\begin{thm}[windowed generalized Weyl law]\label{thm:lweyl}
Let $(I(N))_{N\in2\N}$ be a sequence of intervals in $\R/(2\pi \Z)$ such that
$
|I(N)|\log N\to\infty$, as $N\to\infty$.
Let $(e^{i\theta^{(j,N)}},\varphi^{(j,N)})_j$ be eigenvalue-eigenvector pairs corresponding to an orthonormal basis of the $N\times N$ unitary matrix $\hat{B}_N$. Then in the semiclassical limit $N\to\infty$, for any classical observable $f\in C^\infty(\T^2)$, there is the {windowed} local Weyl law,
\begin{align}
\frac{2\pi}{N|I(N)|}\sum_{\theta^{(j,N)}\in I(N)}\langle\varphi^{(j,N)}|\operatorname{Op}_N^\W(f)|\varphi^{(j,N)}\rangle &= \int_{\T^2}f(q,p)\,dq\,dp +o(1).
\end{align}
The $o(1)$ term depends only on $N$, $\|f\|_{C^3}$, and $|I(N)|$.
\end{thm}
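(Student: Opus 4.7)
The plan is to rewrite the windowed sum as the normalized trace
\begin{align*}
\sum_{\theta^{(j,N)}\in I(N)}\langle\varphi^{(j,N)}|\operatorname{Op}_N^\W(f)|\varphi^{(j,N)}\rangle=\operatorname{tr}\bigl(P_{I(N)}\operatorname{Op}_N^\W(f)\bigr),
\end{align*}
and then evaluate this trace by Fourier decomposition of $f$ combined with the pointwise spectral projection estimates of Theorem~\ref{thm:Pmat}. Expand $f=\sum_{(n,m)\in\Z^2}\hat f(n,m)\,e_{n,m}$ with $e_{n,m}(q,p)=e^{2\pi i(nq+mp)}$, and split $f=f_{\le R}+f_{>R}$ at a cutoff $R=R(N)\to\infty$ to be chosen slowly. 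Each $T_{n,m}:=\operatorname{Op}_N^\W(e_{n,m})$ is a unitary magnetic translation, so by the triangle inequality $\|\operatorname{Op}_N^\W(f_{>R})\|_{\mathrm{op}}\le\sum_{|n|+|m|>R}|\hat f(n,m)|=O(\|f\|_{C^3}/R)$. Summing the diagonal estimate \eqref{eqn:P-diag} over $x\notin\da_{J,\delta,\gamma,N}^W$ and using the trivial bound $|(P_{I(N)})_{xx}|\le 1$ on the exceptional set of size $o(|I(N)|)N$ gives $\operatorname{tr}P_{I(N)}=\tfrac{N|I(N)|}{2\pi}(1+o(1))$. Hence the $(0,0)$-mode already contributes the desired main term $\hat f(0,0)\operatorname{tr}P_{I(N)}=\bigl(\int_{\T^2}f\bigr)\tfrac{N|I(N)|}{2\pi}(1+o(1))$, while the high-frequency tail satisfies $|\operatorname{tr}(P_{I(N)}\operatorname{Op}_N^\W(f_{>R}))|\le \|\operatorname{Op}_N^\W(f_{>R})\|_{\mathrm{op}}\operatorname{tr}P_{I(N)}=o(N|I(N)|)$.

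\textbf{Axis modes.} It remains to show $\operatorname{tr}(P_{I(N)}T_{n,m})=o(N|I(N)|)$ for each fixed $(n,m)\ne(0,0)$ with $|n|+|m|\le R$. In position basis, $T_{n,m}$ has exactly one nonzero entry per row, located on the $m$-th shifted diagonal, $(T_{n,m})_{y,x}=\omega(n,m,x)\,\delta_{y,x+m\bmod N}$ with $|\omega|=1$, so
\begin{align*}
\operatorname{tr}(P_{I(N)}T_{n,m})=\sum_{x=0}^{N-1}(P_{I(N)})_{x,x+m}\,\omega(n,m,x).
\end{align*}
For modes with $m=0$ and $n\ne 0$, the diagonal estimate \eqref{eqn:P-diag} replaces $(P_{I(N)})_{xx}$ by $\tfrac{|I(N)|}{2\pi}$ for $x\notin\da_{J,\delta,\gamma,N}^W$; the resulting bulk sum vanishes by $\sum_x e^{2\pi inx/N}=0$ (valid for $0<|n|\le R\ll N$), and the exceptional contribution is $O(|\da_{J,\delta,\gamma,N}^W|)=o(N|I(N)|)$. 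The symmetric case $n=0$, $m\ne 0$ is handled identically after passing to the momentum basis via Remark~(ii) following Theorem~\ref{thm:Pmat}, since $T_{0,m}$ is diagonal in momentum basis.

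\textbf{Main obstacle: mixed modes.} The hardest case is $n\ne 0$ and $m\ne 0$, where neither basis diagonalizes $T_{n,m}$ and one must apply the off-diagonal estimate \eqref{eqn:P-offdiag} along the single shifted diagonal $\{(x,x+m):x\in\intbrr{0:N-1}\}$. The global bound $|\tilde A_{J,\delta,\gamma,N}^W|=o(|I(N)|)N^2$ does not by itself prevent the exceptional set from concentrating on one diagonal, so I would invoke the explicit geometric description of $\tilde A$ indicated in Theorem~\ref{thm:Pmat}: its pairs $(x,y)$ are those with $y$ in a $\gamma$-neighborhood of some short-time forward or backward iterate of $x$ under the baker map, together with neighborhoods of the classical discontinuities. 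For fixed $m$, the locus $y\equiv x+m$ intersected with $B^{\pm k}(x/N)\approx (x+m)/N$ for $0<k\le J\sim\log N$ reduces to lattice solutions of an affine equation modulo $1$, giving only $O(2^k)$ solutions per $k$ (one per branch of $B^{\pm k}$); after $\gamma$-thickening and accounting for the $O(\gamma N)$ points near discontinuities with parameters as in \eqref{eqn:param2}, the total exceptional count on the $m$-th diagonal is $o(N|I(N)|)$. Combining these bounds over the $O(R^2)$ low-frequency modes, weighted by $|\hat f(n,m)|\le \|f\|_{C^0}$, and choosing $R(N)\to\infty$ slowly enough to absorb the mode count, yields total error $o(N|I(N)|)$; dividing by $\tfrac{N|I(N)|}{2\pi}$ delivers the windowed local Weyl law with the claimed $C^3$-dependence.
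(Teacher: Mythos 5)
Your proposal is correct, and it reaches the theorem by a recognizably different organization than the paper's own argument (Proposition~\ref{thm:window}, Section~\ref{sec:lweyl-proof}). The paper does not work mode by mode: it converts the Fourier decay of $f$ into a position-kernel bound $|\langle y|\operatorname{Op}_N^\W(f)|x\rangle|\le C_M\|f\|_{C^M}\big(|[y-x]_N|^{1-M}+N^{1-M}\big)$, extracts the main term from the diagonal exactly as you do, discards all pairs with $|[x-y]_N|>V=\log N$ using that decay, and on the remaining band splits into pairs inside $\tilde{A}_{J,\delta,\gamma,N}^W$ (counted geometrically, Figures~\ref{fig:Varea}--\ref{fig:discont2}) and pairs outside, where \eqref{eqn:P-offdiag} applies. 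You instead keep $\operatorname{Op}_N^\W(f)=\sum_k\tilde f(k)T(k)$ explicit, note that each nonzero mode is supported on a single shifted diagonal of $P_{I(N)}$, kill pure-position modes by phase cancellation, pure-momentum modes via the momentum-basis version of \eqref{eqn:P-diag}, and mixed modes via \eqref{eqn:P-offdiag} plus a per-diagonal count of exceptional pairs; importantly, you identified the same crux as the paper, namely that the global bound on $\#\tilde{A}_{J,\delta,\gamma,N}^W$ is useless and one must exploit its structure (classical iterates plus discontinuity strips) to show it meets each relevant diagonal in only $o(N|I(N)|)$ points. Two small corrections to that count: the classical sets are thickened by $W=N^{\frac12+2\varepsilon(N)}$, not by $\gamma$, so $C^W_{k,N}\cap\{x=y+m\}$ has $O(W+2^k)$ points per $k$ (solutions of $(2^k-1)y\equiv m+r\ \mathrm{mod}\ N$, $|r|\le W$) rather than $O(2^k)$, and you should also include the $x\in\da_{J,\delta,\gamma,N}^W$ or $y\in\da_{J,\delta,\gamma,N}^W$ part of $\tilde{A}$ and the $2^J$ strips of width $\sim\delta N$; with the parameters \eqref{eqn:param2} all of these are still $o(N|I(N)|)$, so your conclusion stands. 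The trade-off: the paper's kernel/band argument handles all frequencies at once and yields the explicit rate $\mathcal{O}\big((1+\|f\|_{C^3})/(|I(N)|J)\big)$, whereas your mode-by-mode route is more elementary per step but, because of the slowly growing cutoff $R(N)$, delivers only an unquantified $o(1)$, which is all the theorem as stated requires.
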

More generally, using Theorem~\ref{thm:Qmat}, for any sequence of continuous $q_N:\R/(2\pi\Z)\to\C$ with $\|q_N''\|_\infty=o(\log N)$, we have the windowed generalized local Weyl law,
\begin{multline}\label{eqn:lweyl-q}
\frac{2\pi}{N|I(N)|}\sum_{\theta^{(j,N)}\in I(N)}q_N(\theta^{(j,N)})\langle\varphi^{(j,N)}|\operatorname{Op}_N^\W(f)|\varphi^{(j,N)}\rangle = \\
=\int_{\T^2}f(q,p)\,dq\,dp \fint_{I(N)}q_N(z)\,\frac{dz}{2\pi} +o(1)(1+\|q_N\|_\infty),
\end{multline}
where $\fint_{I(N)}q_N(z)\,dz:=\frac{1}{|I(N)|}\int_{I(N)}q_N(z)\,dz$.
(See Appendix~\ref{sec:qN}.)

\subsection{Uniform spreading of eigenvalues} 

As a consequence of the diagonal bound in Theorem~\ref{thm:Pmat} (and in Theorem~\ref{thm:Qmat}), we obtain,
\begin{cor}[Weyl law/eigenvalue counting]\label{cor:weyl}
For any sequence of intervals $(I(N))_{N\in2\N}$ in $\R/(2\pi\Z)$ with $|I(N)|\log N\to\infty$,
\begin{align}\label{eqn:weyl-ev}
\#\{j:\theta^{(j,N)}\in I(N)\} &=\frac{N|I(N)|}{2\pi}(1+o(1)).
\end{align}
Additionally, for any sequence of $C^2$ functions $q_N:\R/(2\pi\Z)\to\C$ with $\|q_N''\|_\infty =o(\log N)$,
\begin{align}\label{eqn:weyl-qn}
\sum_{\theta^{(j,N)}\in I(N)}q_N(\theta^{(j,N)}) &= \frac{N|I(N)|}{2\pi}\left(\fint_{I(N)}q_N(z)\,dz+o(1)(1+\|q_N\|_\infty)\right).
\end{align}
\end{cor}
Equation~\eqref{eqn:weyl-ev} shows that the eigenvalues of $\hat{B}_N$ are, down to the shrinking scale determined by $|I(N)|\to0$, uniformly distributed on the unit circle. 

\begin{rmk}
This type of eigenvalue counting Weyl law remainder and derivation using dynamics up to a logarithmic time is similar to the result of B\'erard \cite{Berard77} for the Laplacian on negatively curved manifolds.
\end{rmk}

\subsection{Windowed quantum ergodicity}
Next, we discuss the applications of the above spectral results to eigenstates, first quantum ergodicity in a shrinking spectral windows, and second, properties of random quasimodes.  As a consequence of Theorem~\ref{thm:lweyl} and Corollary~\ref{cor:weyl}, we will obtain,
\begin{thm}[windowed quantum variance]\label{thm:qv}
Let $(I(N))_{N\in2\N}$ be a sequence of intervals in $\R/(2\pi\Z)$ with $|I(N)|\log N\to\infty$, and let $a\in C^\infty(\T^2)$. Then for $N\in2\N$, 
\begin{align}
\lim_{N\to\infty}\frac{2\pi}{N|I(N)|}\sum_{\theta^{(j,N)}\in I(N)}\bigg|\langle\varphi^{(j,N)}|\operatorname{Op}_N^\W(a)|\varphi^{(j,N)}\rangle -\int_{\T^2}a(q,p)\,dq\,dp\bigg|^2 &=0.
\end{align}
\end{thm}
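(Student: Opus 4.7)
The plan is to adapt the classical Shnirelman--Zelditch--de-Verdi\`ere quantum variance argument to the shrinking-window setting, replacing the usual local Weyl law by Theorem~\ref{thm:lweyl} and using the Egorov theorem for $\hat B_N$ established in \cite{DNW}. First, by linearity of $\operatorname{Op}_N^\W$ and replacing $a$ with $a-\int_{\T^2}a$, I may assume $\int_{\T^2}a\,dq\,dp=0$, so the quantity to bound becomes
\[
V_N(a) := \frac{2\pi}{N|I(N)|}\sum_{\theta^{(j,N)}\in I(N)}\bigl|\langle\varphi^{(j,N)}|\operatorname{Op}_N^\W(a)|\varphi^{(j,N)}\rangle\bigr|^2.
\]

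For each fixed integer $T\ge 1$, define the classical time-average $\langle a\rangle_T := \frac{1}{T}\sum_{k=0}^{T-1}a\circ B^k$, interpreted away from the discontinuity set of $B^k$. Since $\hat B_N\varphi^{(j,N)} = e^{i\theta^{(j,N)}}\varphi^{(j,N)}$ with $|e^{i\theta^{(j,N)}}|=1$, the diagonal matrix element is invariant under the substitution $\operatorname{Op}_N^\W(a)\mapsto \hat B_N^{-k}\operatorname{Op}_N^\W(a)\hat B_N^k$; averaging and applying the baker's Egorov theorem from \cite{DNW} gives
\[
\langle\varphi^{(j,N)}|\operatorname{Op}_N^\W(a)|\varphi^{(j,N)}\rangle = \langle\varphi^{(j,N)}|\operatorname{Op}_N^\W(\langle a\rangle_T)|\varphi^{(j,N)}\rangle + \epsilon_{T,j}(N),
\]
with $\max_j|\epsilon_{T,j}(N)|=o_N(1)$ for each fixed $T$. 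Applying $|\langle\varphi|A|\varphi\rangle|^2\le \langle\varphi|A^*A|\varphi\rangle$ with $A=\operatorname{Op}_N^\W(\langle a\rangle_T)$ together with the Weyl composition rule $\operatorname{Op}_N^\W(g)^*\operatorname{Op}_N^\W(g)=\operatorname{Op}_N^\W(|g|^2)+O_g(N^{-1})$, and using the elementary bound $|x+\epsilon|^2\le(1+\eta)|x|^2+(1+\eta^{-1})|\epsilon|^2$, I obtain
\[
V_N(a) \le (1+o(1))\cdot\frac{2\pi}{N|I(N)|}\sum_{\theta^{(j,N)}\in I(N)}\langle\varphi^{(j,N)}|\operatorname{Op}_N^\W(|\langle a\rangle_T|^2)|\varphi^{(j,N)}\rangle + o_N(1).
\]

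Now the windowed local Weyl law Theorem~\ref{thm:lweyl} applied with $f=|\langle a\rangle_T|^2$ (after smoothing $f$ to handle the finitely many discontinuity lines of $\langle a\rangle_T$, at negligible cost since their measure vanishes in the smoothing limit) identifies the right-hand side with $\int_{\T^2}|\langle a\rangle_T|^2\,dq\,dp + o_N(1)$. Hence, for each fixed $T$,
\[
\limsup_{N\to\infty}V_N(a) \le \int_{\T^2}|\langle a\rangle_T(q,p)|^2\,dq\,dp.
\]
Since $B:\T^2\to\T^2$ is ergodic and $\int a=0$, the von Neumann mean ergodic theorem gives $\|\langle a\rangle_T\|_{L^2(\T^2)}\to 0$ as $T\to\infty$, and letting $T\to\infty$ closes the argument.

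The main obstacle is the interaction between the discontinuities of $B$, the growth of derivatives of $\langle a\rangle_T$ under iteration (the $q$-stretching makes $\|\langle a\rangle_T\|_{C^3}\lesssim 2^{3T}\|a\|_{C^3}$), and the shrinking window size $|I(N)|$. The reason this is tolerable is the order of limits: Theorem~\ref{thm:lweyl} has $o(1)$ error depending only on $\|f\|_{C^3}$ and $|I(N)|$, so for fixed $T$ one genuinely has $o_N(1)$ as $N\to\infty$, after which the ergodic-averaging limit $T\to\infty$ is taken. The diffraction errors near $\{q=0,\tfrac12\}$ and its finitely many pre-images require handling via a cutoff argument as in \cite{DNW}, where one splits $a=a_\eta+a_\eta^{\mathrm{bdry}}$ with $a_\eta^{\mathrm{bdry}}$ supported in an $\eta$-neighborhood of the discontinuities; the contribution of $a_\eta^{\mathrm{bdry}}$ to $V_N$ is controlled by $\|a_\eta^{\mathrm{bdry}}\|_\infty^2\cdot$(volume), which is $O(\eta)$ uniformly in $N$, and $\eta\to 0$ after $T\to\infty$.
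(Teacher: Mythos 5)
Your outline is essentially the paper's own proof: reduce to $\int_{\T^2}a=0$, exploit eigenvector invariance plus the Egorov theorem of \cite{DNW} to replace $a$ by a time average, pass through Cauchy--Schwarz and the composition estimate, feed $|\langle a\rangle_T|^2$ into the windowed local Weyl law (Theorem~\ref{thm:lweyl}), and finish with classical ergodicity, after a cutoff away from the discontinuities. The genuine difference is bookkeeping: you keep $T$ (and the cutoff width) fixed and take iterated limits $N\to\infty$, then $T\to\infty$, invoking the von Neumann mean ergodic theorem, whereas the paper couples $T=n(N)$, $\beta(N)$ to $N$ and uses the quantitative decay of correlations (Lemma~\ref{lem:erg-rate}); your version is lighter and suffices for the qualitative statement, while the paper's choice yields an (admittedly slow) explicit rate. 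Both hinge on the same point you identify: the Weyl-law error depends only on $\|f\|_{C^3}$ and $|I(N)|$, so for fixed $T$ it is $o_N(1)$.

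Two steps need tightening. First, the near-discontinuity piece cannot be dismissed by ``$\|a_\eta^{\mathrm{bdry}}\|_\infty^2\cdot(\text{volume})$, uniformly in $N$'': the quantity being bounded is quantum, and the only a priori operator bound is $\|\operatorname{Op}_N^\W(a_\eta^{\mathrm{bdry}})\|\le C\|a_\eta^{\mathrm{bdry}}\|_{C^2}$, which blows up with the cutoff. Its contribution to $V_N$ must itself go through Cauchy--Schwarz, the composition estimate, and the windowed Weyl law applied to $|a_\eta^{\mathrm{bdry}}|^2$, giving $O_T(\eta)\|a\|_\infty^2+o_N(1)$ --- exactly how the paper treats $a_n^{\mathrm{bad}}$ in Appendix~\ref{sec:wqe}; the same caveat applies to ``smoothing $f$ at negligible cost since their measure vanishes,'' which is not a cost measured in measure alone. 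Second, your order of limits is stated backwards: the iterates $B^k$, $k\le T$, have on the order of $2^T$ discontinuity lines, so the excluded region has measure $O(2^T\eta)$ and does not vanish if $T\to\infty$ precedes $\eta\to0$; you must send $\eta\to0$ at fixed $T$ (or scale $\eta\ll 2^{-T}$, as the paper's cutoff $\Chi_{\beta,n}$ does by construction). Since your argument actually delivers, for every fixed pair $(T,\eta)$, the bound $\limsup_N V_N(a)\le C\big(\int_{\T^2}|\langle a\rangle_T|^2+O_T(\eta)\big)$, the conclusion still follows after reordering, so these are repairs of presentation rather than of substance. Note also that the Egorov step should be applied only after the cutoff (it is false for symbols meeting the discontinuities), and that \cite{DNW} states it for backward evolution $a\circ B^{-t}$, so either average along backward orbits as the paper does or justify the forward version by the $q\leftrightarrow p$ symmetry.
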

This strengthens the quantum ergodic theorem proved in \cite{DNW} to hold in a shrinking window $I(N)$. 

One can get an explicit decay rate  on the quantum variance above depending on $N$ and $C^M$ norms of $a$ (Section~\ref{subsec:qe-proof}), however the rate is very slow since we only guarantee rather slow convergence in Theorem~\ref{thm:lweyl} (see Proposition~\ref{thm:window}).

Decay of the windowed quantum variance then yields windowed quantum ergodicity,
\begin{cor}[windowed quantum ergodicity]\label{cor:qe}
Let $(I(N))_{N\in2\N}$ be a sequence of intervals in $\R/(2\pi\Z)$ with $|I(N)|\log N\to\infty$. For each $N\in2\N$, there is a subset of indices $\Lambda_N\subseteq\{j:\theta^{(j,N)}\in I(N)\}$ with $\frac{\#\Lambda_N}{\#\{j:\theta^{(j,N)}\in I(N)\}}\xrightarrow{N\to\infty}1$, such that for any $a\in C^\infty(\T^2)$ and any sequence $(j_N\in\Lambda_N)_{N\in2\N}$,
\begin{align}
\lim_{N\to\infty}\langle \varphi^{(j_N,N)}|\operatorname{Op}_N^\W(a)|\varphi^{(j_N,N)}\rangle &= \int_{\T^2}a(\x)\,d\x.
\end{align}
\end{cor}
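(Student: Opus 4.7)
The plan is to derive Corollary~\ref{cor:qe} from Theorem~\ref{thm:qv} via the classical Shnirelman--Zelditch--Colin de Verdi\`ere diagonal extraction argument, adjusted so that density is measured against the eigenvalue count in the shrinking window rather than against $N$. The decay of the quantum variance in Theorem~\ref{thm:qv}, combined with the Weyl count $\#\{j:\theta^{(j,N)}\in I(N)\}=\frac{N|I(N)|}{2\pi}(1+o(1))$ from Corollary~\ref{cor:weyl}, is exactly what makes the classical argument go through on the correct scale.

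First I would fix $a\in C^\infty(\T^2)$, abbreviate $\mu_a(j,N):=\langle\varphi^{(j,N)}|\operatorname{Op}_N^\W(a)|\varphi^{(j,N)}\rangle-\int_{\T^2}a$, and define the ``bad set''
\[
B^a_{N,\varepsilon}:=\{j:\theta^{(j,N)}\in I(N),\ |\mu_a(j,N)|>\varepsilon\}.
\]
Chebyshev applied to Theorem~\ref{thm:qv} gives $\#B^a_{N,\varepsilon}\le\varepsilon^{-2}\cdot\frac{N|I(N)|}{2\pi}\cdot o_N(1)$, so together with Corollary~\ref{cor:weyl} one has $\#B^a_{N,\varepsilon}/\#\{j:\theta^{(j,N)}\in I(N)\}\to 0$ as $N\to\infty$ for each fixed $a$ and $\varepsilon$.

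Next I would pick a countable dense family $\{a_k\}_{k\in\N}\subset C^\infty(\T^2)$, dense in a $C^M$ norm strong enough to control $\|\operatorname{Op}_N^\W(a)\|_{\mathrm{op}}$ uniformly in $N$ via Calder\'on--Vaillancourt-type estimates for the torus Weyl calculus, and perform a Cantor-style diagonal extraction. Concretely, I would choose strictly increasing $N_k$ so that $\#B^{a_\ell}_{N,1/k}/\#\{j:\theta^{(j,N)}\in I(N)\}<1/k^2$ holds simultaneously for all $\ell\le k$ and all $N\ge N_k$, define $k(N):=\max\{k:N_k\le N\}\to\infty$, and set
\[
\Lambda_N := \{j:\theta^{(j,N)}\in I(N)\}\,\setminus\,\bigcup_{\ell=1}^{k(N)}B^{a_\ell}_{N,1/k(N)}.
\]
A union bound gives $\#\Lambda_N/\#\{j:\theta^{(j,N)}\in I(N)\}\ge 1-k(N)\cdot k(N)^{-2}\to 1$, and any sequence $j_N\in\Lambda_N$ satisfies $\mu_{a_\ell}(j_N,N)\to 0$ for each fixed $\ell$ by construction.

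The last step is to extend from the dense family to arbitrary $a\in C^\infty(\T^2)$ by approximation. Given $a$ and $\varepsilon>0$, pick $a_\ell$ close to $a$ in $C^M$, and estimate
\[
|\mu_a(j_N,N)|\le 2\|\operatorname{Op}_N^\W(a-a_\ell)\|_{\mathrm{op}}+|\mu_{a_\ell}(j_N,N)|+\bigg|\int_{\T^2}(a-a_\ell)\bigg|,
\]
apply the uniform operator-norm bound, take $\ell$ large first and then $N\to\infty$. I do not expect a real obstacle here: the whole argument is essentially routine once Theorem~\ref{thm:qv} is in hand. The only point that requires care is ensuring that throughout Steps~1--2 the relative density is taken against $\#\{j:\theta^{(j,N)}\in I(N)\}$ and not $N$, which is precisely why pairing the windowed variance decay of Theorem~\ref{thm:qv} with the window-matched Weyl count of Corollary~\ref{cor:weyl} is essential.
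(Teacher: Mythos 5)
Your proposal is correct and is essentially the paper's own route: the paper derives Corollary~\ref{cor:qe} from Theorem~\ref{thm:qv} by exactly this standard Chebyshev--Markov plus countable-dense-family diagonal extraction (citing \cite{zworski,MOK,DNW}), with the relative density measured against the windowed eigenvalue count from Corollary~\ref{cor:weyl} and the approximation step controlled by the Calder\'on--Vaillancourt bound $\|\operatorname{Op}_N^\W(a)\|\le C\|a\|_{C^2}$ of Lemma~\ref{lem:useful}(ii). Your write-up fills in the same steps the paper leaves to the cited references, so there is nothing further to add.
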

As mentioned in the introduction, 
this is a statement about phase space equidistribution 
for eigenvectors in a sequence of sets that is limiting density \emph{zero} in the entire set of eigenvectors.
(By Corollary~\ref{cor:weyl} the set $\{\varphi^{(j,N)}:\theta^{(j,N)}\in I(N)\}$ is limiting density zero when $|I(N)|\to0$.) 
Thus one cannot have too many exceptional eigenvectors that fail to equidistribute all clustered in too small a spectral window.

\begin{rmk}
On manifolds, for Hamiltonian flows which are ergodic on energy shells in some range $[E_1,E_2]$, one considers eigenfunctions $u_j=u_j(\hbar)$ of the quantum operator $-\hbar^2\Delta+V$, for $\Delta$ the Laplace--Beltrami operator and $V$ a potential.
In the quantum ergodic theorem, one considers eigenvalues in $[E_1,E_2]$, or in a smaller shrinking window $[E,E+\hbar]$ as in \cite{HMR}, \cite[Appendix D]{DyGu}. 
In both of these cases (the former with an additional symbol averaging condition \cite{zworski}) the $\hbar\to0$ limit of the quantum expectation values $\langle u_{j}|\operatorname{Op}_\hbar(a)|u_{j}\rangle$ along a density one subsequence is an integral of the classical symbol $a$ over the corresponding subset of phase space, e.g. $\{(x,\xi):E_1\le|\xi|^2+V(x)\le E_2\}$ or $\{(x,\xi):|\xi|^2+V(x)=E\}$, 
and control of the quantum dynamics is only needed up to an $\hbar$-independent time.

In our case, we require control of the quantum dynamics up to a time close to logarithmic in $N$ to allow for smaller spectral intervals. 
To our knowledge, having any type of shrinking spectral window is new for discrete time unitary maps on the torus.
For such maps, quantum ergodic theorems are typically stated in the limit $N\to\infty$ over all $N$ eigenvalues on the unit circle, and with a single fixed limiting value $\int_{\T^2}a(q,p)\,dq\,dp$. We note that a quantum ergodic statement in just a \emph{fixed} spectral window $I(N)=[\alpha,\beta]$ in this case would not be a strengthening, as there are of order $\frac{\beta-\alpha}{2\pi}N=cN$ eigenvectors with eigenangles in $[\alpha,\beta]$,
and a limiting positive density set out of $cN$ states is also limiting positive density out of all $N$ states.
Thus allowing the spectral window $I(N)$ to shrink in Theorems~\ref{thm:Pmat} and \ref{thm:lweyl} is necessary to produce a stronger quantum ergodic theorem.

We also mention the work of \cite{Keeler}, which allows a similarly shrinking spectral window, for the Laplacian on manifolds without conjugate points, with applications to random band-limited waves.
\end{rmk}

\subsection{Random quasimodes}
For the second eigenvector application, we consider random quasimodes, here meaning random linear combinations of eigenvectors corresponding to eigenvalues in a small spectral window. 
For example, one can take a shrinking interval $[\theta,\theta+o(1)]$ which limits to the single point $\theta$, although in general the interval does not need to be fixed around a specific $\theta$. The following properties are consequences of the {pointwise} estimates in Theorem~\ref{thm:Pmat}. 
The spectral projection matrix $P_{I(N)}$ is the covariance matrix (up to scaling) of the random state, and the asymptotics in Theorem~\ref{thm:Pmat} are enough to conclude several statistical properties of these states. 

For use below, we recall a standard complex Gaussian random variable $g\sim N_\C(0,1)$ is one whose real and imaginary parts are described by independent real $N(0,1/2)$ Gaussian random variables.
We also recall that a sequence of probability measures $(\nu_N)_{N}$ on $\R$ or $\C$
is said to converge weakly to a limiting probability measure $\nu_\infty$ if $\E_{\nu_N}f\to \E_{\nu_\infty} f$ as $N\to\infty$ for all bounded continuous $f$.
For a sequence of \emph{random} probability measures $(\mu_N)_N$, the sequence $(\mu_N)_N$ is said to converge weakly in probability to a non-random
measure $\mu$ if for any bounded continuous $f$ and any $\varepsilon>0$,
\begin{align*}
\P\left[\left|\int f\,d\mu_N-\int f\,d\mu\right|>\varepsilon\right]\xrightarrow{N\to\infty}0.
\end{align*}

\begin{thm}[random quasimodes]\label{thm:rw}
Let $(I(N))_{N\in2\N}$ be a sequence of intervals in $\R/(2\pi\Z)$ with $|I(N)|\log N\to\infty$.
Define the subspace $S_{I(N)}=\operatorname{span}(\varphi^{(j,N)}:\theta^{(j,N)}\in I(N))$, and let $g_i$, for $i=1,\ldots,{\operatorname{dim}S_{I(N)}}$, be iid $N_\C(0,1)$ random variables. Define the random quasimode $\psi=\psi_N$ as
\begin{align}
\psi(x) &= \frac{1}{\sqrt{\dim S_{I(N)}}}\sum_{j:\theta^{(j,N)}\in I(N)} g_j \varphi^{(j,N)}(x),
\end{align}
and let $\Omega_N$ be the probability space from which $\psi$ is drawn.
Then as $N\to\infty$,
\begin{enumerate}[(i)]
\item The vector $\psi_N$ has approximately Gaussian value statistics with high probability: the empirical distribution $\mu_{N} =\frac{1}{N}\sum_{x=0}^{N-1}\delta_{\sqrt{N}\psi_N(x)}$ of the scaled coordinates $\sqrt{N}\{\psi_N(x)\}_{x=0}^{N-1}$ converges weakly in probability to the standard complex Gaussian measure as $N\to\infty$.

\item With high probability, the $(\psi_N)$ equidistribute in all of phase space; more precisely, there are sets $\Gamma_N\subseteq\Omega_N$ with $\P[\Gamma_N]=1-o(1)$ so that for any sequence  $(\psi_N\in\Gamma_N)_N$, then
\begin{align}\label{eqn:equidist}
\lim_{N\to\infty}\langle\psi_N|\operatorname{Op}_N^\W(a)|\psi_N\rangle=\int_{\T^2}a(\x)\,d\x,\quad\forall a\in C^\infty(\T^2).
\end{align}

\item The scaled coordinates $\sqrt{N}\{\psi_N(x)\}_{x=0}^{N-1}$ have moments $\E|\sqrt{N}\psi_N(x)|^{m}$ and autocorrelation functions $\E[N^2|\psi_N(x)|^2|\psi_N(y)|^2]$ that agree to leading order with those of the standard complex Gaussian vector $N_\C(0,I_N)$ for nearly all $x,y$. In general, however, this standard Gaussian behavior need not hold for all coordinates $x,y$.

\item $\ell^p$ norms: For $g\sim N_\C(0,1)$ and $p>0$,
\begin{align*}
\E\|\psi_N\|_{p}^p &= \frac{\E|g|^p}{N^{p/2-1}}\left[1+o(1)\right],
\quad \E\|\psi_N\|_\infty\le \frac{C\sqrt{\log N}}{\sqrt{N|I(N)|}}=\frac{o(\log N)}{\sqrt{N}}.
\end{align*}

\item Sign changes: Let $Z^r_{N,\psi}=\{x\in\Z_N:\Re\psi_N(x),\Re\psi_N(x+1)\text{ have opposite signs}\}$. Thus $Z^r_{N,\psi}$ counts the sign changes\footnote{For convenience we will not count $x$ where $\Re\psi_N(x)=0$ or $\Re\psi_N(x+1)=0$; we only consider $x$ where $\Re\psi_N(x)$ and $\Re\psi_N(x+1)$ have definite signs. However this distinction will not matter, as a zero value can only happen with nonzero probability if $(P_{I(N)})_{xx}=0$, which is ruled out for almost all $x$ by \eqref{eqn:P-diag}.} of $\Re\psi_N$. Similarly, define $Z^i_{N,\psi}$ for sign changes of $\Im\psi_N$. Let $|Z^{r/i}_{N,\psi}|$ be the associated random measures $|Z^{r/i}_{N,\psi}|=\sum_{x\in Z^{r/i}_{N,\psi}}\delta_{x/N}$, which are scaled so that the support is in $[0,1]$. Then the expected limit distributions $\frac{2}{N}\E|Z_{N,\psi}^{r}|$ and $\frac{2}{N}\E|Z_{N,\psi}^{i}|$ converge weakly to the uniform distribution $\operatorname{Unif}([0,1])$ as $N\to\infty$. In particular, the expected value of the number of sign changes of the real or imaginary part of $\psi_N$ is $\frac{N}{2}(1+o(1))$.
\end{enumerate}
\end{thm}

Parts (i), (iii), and (iv) all describe the Gaussian-like behavior of the random quasimode.
Parts (i) and (ii) as stated only concern a single vector $\psi_N$ for each $N$ (rather than for a full basis of $\C^N$), though one can also construct full bases of random quasimodes satisfying the properties in (i) and (ii) with high probability, using similar methods as for the full basis results in Theorem~\ref{thm:walsh-main} or \cite[Theorem 2.5]{pw}.)

Comparing the coordinates in (iii) to those of a random vector chosen according to Haar measure on the unit sphere in $\C^N$ (also called circular random wave model in \cite[\S3]{phys}), we see the leading order Gaussian behavior for these good coordinates is what would be expected for a Haar random vector. (Note that, for large $N$, a Haar random state $Z/\|Z\|_2$ for $Z\sim N_\C(0,I_N)$, is well approximated by the Gaussian vector $Z/\sqrt{N}$ due to concentration of measure, e.g. \cite[\S3.1]{Vershynin}.) 
The excluded $x$ or $(x,y)$ in (iii) are those corresponding to short-time forward or backward orbits of the classical map, or to classical discontinuities. In terms of the sets to be defined in Section~\ref{subsec:sets}, the $N_\C(0,1)$ Gaussian results will hold for $x\not\in\da_{J,\delta,\gamma,N}^W$ and $(x,y)\not\in\tilde{A}_{J,\delta,\gamma,N}^W$. 

Despite the non-standard-Gaussian coordinates in (iii), they are few enough that one can still compute $\ell^p$ norms in (iv). The leading order values for $p<\infty$ agree with those for a random unit vector from the sphere in $\C^N$ (chosen according to Haar measure) as $N\to\infty$. In the $\ell^\infty$ norm, the bound is off by a $o(\sqrt{\log N})$ factor (with a more precise rate depending on $|I(N)|$)  from the expected value for a random unit vector in $\C^N$.

Studying ``zeros'' (or in this case, sign changes) in part (v) is motivated by studies of zeros of 2D (continuous) Riemannian waves \cite{Zelditch2}. The result (v) here is a much simpler version than the continuous case, giving the limiting mean distribution for the \emph{sign changes} (``nodal points'') of these 1D discrete eigenstates. 
For real Haar-random vectors, the distribution of sign changes was determined in \cite{KeatingMezzadriMonstra} in the context of studying sign changes for quantum maps associated with chaotic torus maps.
As in the real case, the mean number of sign changes of the real and imaginary parts of a complex Haar random vector is $\frac{N}{2}$, which can be seen since $Z/\|Z\|_2$, for $Z\sim N_\C(0,I_N)$, is Haar distributed.

\begin{rmk}
Numerical evidence suggests the actual eigenspaces of $\hat B_N$ are non-degenerate (at least for even $N$ between $50$ and $10\,000$ \cite{bakernumerics}), and that the actual eigenvectors can exhibit scarring along periodic orbits \cite{BV,ML2005}, even leading to fractal-like eigenstates \cite{ML2005}.
\end{rmk}

\subsection{Eigenstates for the Walsh quantized baker map}\label{sec:walshintro}
In this section we define the \emph{Walsh} quantization of the baker map, which has been studied in \cite{SchackCaves,TracyScott,NonnenmacherZworski,AN}.
We will provide further details in Section~\ref{sec:walsh}. 
Fix $D\ge2$ and consider the classical $D$-baker map on $\T^2$,
\begin{align*}
B(q,p)&=(Dq\;\;\mathrm{mod}\; 1,\frac{p+\floor{Dq}}{D}).
\end{align*}
When $D=2$, this is just the standard baker map \eqref{eqn:baker}. For the quantum Hilbert spaces $\mathcal{H}_N$, consider dimensions $N=D^k$ for $k\in\N$, so that states in $\mathcal{H}_N$ can be represented using tensor products, with (position) basis states $|\varepsilon_1\varepsilon_2\cdots\varepsilon_k\rangle = |{\varepsilon_1}\rangle\otimes |{\varepsilon_2}\rangle\otimes\cdots\otimes |{\varepsilon_k}\rangle$, where each $|{\varepsilon_i}\rangle$ is the standard basis element in $\C^D$ for the $\varepsilon_i$-th coordinate, $\varepsilon_i\in\intbrr{0:D-1}$. This corresponds to the position
$x=\sum_{m=1}^k\varepsilon_m D^{k-m}$.
Instead of using the Fourier transform $\hat{F}_{D^k}$ on $\mathcal{H}_N$ to construct a quantization, one replaces it with the \emph{Walsh transform} $W_{D^k}$, which is defined on tensor product states using very small DFT blocks as 
\begin{align*}
W_{D^k}(v^{(1)}\otimes\cdots\otimes v^{(k)}) &=\hat{F}_D v^{(k)}\otimes \hat{F}_D v^{(2)}\otimes\cdots\otimes \hat{F}_D v^{(1)}.
\end{align*}
Thus $W_{D^k}=(\hat{F}_D)^{\otimes k}\tilde{R}$,
where $(\hat{F}_D)^{\otimes k}$ is the $k$-fold tensor product of the standard $D\times D$ DFT matrix $\hat{F}_D$, and  $\tilde{R}:v^{(1)}\otimes v^{(2)}\otimes\cdots\otimes v^{(k)}\mapsto v^{(k)}\otimes v^{(k-1)}\otimes\cdots\otimes v^{(1)}$ is the dit reversal map. 
Then analogous to the Balazs--Voros construction \eqref{eqn:bv}, but using $W_{D^k}$ in place of $\hat{F}_{D^k}$, one defines the Walsh quantization of the $D$-baker map as
\begin{align}\label{eqn:wbaker}
B_k^\Wa = W_{D^k}^{-1}\begin{pmatrix}W_{D^{k-1}}&0&0\\0&\ddots&0\\0&0&W_{D^{k-1}}\end{pmatrix}.
\end{align}
Its action on tensor product states is
\begin{align}\label{eqn:walsh-action}
B_k^\Wa(v^{(1)}\otimes\cdots\otimes v^{(k)})=v^{(2)}\otimes v^{(3)}\otimes\cdots\otimes v^{(k)}\otimes \hat{F}_D^\dagger v^{(1)}.
\end{align}
To distinguish this matrix from the Balazs--Voros quantization $\hat{B}_N$ in \eqref{eqn:bv}, we use the slightly cumbersome notation $B_k^\Wa$. However, the Walsh quantization will only appear in this subsection and in Section~\ref{sec:walsh}.

The matrix $B_k^\Wa$ is called a {Walsh} quantization of the classical $D$-baker map, in the sense that it satisfies a classical-quantum correspondence \cite{AN}  involving observables $a\in C^\infty(\T^2)$ that are quantized according to a \emph{Walsh} {quantization}, rather than according to the more usual Weyl quantization. In fact, as shown in \cite{TracyScott}, $B_k^\Wa$ is \emph{not} a quantization for the baker's map according to the Weyl quantization. However, the Walsh quantized baker map allows for a more explicit understanding of the eigenvalues and explicit construction of some eigenstates \cite{AN}, and allows for the advantage that one can localize a state in both position and momentum under the Walsh harmonic analysis.

As noted in \cite{Gutkin}, these Walsh quantization matrices also essentially coincide with certain quantizations, in the sense of \cite{pzk,qgraphs}, of the angle-expanding map $x\mapsto Dx\;\mathrm{mod}\;1$ on $[0,1]$. The value statistics argument for the doubling map matrices in \cite[\S8]{pw} can be applied to the $D=2$ Walsh baker case here, after making adjustments for a difference of negative sign choice in the matrix. This would show Gaussian value statistics for randomly chosen eigenvectors of \eqref{eqn:wbaker} for $D=2$ in the position basis. 
However, QUE properties and the statements concerning a general coherent state basis must be handled differently; in particular,
the matrix powers for the specific $D=2$ case in position basis have a particularly simple pattern that can be analyzed just by matrix multiplication as in \cite[\S8]{pw}.
In the $D>2$ case, as well as in the coherent state bases for any $D\ge2$ (which are needed for the QUE property), the matrices have more complicated patterns, leading to Proposition~\ref{prop:walsh-powers}.

The matrices $B_k^\Wa$ have high eigenspace degeneracies, and it was shown in \cite{AN} by explicit construction of certain eigenstates that QUE does not hold for these models. However, we show here that sequences of randomly chosen eigenbases \emph{do} satisfy QUE with high probability as $k\to\infty$, so that the non-equidistributing states are rare. Additionally, we obtain the properties in Theorem~\ref{thm:rw} for (randomly chosen) actual eigenbases of $B_k^\Wa$. 
Due to the specifics of the Walsh quantization, we consider \emph{$(k,\ell)$-coherent state bases}, for $\ell=\ell(k)\in\intbrr{0:k}$, in addition to the position and momentum bases (which are included as the $\ell=0$ and $\ell=k$ cases). These coherent state bases
are defined at the start of Section~\ref{sec:walsh} and consist of vectors localized in small rectangles in phase space.
\begin{thm}\label{thm:walsh-main}
Take a random orthonormal eigenbasis $(\psi^{(k,m)})_{m=1}^{D^k}$ of $B_k^\Wa$ by choosing a random orthonormal basis (according to Haar measure) within each eigenspace. Let $\Omega_k$ be the probability space from which such bases are drawn, and let $\ell=\ell(k)\in\intbrr{0:k}$. Then the following hold as $k\to\infty$:
\begin{enumerate}[(i),leftmargin=*]
\item Gaussian value statistics w.h.p.: For a unit vector $\psi$, let 
\[
\mu^\psi_{k,\ell}:=\frac{1}{D^k}\sum_{[\varepsilon'\cdot\varepsilon]\in\mathcal{R}_{k,\ell}}\delta_{\sqrt{D^k}\langle\varepsilon'\cdot\varepsilon|\psi\rangle}
\]
be the empirical probability distribution of the scaled coordinates of $\psi$ in the $(k,\ell)$-coherent state basis.
Then there is a sequence of sets $\Pi_k\subseteq\Omega_k$ with $\P[\Pi_k]\xrightarrow{k\to\infty}1$ with the following property: For any orthonormal basis $(\tilde{\psi}^{(k,m)})_{m=1}^{D^k}$ in $\Pi_k$ and any sequence $(m_k)_k$, $m_k\in\intbrr{1:D^k}$, the sequence $(\mu^{\tilde{\psi}^{(k,m_k)}}_{k,\ell(k)})_k$ converges weakly to $N_\C(0,1)$ as $k\to\infty$.

\item QUE w.h.p.: There is a sequence of sets $\Gamma_k\subseteq\Omega_k$ with $\P[\Gamma_k]\xrightarrow{k\to\infty}1$ such that any sequence of orthonormal bases $(\tilde{\psi}^{(k,m)})_{m=1}^{D^k}$ from $\Gamma_k$ equidistributes in phase space as $k\to\infty$: For every sequence $(m_k)_k$, $m_k\in\intbrr{1:D^k}$, and any observable $a\in \operatorname{Lip}(\T^2)$,
\begin{align*}
\lim_{k\to\infty}\langle \tilde{\psi}^{(k,m_k)}|\operatorname{Op}_{k,\ell}^\Wa(a)|\tilde{\psi}^{(k,m_k)}\rangle &= \int_{\T^2}a(\x)\,d\x.
\end{align*}

\item 
\begin{sloppypar}
For nearly all $[\varepsilon'\cdot\varepsilon],[\delta'\cdot\delta]\in\mathcal{R}_{k,\ell}$, the scaled coordinates $\sqrt{D^k}\{\langle\varepsilon'\cdot\varepsilon|\psi^{(k,m)}\rangle\}_{[\varepsilon'\cdot\varepsilon]\in\mathcal{R}_{k,\ell}}$
have moments $\E|\sqrt{D^k}\langle\varepsilon'\cdot\varepsilon|\psi^{(k,m)}\rangle|^{n}$ and autocorrelation functions $\E[D^{2k}|\langle\varepsilon'\cdot\varepsilon|\psi^{(k,m)}\rangle|^2|\langle\delta'\cdot\delta|\psi^{(k,m)}\rangle|^2]$ that agree to leading order with those of the standard complex Gaussian.
\end{sloppypar}

\item $\ell^p$ norms: 
Let $\|\cdot\|_{p,(k,\ell)}$ and $\|\cdot\|_{\infty,(k,\ell)}$ denote the $\ell^p$ and $\ell^\infty$ norms taken in the $(k,\ell)$ coherent state basis. For $g\sim N_\C(0,1)$ and $p>0$,
\begin{align*}
\E\|\psi^{(k,m)}\|_{p,(k,\ell)}^p=\frac{\E|g|^p}{(D^k)^{\frac{p}{2}-1}}[1+o(1))],\quad\text{and}\quad\E\|\psi^{(k,m)}\|_{\infty,(k,\ell)}\le \frac{C\sqrt{k\log D^k}}{\sqrt{D^k}}(1+o(1)).
\end{align*}

\item 
Let $Z^r_{N,\psi^{(k,m)}}=\{x\in\Z_N:\Re\psi^{(k,m)}(x),\Re\psi^{(k,m)}(x+1)\text{ have opposite signs}\}$ be the number of sign changes\footnote{As in Theorem~\ref{thm:rw}, we do not count $x$ where $\Re\psi^{(k,m)}(x)=0$ or $\Re\psi^{(k,m)}(x+1)=0$. It appears numerically that it is possible to have $(P_\jj)_{xx}=0$, where $P_\jj$ is projection onto the $\jj$th eigenspace, for certain $D,k,\jj,x$, but by Theorem~\ref{thm:walsh-proj} this cannot happen for many $x$.} of the real part of $\psi^{(k,m)}$, and similarly define $Z^i_{N,\psi^{(k,m)}}$ for sign changes of $\Im\psi^{(k,m)}$. Let $|Z^{r/i}_{N,\psi^{(k,m)}}|$ be the associated random measures $|Z^{r/i}_{N,\psi^{(k,m)}}|=\sum_{x\in Z^{r/i}_{N,\psi^{(k,m)}}}\delta_{x/N}$, which is scaled so that the support is in $[0,1]$. Then the expected limit distributions $\frac{2}{N}\E|Z_{N,\psi^{(k,m)}}^{r}|$ and $\frac{2}{N}\E|Z_{N,\psi^{(k,m)}}^{i}|$ converge weakly to the uniform distribution $\operatorname{Unif}([0,1])$ as $N\to\infty$. In particular, the expected value of the number of sign changes of the real or imaginary part of any $\psi^{(k,m)}$ is $\frac{N}{2}(1+o(1))$.
\end{enumerate}
\end{thm}

The coherent states for which the properties in (iii) hold are determined by the ``good'' sets $G_{k,\ell}$ and $GP_{k,\ell}$ in Theorem~\ref{thm:walsh-proj}. In particular, the statement for moments holds for $|\varepsilon'\cdot\varepsilon\rangle\in G_{k,\ell}$, and the statement for autocorrelation functions holds for pairs 
with $|\varepsilon'\cdot\varepsilon\rangle,|\delta'\cdot\delta\rangle\in G_{k,\ell}$ and $(|\varepsilon'\cdot\varepsilon\rangle,|\delta'\cdot\delta\rangle),(|\delta'\cdot\delta\rangle,|\varepsilon'\cdot\varepsilon\rangle)\in GP_{k,\ell}$.

Properties (i), (iii), (iv), and (v) will follow, once we prove Proposition~\ref{prop:walsh-powers} on matrix powers of $B_k^\Wa$, from a similar kind of argument as for Theorem~\ref{thm:rw}, though with a different smooth approximation. For (ii), we will need to look at the specific Walsh quantization method to prove a local Weyl law that holds within each eigenspace.
Since we work with an entire eigenbasis rather than a single random state $\psi$, in (i) and (ii) the proof is more technical than in Theorem~\ref{thm:rw}, and will use the details used in \cite[\S6]{pw}. 

\begin{rmk}
The eigenvalues of $B_k^\Wa$ are $(4k)$-th roots of unity for $D\ge3$ and $(2k)$-th roots of unity for $D=2$. Thus the spectral arc length scale involved is at most $\frac{1}{4k}=\frac{1}{4\log_D N}$ (or $\frac{1}{2k}$ if $D=2$), which is smaller than the $|I(N)|$ allowed in Theorem~\ref{thm:rw}. 
However, due to the special construction of $B_k^\Wa$, its matrix properties are still computable even beyond the Ehrenfest time (Section~\ref{subsec:walsh-proof}), allowing for the more precise spectral estimates.
\end{rmk}

\begin{rmk}
We note that similar eigenvector and QUE results for the quantum cat map, which has high eigenspace degeneracies for certain dimensions $N$, were proved independently in the PhD thesis of N. Schwartz \cite{Schwartz-thesis}.
\end{rmk}

\section{Background and proof outline}\label{sec:setup}

Throughout this article, generic constants such as $C$, $c$, or $C_M$ may change value from line to line without further indication.

\subsection{Quantization on the torus}\label{subsec:quant}

In this section we review some standard properties of quantization on the two-torus phase space $\T^2=\R^2/\Z^2$. For a more detailed review, see for example \cite{bdb,qmaps}.

\subsubsection{States on the torus}
First we must identify the Hilbert space of states to associate with the torus.
One starts with the usual quantization of states with phase space $\R^2$. These are given by tempered distributions $\psi\in\mathscr{S}'(\R)$. 
On such states, the position and momentum operators are defined (in the position basis) via $(Q\psi)(q)=q\psi(q)$ and $(P\psi)(q)=-i\hbar\frac{\partial\psi}{\partial q}(q)$. 
For any $\hbar\in(0,1]$, there are quantum phase space translations $U(q,p)=e^{\frac{i}{\hbar}(pQ-qP)}$, which give a representation of the Weyl--Heisenberg group on $\mathscr{S}'(\R)$.  
Recall that $e^{\frac{i}{\hbar}(pQ-qP)}=e^{-\frac{i}{2\hbar}qp}e^{\frac{i}{\hbar}pQ}e^{-\frac{i}{\hbar}qP}$, and that $e^{-\frac{i}{\hbar}qP}$ is the translation operator $e^{-\frac{i}{\hbar}qP}\psi(q')=\psi(q'-q)$.

To go to the phase space $\T^2$, one now makes the restriction that $\psi$ be periodic in both position $q$ and momentum $p$, that is, $U(1,0)\psi=\psi$ and $U(0,1)\psi=\psi$. (More generally, one can allow  phase factors $e^{i\alpha}\psi$ and $e^{i\beta}\psi$, but for simplicity we choose $\alpha=\beta=0$.) With these periodic conditions, one can show we must have $(2\pi\hbar)^{-1}=N\in\N$ in order to have nonzero states. Choosing $N\in\N$ results in an $N$-dimensional space $\mathcal{H}_N$ of distributions, whose elements (with a choice of normalization) can be written as,
\begin{align}\label{eqn:psi}
\psi(q) &= \frac{1}{\sqrt{N}}\sum_{j\in\Z}c_j\delta\Big(q-\frac{j}{N}\Big),
\end{align}
with $c_{j+N}=c_j$.
This is written in the ``position representation'', for which we define the $N$ basis states,
\begin{align}
|x\rangle\equiv e_x(\cdot) &= \frac{1}{\sqrt{N}}\sum_{v\in\Z}\delta\Big(\cdot-\frac{x}{N}+v\Big),\quad x\in\Z_N.
\end{align}
In this basis, the state in \eqref{eqn:psi} is written $|\psi\rangle=\sum_{x=0}^{N-1}c_x|x\rangle$.
One can also switch to the ``momentum representation'' in terms of the momentum basis $\{|p\rangle\}$, via a discrete Fourier transform (DFT), $|p\rangle=\hat{F}_N^{-1}e_p(\cdot)$.

We now return to the phase translations $U(k_1,k_2)=e^{2\pi i N(k_2Q-k_1P)}$, where we have used $\hbar=(2\pi N)^{-1}$, and define the scaling
$T(k_1,k_2):=U(\frac{k_1}{N},\frac{k_2}{N})$.
For the torus position basis $\{|x\rangle\}_{x=0}^{N-1}$, we see $e^{2\pi i q_2Q}|x\rangle=e^{2\pi i q_2x/N}|x\rangle$, and $e^{-2\pi ik_1P}|x\rangle=|x+k_1\rangle$. 
Thus on the torus position basis states, the phase translation $T(k_1,k_2)$ acts as
\begin{align} 
\begin{aligned}
T(k_1,k_2)|x\rangle &= e^{-\pi i k_1k_2/N}e^{2\pi i k_2Q}e^{-2\pi i k_1P}|x\rangle \\
&= e^{-\pi i k_1k_2/N}e^{2\pi i k_2(x+k_1)/N}|x+k_1\rangle.
\end{aligned}\label{eqn:T(k)}
\end{align}

\subsubsection{Quantization of observables} 
Now that we have defined states as elements of the Hilbert space $\mathcal{H}_N$, we can define operators on these states corresponding to classical observables $f\in C^\infty(\T^2)$. Just as for the states, the quantization of observables on the torus starts with quantization of observables on $\R^2$, followed by a reduction to $\T^2=\R^2/\Z^2$. The quantization we work with is the Weyl quantization, which can be written on the torus as follows.
\begin{defn}\label{def:weyl}
The \emph{Weyl quantization} of an observable $f\in C^\infty(\T^2)$ is
\begin{align*}
\operatorname{Op}_N^\W(f):=\sum_{k\in\Z^2}\tilde{f}(k)T(k),
\end{align*}
where
$\tilde{f}(k) = \tilde{f}(k_1,k_2)= \int_{\T^2}f(q,p)e^{-2\pi i(qk_2-pk_1)}\,dq\,dp$ and $T(k) = T(k_1,k_2)=e^{2\pi i(k_2Q-k_1P)}$ is the phase space translation written in \eqref{eqn:T(k)}. 
\end{defn}
\begin{lem}\label{lem:useful}
Here we collect two useful properties concerning Weyl quantization.
\begin{enumerate}[(i)]
\item \cite[Prop. 3.10]{bdb}, \cite[Lemma 7]{DNW} For any integer $M\ge3$, $N\ge1$, and $f\in C^\infty(\T^2)$,
\begin{align}
\frac{1}{N}\operatorname{Tr}\operatorname{Op}_N^\W(f) &= \int_{\T^2}f(\x)\,d\x+\mathcal{O}_M\Big(\frac{\|f\|_{C^M}}{N^M}\Big),
\end{align}
where $\|f\|_{C^M}:=\sum_{|\gamma|\le M}\|\partial^\gamma f\|_\infty$, with multi-index notation $\gamma=(\gamma_1,\gamma_2)\in\N_0^2$ and $|\gamma|=|\gamma_1|+|\gamma_2|$.

\item Calder\'on--Vaillancourt, e.g. see \cite[Lemma 9]{DNW}: There is a constant $C$ so that for any $a\in C^\infty(\T^2)$ and $N\ge1$,
\begin{align*}
\|\operatorname{Op}_N^\W(a)\|_{\mathcal{B}(\mathcal{H}_N)} \le C\|a\|_{C^2}.
\end{align*}

\end{enumerate}
\end{lem}
We will also make use of the following standard Fourier decay estimate in the proofs:
For $M\in\N$, there is a constant $C_M$ so that for $f\in C^M(\T^2)$ and $k\in \Z^2\setminus\{0\}$, 
\begin{align}\label{eqn:fourier-bound}
	|\tilde{f}(k)| &\le \frac{C_M\max_{|\alpha|=M}\|\partial^\alpha f\|_\infty}{\|k\|_2^M}. 
\end{align}

Besides the Weyl quantization, another useful quantization is the \emph{anti-Wick quantization}, which has a nice comparison with the Weyl quantization in the semiclassical limit $\hbar\to0$.
The anti-Wick quantization is defined in terms of coherent states, which will be defined in Section~\ref{subsec:coh}.

\subsubsection{Quantization of area-preserving maps on the torus}

Finally, now that we have quantum states and observables, we can quantize the classical baker's map $B:\T^2\to\T^2$. 
The baker's map was first quantized in \cite{BV}, and it is this quantization we use here. Other quantizations of the baker's map have also been used, for example another quantization which preserves a classical symmetry was introduced in \cite{Saraceno}, and a large class of baker quantizations based on qubits that includes the former was studied in \cite{SchackCaves,TracyScott}.
As defined in the introduction, the quantization we primarily study has the explicit formula,
\begin{align}
\hat{B}_N &= \hat{F}_N^{-1}\begin{pmatrix}\hat{F}_{N/2}&0\\0&\hat{F}_{N/2}\end{pmatrix},
\end{align} 
for $\hat{F}_N$ the $N\times N$ DFT matrix.
It satisfies an Egorov theorem Theorem~\ref{thm:egorov}, which is a rigorous classical-quantum correspondence, proved in \cite{DNW}. The Egorov theorem states that for appropriate observables $a$, unitary conjugation by $\hat{B}_N$, i.e. $\hat{B}_N\operatorname{Op}_N^\W(a)\hat{B}_N^{-1}$, looks like Weyl quantization of composition with the classical baker's map, $\operatorname{Op}_N^\W(a\circ B^{-1})$, as $N\to\infty$.
Using the Egorov theorem, quantum ergodicity \eqref{eqn:qe} was then proved for this model in \cite{DNW}.

\subsection{Special sets and regions in the torus}\label{subsec:sets}
To discuss various regions of coordinates of the $N\times N$ matrices $\hat{B}_N^k$, 
in this section we define several subsets of $\intbrr{0,N-1}^2$, which also correspond to regions in $\T^2$.
Because we work with matrix notation like $\langle x|\hat{B}_N|y\rangle$, we note that the $x$ coordinate corresponds to the $x$th \emph{row} and the $y$ coordinate cooresponds to the $y$th \emph{column}, and additionally the coordinate pair $(x,y)=(0,0)$ corresponds to the top left corner. This is drawn for example in Figure~\ref{fig:avoid}.

\subsubsection{Discontinuity set} The classical baker's map $B:[0,1]^2\to[0,1]^2$ is discontinuous at the edges of the square $[0,1]^2$ as well as along the vertical line $q=1/2$. The $j$th iterates of the baker's map are also discontinuous along the vertical lines $q=\frac{k}{2^j}$, $k=0,\ldots,2^j-1$. Let the set of coordinates to avoid due to discontinuities be (with parameters $J\in\N,\delta\in(0,1/2),\gamma\in(0,1/2)$ to be chosen later)
\begin{align}
B_{J,\delta,\gamma,N}&=\bigg\{(x,y)\in\intbrr{0:N-1}^2:\left|\frac{y}{N}-\frac{k}{2^J}\right|\le \delta\text{ any }k\in\Z,\; \text{ or }\frac{x}{N}\le \gamma\text{ or }\frac{x}{N}\ge1-\gamma\bigg\}.
\end{align}
This set has size $\#B_{J,\delta,\gamma,N}\le C(2^J\delta N^2+\gamma N^2)$. 
Note that we just have $\delta\ll 2^{-J}$ and $\gamma\ll1$ in order for $\#B_{J,\delta,\gamma,N}$ to have a ``small'' size $o(N^2)$.
An example set is shown in Figure~\ref{fig:discont}. 

\subsubsection{Classical sets} Let the pairs of points $(x,y)\in\intbrr{0:N-1}^2$ with $(x/N,y/N)$ close to the graph of the classical map $q\mapsto 2^kq\mod 1$ on $\R/\Z$ be
\begin{align}
C^W_{k,N}&=\{(x,y)\in\intbrr{0:N-1}^2:d_{\Z/N\Z}(x,2^ky)\le W\},
\end{align}
where $d_{\Z/N\Z}(x,2^ky)$ denotes the distance in $\Z/N\Z$. An example set $C_{k,N}^W$ is drawn in Figure~\ref{fig:slopes}. The letter $C$ is for ``classical set'', and the set $C_{k,N}^W$ can be seen as a thickened version of the graph of $q\mapsto 2^kq$.
This set has size $\#C_{j,N}^W\le (2W+1)N$, since for each column $y$ there are $2\lfloor W\rfloor+1$ coordinates $x$ within $W$ of $y$  (Figure~\ref{fig:slopes}). We will show that away from these classical sets, the entries $\hat{B}_N^k$ are generally small (Theorem~\ref{prop:mpowers}).

\subsubsection{Union to avoid}
The total set of points to avoid is the union of the ``classical'' sets and ``discontinuity'' sets defined above up to time $J$,
\begin{align}
A^W_{J,\delta,\gamma,N}&:= B_{J,\delta,\gamma,N}\cup \bigcup_{k=1}^JC_{k,N}^W.
\end{align}
It has size $\#A_{J,\delta,\gamma,N}^W\le C(2^J\delta N^2+\gamma N^2+JWN)$.

We will need a symmetrized version of this set in order to control entries of $B_N^{-k}$, and will also include additional points relating to diagonal entries (which will be relevant for Section~\ref{subsec:off-diag}),
\begin{align}\label{eqn:tilde-A}
\tilde{A}_{J,\delta,\gamma,N}^W:=\{(x,y):(x,y)\in A_{J,\delta,\gamma,N}^W\text{ or }(y,x)\in A_{J,\delta,\gamma,N}^W,\text{ or }(x,x)\in{A}_{J,\delta,\gamma,N}^W\text{ or }(y,y)\in{A}_{J,\delta,\gamma,N}^W\}. 
\end{align}
The size of this set restricted to a region around the diagonal will be estimated in Eq.~\eqref{eqn:est-diag-bad} in a similar way as Eq.~\eqref{eqn:da} below.

\begin{figure}[!ht]
\makebox[\textwidth][c]{
\begin{subfigure}[t]{.5\textwidth}
\captionsetup{width=.9\linewidth}
\includegraphics{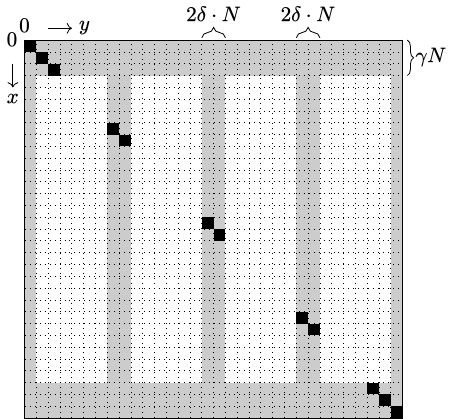}
\caption{$N\times N$ grid ($N=32$) with an example set $B_{J,\delta,\gamma,N}$ for $J=2$, $\delta=1/N$, and $\gamma=3/N$ in gray. The black squares are those points $(x,x)$ on the diagonal.}\label{fig:discont}
\end{subfigure}
\hfill
\begin{subfigure}[t]{.5\textwidth}
\captionsetup{width=.9\linewidth}
\includegraphics{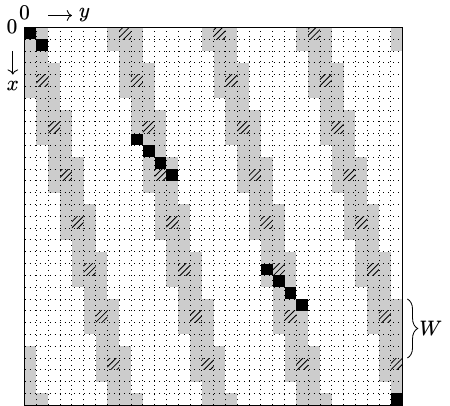}
\caption{An example set $C_{J,N}^W$ for $N=32$, $J=2$, and $W=5$ in gray. The black squares are those on the diagonal, and the hatched squares are the points where $x=2^Jy\;\mathrm{mod}\, N$.}\label{fig:slopes}
\end{subfigure}
}
\caption{The ``bad set'' $B_{J,\delta,\gamma,N}$ is to be excluded due to discontinuities and diffraction effects. The ``classical set'' $C_{k,N}^W$ is where we expect $\hat{B}_N^k$ to be (relatively) large. 
Away from $B_{J,\delta,\gamma,N}$ and $\bigcup_{k=1}^J C_{k,N}^W$, the matrices $\hat{B}_N^k$, $k=1,\ldots,J$, will have small entries (Theorem~\ref{prop:mpowers}).}\label{fig:avoid}
\end{figure}

\subsubsection{Diagonal points to avoid}
It will be useful to define the set of diagonal points $(x,x)$ that are in $A_{J,\delta,\gamma,N}^W$  so we can avoid them. Define the set of these points $x$ as
\begin{align}
\da_{J,\delta,\gamma,N}^W&=\{x\in\intbrr{0:N-1}:(x,x)\in A_{J,\delta,\gamma,N}^W\}.
\end{align}
An example is shown along the diagonals in Figure~\ref{fig:avoid}. 

We can give an estimate of the size of the diagonal in $C_{k,N}^W$ by computing the area of the bounding region shown in Figure~\ref{fig:area}. The bounding region depicted in blue northwest hatching has area $\sqrt{2}\operatorname{length}(ab)$.
The slope of the segment $av_2$ is $-2^k$, and the segment $v_1v_2$ has length $(2W+1)+2^k$, so the length of segment $ac$ is $\frac{(2W+1)+2^k}{2^k}$. Using the law of sines in the triangle $acb$, since we know $\angle a=\frac{\pi}{4}$ and $\angle c=\frac{\pi}{2}+\tan^{-1}(2^{-k})$, gives the length of $ab$ which we see is $\le CW$ for a constant $C$. Thus the contribution to $\#\da_{J,\delta,\gamma,N}$ from each $C_{k,N}^W$ is $\le C2^kW$.
Taking all $C_{k,N}^W$ for $k=1,\ldots,J$, and adding in the coordinates in $B_{J,\delta,\gamma,N}$, results in the estimate
\begin{align}\label{eqn:da}
\begin{aligned}
\#\da_{J,\delta,\gamma,N}^W &\le C_1\left(\gamma N+2^J\delta N+ \sum_{k=1}^J 2^kW\right)\\
&\le C_2(\gamma N+2^J\delta N+2^JW).
\end{aligned}
\end{align}

\begin{figure}[!ht]
\includegraphics{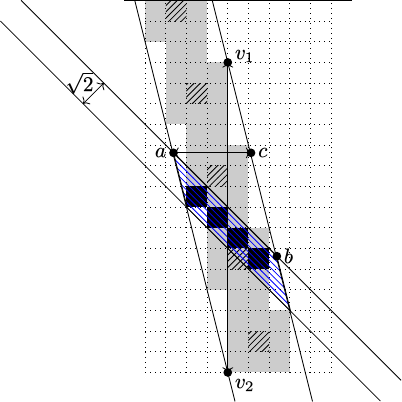}
\caption{Example region (shown in blue northwest hatching) used to bound the size of the diagonal set $\da_{J,\delta,\gamma,N}^W$.}\label{fig:area}
\end{figure}

\subsubsection{Good set in $\T^2$}
Finally, because we will discuss integration over different regions of the torus, define the subset of $\T^2=\R^2/\Z^2$ that is effectively the complement of the discontinuity set $B_{J,\delta,\gamma,N}$ rescaled to fit in $\T^2$,
\begin{align}
\mathcal{G}_{J,\delta,\gamma,N}:=\left\{(q,p)\in\T^2:\forall \ell\in\Z,\;\left|q-\frac{\ell}{2^J}\right|>\delta,\;p\in(\gamma,1-\gamma)\right\}.
\end{align}
This is taken from the definition of the set ``$\mathcal{D}_{n,\delta,\gamma}$'' in \cite{DNW}.
For all $k\in \intbrr{0:J-1}$, we have the inclusion $B^k\mathcal{G}_{J,\delta,\gamma,N}\subset\mathcal{G}_{J-k,2^k\delta,\gamma/2^k}$. There is also the immediate inclusion $\mathcal{G}_{J-k,2^k\delta,\gamma/2^k}\subseteq\mathcal{G}_{1,2^k\delta,\gamma/2^k}$.

\subsection{Outline for Theorems~\ref{thm:Pmat} and \ref{thm:lweyl}}\label{subsec:proof-outline}

In this section we outline the proof steps for Theorems~\ref{thm:Pmat} and \ref{thm:lweyl}, which are as follows.
\begin{align*}
\parbox{3.5cm}{Matrix elements of $\hat{B}_N^k$\\(Thm.~\ref{prop:mpowers})} \;\xrightarrow{\text{Prop.~\ref{prop:mp-sf}}}\; \parbox{2.7cm}{Matrix elements\\ of $P_{I(N)}$, $Q_{N,I(N)}$} \xrightarrow{\text{Prop.~\ref{thm:window}}}\;\parbox{2.7cm}{windowed local\\Weyl law}
\end{align*}
The proof of Theorem~\ref{thm:Pmat} is completed through Theorem~\ref{prop:mpowers} and Proposition~\ref{prop:mp-sf}, and the proof of Theorem~\ref{thm:lweyl} is complete after Proposition~\ref{thm:window}.

The first step is the following choices of parameters and theorem on the matrix elements of $\hat{B}_N^k$, which explains Figure~\ref{fig:Bpowers}. The proof is done using that $\hat{B}_N$ evolves coherent states (to be defined in Section~\ref{subsec:coh}) nicely according to the classical baker's map.
\begin{defn}\label{def:param}
Suppose $|I(N)|\log N\to\infty$ as $N\to\infty$,
and define parameters
\begin{align}\label{eqn:param2}
J&=(\log_2N)\varepsilon(N),\qquad \delta=10\sqrt{\frac{\log_2 N}{N}},\qquad {\gamma=N^{-1/3}},\qquad W=N^{\frac{1}{2}+2\varepsilon(N)},
\end{align}
where $\varepsilon(N)$ is any choice of function such that $\varepsilon(N)\to0$ at a slow enough rate that $|I(N)|J(N)=\varepsilon(N)|I(N)|\log_2 N\to\infty$. 
\end{defn}
Note that $2^J=N^{\varepsilon(N)}=\exp\left(\varepsilon(N)\log N\right)\to\infty$ as $N\to\infty$. We may think of $J$ and $W$ as integer-valued by taking floors or ceilings. What matters is just the growth rate as $N\to\infty$. We will implicitly assume that $N$ is large enough so e.g. $J\ge1$. 
Some examples of allowable $\varepsilon(N)$ are $\frac{1}{(|I(N)|\log N)^\alpha}$ for any $0<\alpha<1$, and $\frac{1}{\log(|I(N)|\log N)}$.
\begin{thm}[time evolution/matrix powers]\label{prop:mpowers}
For the parameters defined in \eqref{eqn:param2}, we know the matrix element sizes of $\hat{B}_N^k$ away from the discontinuity set $B_{J,\delta,\gamma,N}$. Roughly, they are small away from the classical graph, and may be up to size $C2^{-k/2}$ near the classical graph. More precisely, there is $r(N)\to0$ and a numerical constant $C$ so that
\begin{align}
|(\hat{B}_N^k)_{xy}|&\le2^{-k/2}r(N),\;\forall k\in\intbrr{1:J},(x,y)\not\in A_{J,\delta,\gamma,N}^W,\label{eqn:sbound}\\
|(\hat{B}_N^k)_{xy}|&\le C2^{-k/2},\;\forall k\in\intbrr{1:J},(x,y)\not\in B_{J,\delta,\gamma,N}.\label{eqn:cbound}
\end{align}
The decay rate $r(N)$ depends only on $N$ and the parameters in \eqref{eqn:param2}, and in particular it is uniform in $x,y,k$. 
With the choices \eqref{eqn:param2}, we have the explicit estimate $r(N)=\mathcal{O}\left(N^{-1/12}+\exp\left(-\frac{\pi}{2}N^{2\varepsilon(N)}\right)\right)$.
\end{thm}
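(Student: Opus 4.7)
The plan is to analyze $(\hat B_N^k)_{xy}=\langle x|\hat B_N^k|y\rangle$ by passing through coherent states on the torus. For $(q_0,p_0)\in\T^2$ let $|\phi_{q_0,p_0}\rangle\in\mathcal H_N$ denote the periodized Gaussian wavepacket, with position-space peak value of order $N^{-1/4}$ and widths $1/\sqrt N$ in both $q$ and $p$. Using the coherent-state resolution of identity $I_{\mathcal H_N}=N\int_{\T^2}|\phi_{q_0,p_0}\rangle\langle\phi_{q_0,p_0}|\,dq_0\,dp_0$, write
\[
(\hat B_N^k)_{xy}=N\int_{\T^2}\overline{\phi_{q_0,p_0}(y)}\,\langle x|\hat B_N^k|\phi_{q_0,p_0}\rangle\,dq_0\,dp_0.
\]
The factor $\overline{\phi_{q_0,p_0}(y)}$ is a Gaussian of width $1/\sqrt N$ in $q_0$ centered at $y/N$, with an oscillatory $p_0$ phase, so the task reduces to controlling $\hat B_N^k|\phi_{q_0,p_0}\rangle$.

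The key input is an Egorov-type propagation lemma: for $(q_0,p_0)$ whose forward orbit under the classical baker map stays $\delta$-away from the discontinuity lines $\{q=m2^{-\ell}\}$ and $\gamma$-away from the seams $\{p\in\{0,1\}\}$ for each $\ell\le k$ (i.e., in the good set $\mathcal G_{k,\delta,\gamma,N}$ of \S\ref{subsec:sets}), the state $\hat B_N^k|\phi_{q_0,p_0}\rangle$ is $L^2$-close, up to an explicit phase, to the squeezed coherent state $|\phi^{(k)}_{B^k(q_0,p_0)}\rangle$ centered at the classical image with widths $2^k/\sqrt N$ in position and $2^{-k}/\sqrt N$ in momentum. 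Each application of $\hat B_N=\hat F_N^{-1}(\hat F_{N/2}\oplus\hat F_{N/2})$ acts on a wavepacket lying in a single block essentially by the metaplectic representation of the classical stretch/compress $(q,p)\mapsto(2q,p/2)$; the $\delta,\gamma$ cutoffs in \eqref{eqn:param2} are calibrated so that Gaussian tails leaking across a block boundary or past the momentum seam contribute at most $O(N^{-c})$ in $L^2$ per step. The restriction $k\le J=\varepsilon(N)\log_2 N$ yields $2^k/\sqrt N\le N^{\varepsilon(N)-1/2}=o(1)$, so the evolved state is still a localized wavepacket; summing the per-step errors over $J$ iterations and accounting for the $\gamma=N^{-1/3}$ seam cutoff gives the $N^{-1/12}$ contribution in $r(N)$.

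Granted the propagation lemma, substitute into the integral representation. For $(x,y)\notin B_{J,\delta,\gamma,N}$ the cutoffs keep the $q_0$ integration inside the good region up to exponentially small tails, and the bounds $|\phi_{q_0,p_0}(y)|\lesssim N^{-1/4}$ and $|\langle x|\phi^{(k)}_{B^k(q_0,p_0)}\rangle|\lesssim N^{-1/4}2^{-k/2}$ combined with the effective Gaussian $q_0$-window of length $O(N^{-1/2})$ and $p_0\in[0,1]$ give
\[
|(\hat B_N^k)_{xy}|\lesssim N\cdot N^{-1/4}\cdot N^{-1/4}2^{-k/2}\cdot N^{-1/2}\cdot 1=C\,2^{-k/2},
\]
which is \eqref{eqn:cbound}. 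For the improvement \eqref{eqn:sbound}, note that $\langle x|\phi^{(k)}_{B^k(q_0,p_0)}\rangle$, viewed as a function of the integer $x$, is a Gaussian centered at $2^k y\bmod N$ of width $2^k\sqrt N$. If $(x,y)\notin C_{k,N}^W$ then $d_{\Z/N\Z}(x,2^ky)>W=N^{1/2+2\varepsilon(N)}$, so the ratio of distance to Gaussian width is at least $W/(2^k\sqrt N)\ge N^{\varepsilon(N)}$, giving an additional suppression of order $\exp(-\tfrac{\pi}{2}N^{2\varepsilon(N)})$. Combined with the baseline $2^{-k/2}$ and the Egorov error, this produces the claimed explicit rate $r(N)=O(N^{-1/12}+\exp(-\tfrac{\pi}{2}N^{2\varepsilon(N)}))$, uniformly in $k\le J$ and in $(x,y)\notin A_{J,\delta,\gamma,N}^W$.

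The main obstacle is the coherent-state propagation lemma itself: one must carefully track the Gaussian wavepacket through $J$ iterations of $\hat B_N$, verifying at each step that the wavepacket sits essentially inside one block of $\hat F_{N/2}\oplus\hat F_{N/2}$, controlling the $L^2$-leakage into the wrong block, and accounting for the squeezing, overall phases from the metaplectic action, and the periodization. The choice $J=\varepsilon(N)\log_2 N$ strictly below the Ehrenfest time $\tfrac12\log_2 N$, together with $\delta=10\sqrt{\log_2 N/N}$ and $\gamma=N^{-1/3}$, is the quantitative balance that makes the per-step $L^2$ error, the discontinuity-tail error, and the off-classical Gaussian decay all converge to zero at the stated rates. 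Everything else (the coherent-state integral representation and the Gaussian estimates used to extract the pointwise bounds) is routine once the propagation lemma is in hand.
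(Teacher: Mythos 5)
Your proposal is correct and follows essentially the same route as the paper's proof: the coherent-state resolution of identity, propagation of squeezed coherent states under $\hat{B}_N^k$ for $k\le J$, separate treatment of the discontinuity region (whose momentum strips of height $\gamma=N^{-1/3}$ are the source of the $N^{-1/12}$ term), and Gaussian overlap estimates giving $C2^{-k/2}$ in general and the extra $\exp(-\tfrac{\pi}{2}N^{2\varepsilon(N)})$ suppression off the classical set $C_{k,N}^W$. The propagation lemma you single out as the main obstacle is exactly the one-step coherent-state evolution proved in \cite{DNW} (Proposition 5 there), which the paper simply iterates by a telescoping sum (Proposition~\ref{prop:iter5}), so no genuinely new analysis is required at that step.
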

The proof will be given in Section~\ref{sec:Bpowers-proof}.

Next, we use the control on the matrix elements of $\hat{B}_N^k$ up to time $J$ to determine values of the spectral projection matrix $P_I=\sum_{j:\theta^{(j,N)}\in I}|\varphi^{(j,N)}\rangle\langle\varphi^{(j,N)}|$ on intervals $I$. We state it first for general $I_N$ and $J_N$ (not assuming the parameter choices in \eqref{eqn:param2}).
\begin{prop}[matrix powers to spectral function]\label{prop:mp-sf}
Let $N\in2\N$. Suppose for some $J_N\ge1$, $r_N>0$, and $S\subset\intbrr{0:N-1}^2$, that
\begin{align}\label{eqn:Bentry-bound}
|(\hat{B}_N^k)_{xy}|&\le 2^{-k/2}r_N,\;\forall k\in\intbrr{1:J_N},(x,y)\in S.
\end{align}
Then for any interval $I_N\subseteq\R/(2\pi\Z)$, 
\begin{align}\label{eqn:diag-stronger}
\left|(P_{I_N})_{xx}-\frac{|I_N|}{2\pi}\right|
&\le\frac{|I_N|}{2\pi}\left[\frac{2\pi}{|I_N|J_N}+\Big(1+\frac{2\pi}{|I_N|J_N}\Big)2r_N\right],\quad\text{ for }(x,x)\in S,
\intertext{and}
\label{eqn:offdiag-stronger}
\left|(P_{I_N})_{xy}\right| &\le \frac{|I_N|}{2\pi}\left[ \frac{4\pi}{|I_N|J_N}+\Big(1+\frac{2\pi}{|I_N|J_N}\Big)6r_N\right],
\end{align}
for $x\ne y$ such that $(x,y),(y,x),(x,x),(y,y)\in S$.

Under the setting of Theorem~\ref{prop:mpowers} and Definition~\ref{def:param}, 
the above estimates imply,
\begin{align}
(P_{I(N)})_{xx}&=\frac{|I(N)|}{2\pi}(1+\mathcal{O}(\mathcal{R}_\mathrm{d}(N))),\quad\text{ for }x\not\in\da_{J,\delta,\gamma,N}^W,\\
(P_{I(N)})_{xy}&=\mathcal{O}(\mathcal R_{\mathrm{od}}(N))|I(N)|,\quad\text{ for } x\ne y,\;(x,y)\not\in\tilde{A}_{J,\delta,\gamma,N}^W,
\end{align}
for decay rates ($\mathcal{R}_\mathrm{d}(N),\mathcal{R}_\mathrm{od}(N)\to0$),
\begin{align}\label{eqn:decayrates}
\mathcal{R}_\mathrm{d}(N)=\mathcal{R}_\mathrm{od}(N)&=\frac{1}{|I(N)|J}+r_N,
\end{align}
and where
\begin{align*}
\tilde{A}_{J,\delta,\gamma,N}^W:=\{(x,y):(x,y)\in A_{J,\delta,\gamma,N}^W\text{ or }(y,x)\in A_{J,\delta,\gamma,N}^W,\text{ or }x\in\da_{J,\delta,\gamma,N}^W\text{ or }y\in\da_{J,\delta,\gamma,N}^W\}
\end{align*}
as defined in \eqref{eqn:tilde-A}.
\end{prop}

The proof is given in Section~\ref{sec:p-sf-proof}. For an idea of the size of remainders $\mathcal{R}_\mathrm{d}(N)$ and $\mathcal{R}_{\mathrm{od}}(N)$, note that if $r_N=\mathcal{O}(N^{-1/12}+\exp(-\frac{\pi}{2}N^{2\varepsilon(N)})$ as in the Theorem~\ref{prop:mpowers} with the choices \eqref{eqn:param2}, then $r_N\ll \frac{1}{|I(N)|J}$, 
and the decay rates in \eqref{eqn:decayrates} are of order $\frac{1}{|I(N)|J}$, which decays but at a rate slower than $1/\log N$.

At this point, once the above matrix entry bounds for $\hat{B}_N^k$ and $P_{I(N)}$ are proved (which is done in Sections~\ref{sec:Bpowers-proof} and \ref{sec:p-sf-proof} respectively), then Theorem~\ref{thm:Pmat} is proved. The extension to $q_N$ in Theorem~\ref{thm:Qmat} is given in Appendix~\ref{sec:qN}.

The above proposition implies we know the matrix entries of $P_{I(N)}$ fairly well; on the diagonal, they are generally around $\frac{|I(N)|}{2\pi}$, while on the off-diagonal they are $o(|I(N)|)$ except possibly in the set $\tilde{A}_{J,\delta,\gamma,N}^W$. This is illustrated in Figure~\ref{fig:P}, where the projection matrix entries for $N=1000$ are plotted (in absolute value) for the interval $I=[2.1,3]$. The large off-diagonal entries of $P_{[2.1,3]}$ lie near the lines described by $x\approx 2^ky\mod N$ or $y\approx 2^kx\mod N$ for small $k$, which are the forward and backward iterates of the classical baker map in position space. Such points must be contained in the excluded region $\tilde{A}_{J,\delta,\gamma,N}^W$.

\begin{figure}[!ht]
\includegraphics[height=2.3in]{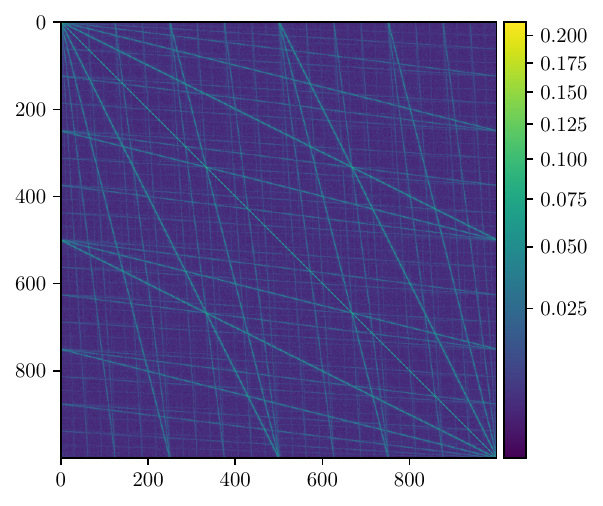}\;
\includegraphics[height=2.3in]{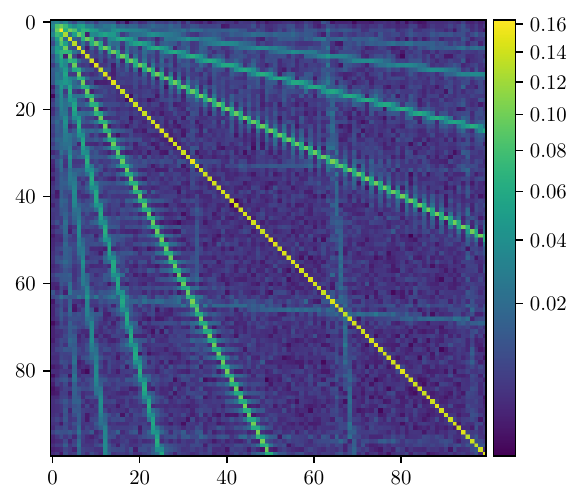}
\caption{The absolute value of the matrix elements of the projection matrix $P_{[2.1,3]}$ for $N=1000$, plotted on a (nonlinear) power scale. The left image shows the entire matrix $P_{[2.1,3]}$, while the right image is zoomed in to show the top left corner containing matrix entries $(x,y)$ with $x,y<100$. Most of the diagonal entries are generally close to $\frac{|I(N)|}{2\pi}=0.143\ldots$,
and the large off-diagonal entries visually appear to follow the shape of the set $\tilde{A}_{J,\delta,\gamma,N}^W$; outside this set the entries appear to be typically small, reflecting Proposition~\ref{prop:mp-sf}.}\label{fig:P}
\end{figure}

Finally, knowing the entries of the projection matrix $P_{I(N)}$ as in Proposition~\ref{prop:mp-sf} allows us to prove a windowed local Weyl law, which will give Theorem~\ref{thm:lweyl}.

\begin{prop}[spectral projection to windowed local Weyl law]\label{thm:window}
\begin{sloppypar}
Define parameters as in Definition~\ref{def:param}, and let $(e^{i\theta^{(j,N)}},\varphi^{(j,N)})_j$ be the eigenvalue-eigenvector pairs for $\hat{B}_N$, and $P_{I(N)}=\sum_{\theta^{(j,N)}\in I(N)}|\varphi^{(j,N)}\rangle\langle\varphi^{(j,N)}|$.
Suppose that with the decay rates defined in \eqref{eqn:decayrates} and $r(N)$ as taken in the end of Theorem~\ref{prop:mpowers}, 
\begin{align}
(P_{I(N)})_{xx}&=\frac{|I(N)|}{2\pi}(1+\mathcal{O}(\mathcal{R}_\mathrm{d}(N))),\quad\text{ for }x\not\in\da_{J,\delta,\gamma,N}^W,\\
(P_{I(N)})_{xy}&=|I(N)|\mathcal{O}(\mathcal{R}_\mathrm{od}(N)),\quad\text{ for } x\ne y,\;(x,y)\not\in\tilde{A}_{J,\delta,\gamma,N}^W.
\end{align}
Then  for any classical observable $f\in C^\infty(\T^2)$,
\begin{align}\label{eqn:lweyl}
\frac{2\pi}{N|I(N)|}\sum_{\theta^{(j,N)}\in I(N)}\langle\varphi^{(j,N)}|\operatorname{Op}_N^\W(f)|\varphi^{(j,N)}\rangle &= \int_{\T^2}f(q,p)\,dq\,dp +\mathcal{O}\left((1+\|f\|_{C^3})\frac{1}{|I(N)|J}\right).
\end{align}
\end{sloppypar}
\end{prop}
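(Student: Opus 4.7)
My plan is to expand the Weyl-quantized observable in the Weyl--Heisenberg basis and reduce the problem to controlling traces of $P_{I(N)}$ against phase-space translations. Writing $\operatorname{Op}_N^\W(f)=\sum_{k\in\Z^2}\tilde f(k)T(k)$ via Definition~\ref{def:weyl}, the left-hand side of \eqref{eqn:lweyl} becomes
\[
\frac{2\pi}{N|I(N)|}\operatorname{Tr}\bigl(P_{I(N)}\operatorname{Op}_N^\W(f)\bigr)=\frac{2\pi}{N|I(N)|}\sum_{k\in\Z^2}\tilde f(k)\operatorname{Tr}\bigl(P_{I(N)}T(k)\bigr).
\]
The $k=0$ summand should produce the main term $\int_{\T^2}f$, and the $k\ne 0$ summands should be absorbed into the $\mathcal{O}((1+\|f\|_{C^3})/(|I(N)|J))$ error.

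For the $k=0$ term, $\tilde f(0)=\int f$ and $\operatorname{Tr}(P_{I(N)})$ is a sum of diagonal entries. I would split the diagonal sum into $x\notin\da_{J,\delta,\gamma,N}^W$ (where the hypothesis gives $(P_{I(N)})_{xx}=\frac{|I(N)|}{2\pi}(1+\mathcal O(\mathcal R_\mathrm d(N)))$) and $x\in\da_{J,\delta,\gamma,N}^W$ (where I use only the trivial bound $|(P_{I(N)})_{xx}|\le 1$). Under the parameter choices in Definition~\ref{def:param}, the estimate \eqref{eqn:da} gives $\#\da_{J,\delta,\gamma,N}^W=\mathcal O(N^{2/3})$, which is $o(N|I(N)|/J)$ because $|I(N)|\log N\to\infty$. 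Combined with $\mathcal R_\mathrm d(N)=\mathcal O(1/(|I(N)|J)+r(N))$, this produces $(2\pi/N|I(N)|)\operatorname{Tr}(P_{I(N)})=1+\mathcal O(1/(|I(N)|J))$, recovering the main term with the claimed error.

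For the $k\ne 0$ terms I would split by magnitude. For large $|k|>J$, the Fourier decay $|\tilde f(k)|\le C_3\|f\|_{C^3}/|k|^3$ from Lemma~\ref{lem:useful}(iii) combined with the elementary bound $|\operatorname{Tr}(P_{I(N)}T(k))|\le\operatorname{Tr}(P_{I(N)})=\mathcal O(N|I(N)|)$ (from unitary invariance of $|PU|$) gives total contribution $\mathcal O(\|f\|_{C^3}/(|I(N)|J))$ after dividing by $N|I(N)|/(2\pi)$. For small $0<|k|\le J$, I would use the explicit formula \eqref{eqn:T(k)}, which shows $\operatorname{Tr}(P_{I(N)}T(k))$ is a phased sum of entries of $P_{I(N)}$ along the $k_1$-shifted diagonal. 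When $k_1\equiv 0\pmod N$ (so only $k=(0,k_2)$ in this range), the trace reduces to $\sum_x(P_{I(N)})_{xx}e^{2\pi ik_2x/N}$; the constant $|I(N)|/(2\pi)$ part of each diagonal entry Fourier-cancels, leaving a remainder controlled by $\mathcal R_\mathrm d(N)N|I(N)|+\#\da_{J,\delta,\gamma,N}^W$. When $k_1\not\equiv 0\pmod N$, the trace is bounded by $\sum_x|(P_{I(N)})_{x+k_1\bmod N,x}|$, which I split into pairs outside $\tilde A_{J,\delta,\gamma,N}^W$ (contributing $\mathcal O(N\mathcal R_\mathrm{od}(N)|I(N)|)$ by the off-diagonal hypothesis) and pairs inside, where I use the trivial bound $|P_{xy}|\le 1$.

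The main obstacle is obtaining a uniform-in-$k_1$ bound on $\#\{x:(x+k_1\bmod N,x)\in\tilde A_{J,\delta,\gamma,N}^W\}$ small enough to absorb into the stated error. Exceptions coming from the $\da$-type part of $\tilde A_{J,\delta,\gamma,N}^W$ are directly bounded by $2\#\da_{J,\delta,\gamma,N}^W$ via \eqref{eqn:da}, while exceptions from $B_{J,\delta,\gamma,N}$ and from the classical sets $\bigcup_{k\le J}C_{k,N}^W$ require analyzing the congruences $(2^k-1)x\equiv -k_1\pmod N$ with slack $W$. The worst case occurs when $\gcd(2^k-1,N)$ is large, but since $k\le J=\varepsilon(N)\log_2N$ is short and $W$ stays polynomially small, summing these counts over $k\le J$ gives a uniform bound of order $2^JW+2^J\delta N+\gamma N=o(N|I(N)|/J)$. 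Once this uniform diagonal bound is established, summing over $|k|\le J$ with $\sum_k|\tilde f(k)|\le C\|f\|_{C^3}$ (via Lemma~\ref{lem:useful}(iii) for $C^3$ symbols) yields the desired error bound $\mathcal O((1+\|f\|_{C^3})/(|I(N)|J))$.
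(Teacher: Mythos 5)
Your proposal is correct, but it is organized differently from the paper's proof, so a comparison is worthwhile. The paper stays in the position basis: it writes the windowed average as $\sum_{x}\langle x|\operatorname{Op}_N^\W(f)|x\rangle(P_{I(N)})_{xx}+\sum_{x\ne y}\langle y|\operatorname{Op}_N^\W(f)|x\rangle(P_{I(N)})_{xy}$, proves the kernel decay $|\langle y|\operatorname{Op}_N^\W(f)|x\rangle|\le C_M\|f\|_{C^M}\bigl(|[y-x]_N|^{-(M-1)}+N^{-(M-1)}\bigr)$, discards everything at distance more than $V=\log N$ from the diagonal, and then counts $\#\{(x,y)\in\tilde{A}_{J,\delta,\gamma,N}^W:|[x-y]_N|\le V\}$ by the same geometric (area) argument used for $\#\da_{J,\delta,\gamma,N}^W$, picking up a factor $V$. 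You instead decompose the observable as $\sum_k\tilde f(k)T(k)$ and reduce to the shifted-diagonal traces $\operatorname{Tr}(P_{I(N)}T(k))$, handling $|k|>J$ by the trivial bound $|\operatorname{Tr}(P_{I(N)}T(k))|\le\operatorname{Tr}P_{I(N)}$ together with $\sum_{|k|>J}\|k\|^{-3}=\mathcal{O}(1/J)$, and $0<|k|\le J$ by the hypotheses on $P_{I(N)}$ plus a uniform-in-$k_1$ count of exceptional points on each shifted diagonal; the counting becomes arithmetic rather than geometric, via the congruences $(2^j-1)x\equiv c\pmod N$ with slack $W$, where the key (worth stating explicitly) is simply $\gcd(2^j-1,N)\le 2^j-1\le 2^J$, so each $C_{j,N}^W$ meets a fixed shifted diagonal in at most $(2W+1)2^j$ points and the total is $\mathcal{O}(2^JW+2^J\delta N+\gamma N)=\mathcal{O}(N^{2/3})\ll N/J$, exactly as you need. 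Your route buys a cleaner separation between the observable (summable Fourier weights, no $\log N$ band factor) and the projection estimates, at the cost of the congruence analysis; the paper's route avoids arithmetic entirely and reuses the geometric picture of $\tilde{A}_{J,\delta,\gamma,N}^W$ already drawn for the other results. Two cosmetic points: the bound $|\operatorname{Tr}(P_{I(N)}T(k))|\le\operatorname{Tr}P_{I(N)}$ is best justified by expanding in an orthonormal basis of the range of $P_{I(N)}$ (each term has modulus at most $1$), not by ``unitary invariance''; and your large-$|k|$ contribution is $\mathcal{O}(\|f\|_{C^3}/J)$, which fits the stated error only because $|I(N)|\le 2\pi$ — worth a sentence in a written-up version.
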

The proof is given in Section~\ref{sec:lweyl-proof}, and the extension \eqref{eqn:lweyl-q} to $q_N$ is given in Appendix~\ref{sec:qN}.
\begin{rmk}
The proposition does not have anything to do with $\varphi^{(j,N)}$ being eigenvectors; it just needs that they are orthonormal vectors and that the projection matrix $P_{I(N)}$ is the orthogonal projection onto their span and satisfies the stated matrix entry assumptions.
\end{rmk}

\section{Time evolution of \texorpdfstring{$\hat{B}_N$}{B_N}, Proof of Theorem~\ref{prop:mpowers}}
\label{sec:Bpowers-proof}

In this section we use coherent states evolution and integral estimates to estimate the matrix entries $|\langle x|\hat{B}_N^k|y\rangle|$. Since we assume that $(x,y)\not\in B_{J,\delta,\gamma,N}$, we will be able to use the action of $\hat{B}_N^k$ on coherent states for regions away from the discontinuity regions, and use the location of $(x,y)$ to bound the terms coming from the discontinuity regions. In what follows we will eventually take $\sigma=1$, but may leave it in place for generality and agreement with notation in \cite{DNW}.

We note that coherent state evolution is more precise than what we need; for the position basis matrix elements, we do not need the momentum information in the end, and the quantum graph models from \cite{pzk} also satisfy this matrix power property (even in a stronger sense \cite{pw}) despite not having the correct coherent state evolution \cite{TracyScott}. However, because the Fourier transform acts conveniently on coherent states, and the matrices $\hat{B}_N$ are built from DFT matrices, using coherent state evolution is particularly convenient.

\subsection{Coherent states}\label{subsec:coh}
The $\R^2$ coherent state centered at the origin with squeezing parameter $\sigma>0$ is the Gaussian wavepacket
$
\Psi_{0,\sigma}(q):=\left(\frac{\sigma}{\pi\hbar}\right)^{1/4}e^{-\frac{\sigma q^2}{2\hbar}}.$
The coherent state $\Psi_{\x,\sigma}$ at a point $\x=(q_0,p_0)\in\R^2$ is obtained by phase space translation, 
\begin{align*}
\Psi_{\x,\sigma}(q):=(U(q_0,p_0)\Psi_{0,\sigma})(q) &= \left(\frac{\sigma}{\pi\hbar}\right)^{1/4}e^{-i\frac{p_0q_0}{2\hbar}}e^{i\frac{p_0q}{\hbar}}e^{-\frac{\sigma(q-q_0)^2}{2\hbar}}\\
&=(2N\sigma)^{1/4}e^{-i\pi Nq_0p_0}e^{2\pi i Np_0q}e^{-\sigma N\pi(q-q_0)^2}.\numberthis\label{eqn:coh-state}
\end{align*}
To project to the torus $\T^2$, one can first periodicize $\Psi_{\x,\sigma}$ to make the ``cylinder'' coherent state, $\Psi_{\x,\sigma,\mathcal{C}}(q):=\sum_{z\in\Z}\Psi_{\x,\sigma}(q+z)$, and then construct the torus coherent state as
\begin{align}
\Psi_{\x,\sigma,\T^2}(j):=\frac{1}{\sqrt{N}}\Psi_{\x,\sigma,\mathcal{C}}\Big(\frac{j}{N}\Big),\quad j\in\intbrr{0:N-1}.
\end{align}
The torus coherent states form an overdetermined system; there is the resolution of the identity (\cite[Lemma 3.8]{bdb}),
\begin{align}
\operatorname{Id}_{\mathcal{H}_N} &= N\int_{\T^2}|\Psi_{\x,\sigma,\T^2}\rangle\langle\Psi_{\x,\sigma,\T^2}|\,dq\,dp.
\end{align} 
The evolution of coherent states under $\hat{B}_N$ was proved in \cite{DNW}. The main two results we will need from there are as follows.
This first lemma allows us to approximate coherent states on the torus by coherent states on $\R^2$, which have the simple Gaussian formula \eqref{eqn:coh-state}. 
\begin{lem}[\cite{DNW}, Lemma 3]\label{lem:3}
Let $\x=(q_0,p_0)$ with $q_0\in(\delta,1-\delta)$ for some $0<\delta<1/2$. Then as $N\to\infty$,
\begin{equation}
\forall q\in[0,1),\quad \Psi_{\x,\sigma,\mathcal{C}}(q)=\Psi_{\x,\sigma}(q)+\mathcal{O}((\sigma N)^{1/4}e^{-\pi N\sigma\delta^2}),
\end{equation}
with the error estimate uniform for $\sigma N\ge1$.
\end{lem}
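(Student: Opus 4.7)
The plan is to estimate the difference $\Psi_{\x,\sigma,\mathcal{C}}(q)-\Psi_{\x,\sigma}(q)$ directly, exploiting the Gaussian decay of $\Psi_{\x,\sigma}$ away from its center $q_0$. Since the cylinder state is by definition the periodization $\sum_{z\in\Z}\Psi_{\x,\sigma}(q+z)$, the difference is just the tail
\[
\Psi_{\x,\sigma,\mathcal{C}}(q)-\Psi_{\x,\sigma}(q)=\sum_{z\in\Z\setminus\{0\}}\Psi_{\x,\sigma}(q+z),
\]
and the entire problem reduces to controlling this tail uniformly in $q\in[0,1)$.

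First, I would record the pointwise absolute value of each summand using the explicit formula \eqref{eqn:coh-state}. The oscillating phases $e^{-i\pi N q_0 p_0}$ and $e^{2\pi i N p_0(q+z)}$ have modulus one, so
\[
|\Psi_{\x,\sigma}(q+z)|=(2N\sigma)^{1/4}\,e^{-\pi\sigma N(q+z-q_0)^2}.
\]
Next I would convert the hypotheses $q_0\in(\delta,1-\delta)$ and $q\in[0,1)$ into a uniform lower bound $|q+z-q_0|\ge |z|-1+\delta$ for every $z\in\Z\setminus\{0\}$. For $z\ge 1$ this is $q+z-q_0\ge 0+z-(1-\delta)$; for $z\le-1$ it follows by the symmetric estimate $q_0-q-z\ge\delta+|z|-1$. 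Feeding this into the tail, with the substitution $m=|z|-1$,
\[
\sum_{z\ne 0}|\Psi_{\x,\sigma}(q+z)|\le 2(2N\sigma)^{1/4}\sum_{m=0}^\infty e^{-\pi\sigma N(m+\delta)^2}.
\]

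Finally, under the assumption $\sigma N\ge 1$, the ratio of consecutive terms in the last sum is at most $e^{-\pi\sigma N(1+2\delta)}\le e^{-\pi}$, so the series is dominated by its leading term:
\[
\sum_{m=0}^\infty e^{-\pi\sigma N(m+\delta)^2}\le \frac{e^{-\pi\sigma N\delta^2}}{1-e^{-\pi}}.
\]
Combining with the prefactor $(2N\sigma)^{1/4}$ yields the claimed bound $\mathcal{O}((\sigma N)^{1/4}e^{-\pi\sigma N\delta^2})$, uniformly in $q\in[0,1)$ and $\sigma N\ge 1$.

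I do not anticipate any real obstacle here; the estimate is a clean Gaussian tail computation. The only steps that need care are packaging the lower bound $|q+z-q_0|\ge|z|-1+\delta$ correctly from the two symmetric cases, and verifying that the geometric decay constant is uniform in $\delta\in(0,1/2)$ — both of which are handled by the crude bound $e^{-\pi\sigma N(1+2\delta)}\le e^{-\pi}$ above.
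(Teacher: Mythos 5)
Your argument is correct: the periodization tail $\sum_{z\ne0}\Psi_{\x,\sigma}(q+z)$, the uniform lower bound $|q+z-q_0|\ge|z|-1+\delta$ coming from $q\in[0,1)$ and $q_0\in(\delta,1-\delta)$, and the geometric domination of $\sum_{m\ge0}e^{-\pi\sigma N(m+\delta)^2}$ by its first term under $\sigma N\ge1$ together give exactly the stated $\mathcal{O}((\sigma N)^{1/4}e^{-\pi N\sigma\delta^2})$ bound, uniformly in $q$, $\delta$, and $\sigma N\ge1$. The paper itself does not prove this lemma but imports it from \cite{DNW} (Lemma 3); your Gaussian-tail computation is the standard argument behind it and matches the way the paper immediately exploits the same decomposition (the $z=0,\pm1$ and $|z|\ge2$ split leading to \eqref{eqn:lem3bound}), so there is nothing to flag.
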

By writing $\Psi_{\x,\sigma,\mathcal{C}}(q)=\Psi_{x,\sigma}(q)+\Psi_{x,\sigma}(q+1)+\Psi_{x,\sigma}(q-1)+\sum_{|z|=2}^\infty\Psi_{\x,\sigma}(q+z)$, we can also check that there is a numerical constant $C$ so that for $y\in\intbrr{0:N-1}$,
\begin{align}\label{eqn:lem3bound}
|\langle y|\Psi_{\x,\sigma,\T^2}\rangle|\le C\left(\frac{\sigma}{N}\right)^{1/4}e^{-\pi N\sigma d_{\R/\Z}(\frac{y}{N},q_0)^2},
\end{align}
where $\x=(q_0,p_0)$ and $d_{\R/\Z}(a,b)=\min_{\nu\in\Z}|a-b+\nu|$.

By iterating the coherent state evolution \cite[Prop. 5]{DNW}, we obtain the behavior of powers $\hat{B}_N^k$ on coherent states.
\begin{prop}\label{prop:iter5}
Let $\delta,\gamma\in(0,1)$, $\sigma\in[\frac{1}{N},N]$, and $2^j\le\sqrt{N\sigma}$.
For $\x\in\mathcal{G}_{j,\delta,\gamma,N}$, any $k\in\intbrr{1:j}$, and any $m\in\intbrr{0:N-1}$, we have the norm bound
\begin{align}\label{eqn:tel}
\big\|\hat{B}_N^k\Psi_{\x,\sigma,\T^2}-e^{iN\pi\Theta_k(\x)} \Psi_{B^k\x,\sigma/4^k,\T^2}\big\| &= \mathcal{O}(N^{3/4}\sigma^{1/4}\exp(-\pi N\theta)),
\end{align}
where $\Theta_k(\x)=\sum_{\ell=0}^{k-1}\Theta(B^\ell\x)$ with $\Theta(q_0,p_0)=\begin{cases}0,&q_0\in(\delta,\frac{1}{2}-\delta)\\q_0+\frac{p_0+1}{2},&q_0\in(\frac{1}{2}+\delta,1-\delta)\end{cases}$, and  $\theta=\min(\sigma\delta^2,\gamma^2/\sigma)$. The implied constant is uniform in the allowed $\delta,\gamma,\sigma$, in $k$, and in $\x\in\mathcal{G}_{j,\delta,\gamma,N}$.
\end{prop}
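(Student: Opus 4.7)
The plan is to prove this by induction on $k$, using the single-step coherent state evolution \cite[Prop. 5]{DNW} as the engine, together with the forward-invariance property of the good sets $B^\ell \mathcal{G}_{j,\delta,\gamma,N} \subset \mathcal{G}_{j-\ell,2^\ell\delta,\gamma/2^\ell}$ recorded in Section~\ref{subsec:sets}. The base case $k=1$ is \cite[Prop. 5]{DNW} itself. For the inductive step, I split the next error using the cocycle identity $\Theta_{k+1}(\x) = \Theta_k(\x) + \Theta(B^k\x)$ to write
\begin{align*}
\hat{B}_N^{k+1}\Psi_{\x,\sigma,\T^2} &- e^{iN\pi\Theta_{k+1}(\x)}\Psi_{B^{k+1}\x,\sigma/4^{k+1},\T^2}\\
&= \hat{B}_N\bigl(\hat{B}_N^k\Psi_{\x,\sigma,\T^2} - e^{iN\pi\Theta_k(\x)}\Psi_{B^k\x,\sigma/4^k,\T^2}\bigr)\\
&\quad + e^{iN\pi\Theta_k(\x)}\bigl(\hat{B}_N\Psi_{B^k\x,\sigma/4^k,\T^2} - e^{iN\pi\Theta(B^k\x)}\Psi_{B^{k+1}\x,\sigma/4^{k+1},\T^2}\bigr).
\end{align*}
Since $\hat{B}_N$ is unitary and the phase factors have unit modulus, taking norms gives a recursion $\epsilon_{k+1} \le \epsilon_k + \|e_k\|$, where $\epsilon_k$ is the quantity to be estimated at step $k$ and $\|e_k\|$ is the one-step error supplied by \cite[Prop. 5]{DNW} applied to the coherent state centered at $B^k\x$ with squeezing $\sigma/4^k$.

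To feed the one-step result at step $\ell$, I set $\sigma_\ell = \sigma/4^\ell$, $\delta_\ell = 2^\ell\delta$, $\gamma_\ell = \gamma/2^\ell$ and verify two compatibility conditions. First, the hypothesis $2^j \le \sqrt{N\sigma}$ is precisely what is needed to ensure $\sigma_\ell \ge 1/N$ for all $\ell \le j$, so the squeezing remains in the allowed range $[1/N,N]$. Second, the inclusion $B^\ell\x \in \mathcal{G}_{j-\ell,\delta_\ell,\gamma_\ell}$ (for $\ell \le j-1$) follows from the forward-invariance of the good sets, and $\mathcal{G}_{j-\ell,\delta_\ell,\gamma_\ell} \subseteq \mathcal{G}_{1,\delta_\ell,\gamma_\ell}$ places us in the single-step setting. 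The crucial observation is that the exponent $\theta$ is invariant under this rescaling:
\[
\min(\sigma_\ell\delta_\ell^2,\,\gamma_\ell^2/\sigma_\ell) \;=\; \min(\sigma\delta^2,\,\gamma^2/\sigma) \;=\; \theta,
\]
so the single-step bound at step $\ell$ reads $\|e_\ell\| = \mathcal{O}(N^{3/4}\sigma_\ell^{1/4}e^{-\pi N\theta}) = \mathcal{O}(N^{3/4}\sigma^{1/4}\cdot 4^{-\ell/4}e^{-\pi N\theta})$, with the implicit constant uniform in all the allowed parameters.

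Unwinding the recursion yields $\epsilon_k \le C\,N^{3/4}\sigma^{1/4}e^{-\pi N\theta}\sum_{\ell=0}^{k-1}4^{-\ell/4}$, and the geometric series is bounded by an absolute constant independent of $k$, giving the claimed estimate $\mathcal{O}(N^{3/4}\sigma^{1/4}e^{-\pi N\theta})$ with an implied constant uniform in $k$, $\x \in \mathcal{G}_{j,\delta,\gamma,N}$, and in the allowed $(\delta,\gamma,\sigma)$. I expect the main obstacle to be purely bookkeeping rather than analytic: verifying that the chain of good-set conditions on $B^\ell\x$ and the parameter constraint $\sigma_\ell \in [1/N,N]$ survive all $k \le j$ iterations, and that the invariance of $\theta$ under the $(\sigma_\ell,\delta_\ell,\gamma_\ell)$ rescaling truly holds at every step. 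Both facts are ultimately forced by the single quantitative hypothesis $2^j \le \sqrt{N\sigma}$ together with the stated forward-invariance property of $\mathcal{G}$, which is what makes the geometric summation possible and prevents accumulation of the exponential factor.
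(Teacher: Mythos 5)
Your proposal is correct and is essentially the paper's own argument: the paper writes the difference as a telescoping sum (equivalent to unwinding your recursion), applies \cite[Prop.~5]{DNW} at each step to the coherent state centered at $B^\ell\x$ with squeezing $\sigma/4^\ell$ using the inclusion $B^\ell\x\in\mathcal{G}_{j-\ell,2^\ell\delta,\gamma/2^\ell,N}\subseteq\mathcal{G}_{1,2^\ell\delta,\gamma/2^\ell,N}$ and unitarity of $\hat{B}_N$, and notes exactly as you do that $\theta$ is invariant under the rescaling so that the factors $\sigma_\ell^{1/4}=\sigma^{1/4}2^{-\ell/2}$ give a summable geometric series.
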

\begin{proof}
\cite[Proposition 5]{DNW} gives for $\x\in\mathcal{G}_{1,\delta,\gamma,N}$ and $\sigma\in[1/N,N]$, that
\begin{align}\label{eqn:prop5}
\big\|\hat{B}_N\Psi_{\x,\sigma,\T^2} -e^{iN\pi\Theta(\x)}\Psi_{B\x,\sigma/4,\T^2}\big\|=\mathcal{O}(N^{3/4}\sigma^{1/4}\exp(-\pi N\theta)),
\end{align}
with implied constant uniform in $\x\in\mathcal{G}_{1,\delta,\gamma,N}$ and in the allowed $\delta,\gamma$ and $\sigma\in[1/N,N]$. 
Writing the left side of \eqref{eqn:tel} as a telescoping sum, we obtain the bound
\begin{multline*}
\bigg\|\sum_{\ell=0}^{k-1}e^{i\sum_{i=0}^{\ell-1}\Theta(B^i\x)}\hat{B}_N^{k-\ell-1}\left(\hat{B}_N\Psi_{B^\ell\x,\frac{\sigma}{4^\ell},\T^2}-e^{i\pi N\Theta(B^\ell\x)}\Psi_{B^{\ell+1}\x,\frac{\sigma}{4^{\ell+1}},\T^2}\right)\bigg\| \\
\le \sum_{\ell=0}^{k-1}\frac{1}{2^{\ell/2}}\mathcal{O}(N^{3/4}\sigma^{1/4}\exp(-\pi N\theta))
=\mathcal{O}(N^{3/4}\sigma^{1/4}\exp(-\pi N\theta)).
\end{multline*}
In the above we used $\mathbf{y}=B^\ell\x\in\mathcal{G}_{j-\ell,2^\ell\delta,\gamma/2^\ell,N}\subseteq\mathcal{G}_{1,2^\ell\delta,\gamma/2^\ell,N}$ for $\x\in\mathcal{G}_{j,\delta,\gamma,N}$ and $\ell\in\intbrr{0:k-1}$, and also that the parameter $\theta$ ends up being independent of the time evolution.
\end{proof}

\subsection{Proof of Theorem~\ref{prop:mpowers}}

Using the resolution of the identity
$
\operatorname{Id}_{\mathcal{H}_N}=N\int_{\T^2}|\Psi_{\z,\sigma,\T^2}\rangle\langle\Psi_{\z,\sigma,\T^2}|\,d\z$,
we apply Proposition~\ref{prop:iter5} to $\z\in\mathcal{G}_{J,\frac{\delta}{2},\gamma,N}$, where the parameters are those defined in \eqref{eqn:param2}. Note that for sufficiently large $N$, we have $2^J\le\sqrt{N\sigma}$ by definition of $J$ and $\sigma=1$. We leave the parameter $\sigma$ in the notation below for generality, agreement with notation in \cite{DNW}, and to keep track of coherent state scaling.  For $j\le J$,
\begin{align}
\nonumber\big|\langle x|\hat{B}_N^j|y\rangle\big| &= \left|N\int_{\T^2}\langle x|\hat{B}_N^j|\Psi_{\z,\sigma,\T^2}\rangle\langle\Psi_{\z,\sigma,\T^2}|y\rangle\,d\z\right| \\
\label{eqn:jaction}&\le \left|N\int_{\mathcal{G}_{J,\frac{\delta}{2},\gamma,N}}e^{iN\pi \Theta_j(\z)}\langle x|\Psi_{B^j\z,\sigma/4^j,\T^2}\rangle\langle\Psi_{\z,\sigma,\T^2}|y\rangle\,d\z\right|+\\
\nonumber&\hspace{2cm}+ \left|N\int_{\T^2\setminus\mathcal{G}_{J,\frac{\delta}{2},\gamma,N}}\langle x|\hat{B}_N^j|\Psi_{\z,\sigma,\T^2}\rangle\langle\Psi_{\z,\sigma,\T^2}|y\rangle\,d\z \right|+
\mathcal{O}({N^{3/2}\sigma^{1/2}}e^{-2\pi N\omega}),
\end{align}
where $\omega:=\min(\sigma\delta^2/4,\gamma^2/\sigma)$.
The discontinuity region $\T^2\setminus\mathcal{G}_{J,\frac{\delta}{2},\gamma,N}$ is small, and the integral involving this region is bounded using  \eqref{eqn:lem3bound} as,
\begin{align}\label{eqn:discontsetbound}
\left|N\int_{\T^2\setminus\mathcal{G}_{J,\frac{\delta}{2},\gamma,N}}\!\!\!\langle x|\hat{B}_N^j|\Psi_{\z,\sigma,\T^2}\rangle\langle\Psi_{\z,\sigma,\T^2}|y\rangle\,d\z \right|&\le N\int_{\T^2\setminus\mathcal{G}_{J,\frac{\delta}{2},\gamma,N}}\!\!\!\|\Psi_{\z,\sigma,\T^2}\|_2C\left(\frac{\sigma}{N}\right)^{1/4}e^{-\sigma N\pi d_{\R/\Z}(\frac{y}{N},z_0)^2}\,d\z,
\end{align}
where $z_0$ is the position coordinate of $\z=(z_0,z_1)$. The set $\T^2\setminus\mathcal{G}_{J,\frac{\delta}{2},\gamma,N}$ consists of two parts, the two horizontal strips of height $\gamma=N^{-1/3}$ at the top and bottom of $\T^2$, and the $2^{J}+1$ vertical strips of width at most $\delta$. Over the two horizontal strips, integrating shows the contribution to \eqref{eqn:discontsetbound} is $\mathcal{O}(N^{-1/12}\sigma^{-1/4})$.
For the vertical strips, since we assume the matrix entry coordinates $(x,y)\not\in B_{J,\delta,\gamma,N}$, we have $\left|\frac{y}{N}-\frac{k}{2^J}\right|>\delta$. For the vertical strips of  $\T^2\setminus\mathcal{G}_{J,\frac{\delta}{2},\gamma,N}$, we also have that $z_0$ is within $\delta/2$ of some $k/2^J$. So $\left|z_0-\frac{y}{N}\right|\ge \left|\frac{y}{N}-\frac{k}{2^J}\right|-\left|z_0-\frac{k}{2^J}\right|>\frac{\delta}{2}$.
Thus the contribution from the $2^J+1$ vertical strips to \eqref{eqn:discontsetbound} is
\begin{align*}
\le (2^J+1)\delta N^{3/4}\sigma^{1/4}\|\Psi_{\z,\sigma,\T^2}\|_2Ce^{-\sigma N\pi \delta^2/4}
\le C\sigma^{1/4}(\log N)^{1/4}N^{1/4+\varepsilon(N)}N^{-25\pi}.
\end{align*}
So in total the integral \eqref{eqn:discontsetbound} over the discontinuity region $\T^2\setminus\mathcal{G}_{J,\frac{\delta}{2},\gamma,N}$ is of order $\mathcal{O}(N^{-1/12}\sigma^{-1/4})$.

Continuing with \eqref{eqn:jaction}, by Lemma~\ref{lem:3}, for $\z\in\mathcal{G}_{J,\frac{\delta}{2},\gamma,N}$ and $j\le J$,
\begin{align*}
\sqrt{N}\left|\langle x|\Psi_{B^j\z,\sigma/4^j,\T^2}\rangle\right|&=\Big|\Psi_{B^j\z,\sigma/4^j}\Big(\frac{x}{N}\Big)\Big|+\mathcal{O}\big((\sigma N)^{1/4}2^{-j/2}e^{-\pi N\sigma\delta^2/4}\big),\\
\sqrt{N}\left|\langle\Psi_{\z,\sigma,\T^2}|y\rangle\right|&=
\Big|\Psi_{\z,\sigma}\Big(\frac{y}{N}\Big)\Big|+\mathcal{O}\big((\sigma N)^{1/4}e^{-\pi N\sigma\delta^2/4}\big).
\end{align*}
Plugging this back into \eqref{eqn:jaction},  we obtain the bound
\begin{multline}
\bigg|N\int_{\mathcal{G}_{J,\frac{\delta}{2},\gamma,N}}e^{iN\pi \Theta_j(\z)}\langle x|\Psi_{B^j\z,\frac{\sigma}{4^j},\T^2}\rangle\langle\Psi_{\z,\sigma,\T^2}|y\rangle\,d\z\bigg|\le \\
\label{eqn:ja2}
\int_{\T^2}\Big|\Psi_{B^j\z,\sigma/4^j}\Big(\frac{x}{N}\Big)\Big|\Big|\Psi_{\z,\sigma}\Big(\frac{y}{N})\Big|\,d\z + \mathcal{O}(2^{-j/2}(\sigma N)^{1/2}e^{-\pi N\sigma\delta^2/4}).
\end{multline}
Collecting the error terms from \eqref{eqn:jaction} so far, we have
$\mathcal{O}({N^{3/2}\sigma^{1/2}}e^{-2\pi N\omega})$, $\mathcal{O}(N^{-1/12}\sigma^{-1/4})$, and 
$\mathcal{O}((\sigma N)^{1/2}2^{-j/2}e^{-\pi N\sigma\delta^2/4})$,
for which the total sum is (for $\sigma=1$), 
\begin{align}\label{eqn:error1}
\mathcal{O}(N^{3/2}e^{-\pi N\min(\delta^2/4,\gamma^2)})+\mathcal{O}(\gamma N^{1/4}).
\end{align}
The choices \eqref{eqn:param2} of $\delta,\gamma$ ensure this is $\mathcal{O}(N^{-1/12})$, which is $o(2^{-J/2})$.

The leading order term in \eqref{eqn:ja2} is
\begin{align*}
\int_{\T^2}\Big|\Psi_{B^j\z,\sigma/4^j}\Big(\frac{x}{N}\Big)\Big|\Big|\Psi_{\z,\sigma}\Big(\frac{y}{N})\Big|\,d\z &=(2N\sigma)^{1/2}2^{-j/2}\int_{0}^1dq\, e^{-\sigma N\pi 4^{-j}(\frac{x}{N}-(2^jq-k(q)))^2}e^{-\sigma N\pi(\frac{y}{N}-q)^2},
\end{align*}
where $k(q)=\floor{2^jq}$.
This $q$-integral with a Gaussian-like and Gaussian term will be small unless the centers of the two Gaussian(-like) terms are very close, which will be quantified using the sets $C_{j,N}^W$. 
There is the term $k(q)=\floor{2^jq}$ in the first term, which effectively puts a new Gaussian peak (width of order the standard deviation $\sim N^{-1/2}$) at each $\frac{x}{2^jN}+\frac{k}{2^j}$, $k=0,\ldots,2^j$. However, only the 3 $k$-values such that $\frac{x}{2^jN}+\frac{k}{2^j}$ is close to $\frac{y}{N}$ will have any chance of contributing.

First split up the integral into the region within $r$ of the center $\frac{y}{N}$, and the region further than $r$. We will take $r=\frac{1}{N^{1/4}}$. Since $r$ is much larger than  the Gaussian standard deviation $N^{-1/2}$, the integral over the region $|q-\frac{y}{N}|>r$ is small.
Write
\begin{align*}
(2N\sigma)^{1/2}&2^{-j/2}\int_{0}^1dq\, e^{-\sigma N\pi 4^{-j}(\frac{x}{N}-(2^jq-k(q)))^2}e^{-\sigma N\pi(\frac{y}{N}-q)^2} \\
&\le (2N\sigma)^{1/2}2^{-j/2}\left(\int_{B_{r}(\frac{y}{N})}dq\, e^{-\sigma N\pi \big(q-\frac{x}{2^jN}-\frac{k(q)}{2^j}\big)^2}e^{-\sigma N\pi(q-\frac{y}{N})^2} +\int_{|q-\frac{y}{N}|\ge r}dq\,e^{-\sigma N\pi(q-\frac{y}{N})^2}\right).\numberthis\label{eqn:rsplit}
\end{align*}
Recalling the complementary error function $\operatorname{erfc}(z)=1-\frac{2}{\sqrt{\pi}}\int_0^ze^{-t^2}\,dt$, the right integral over the region far from the center is,
\begin{align*}
(2N\sigma)^{1/2}2^{-j/2}\int_{|q-\frac{y}{N}|\ge r}e^{-\sigma N\pi(q-\frac{y}{N})^2} \,dq
&=2^{-j/2}\sqrt{2}\operatorname{erfc}(r\sqrt{\sigma N\pi})=\mathcal{O}\Big(\frac{e^{-\pi N^{1/2}}}{N^{1/4}}\Big),\numberthis\label{eqn:error2}
\end{align*}
since $r\sqrt{N}= N^{1/4}\to\infty$ as $N\to\infty$ and there is the asymptotic expansion
$
\operatorname{erfc}(z)=\frac{e^{-z^2}}{z\sqrt{\pi}}(1+\mathcal{O}(z^{-2}))$ as $z\to\infty$.

In the remaining integral over $B_r(\frac{y}{N})$ in \eqref{eqn:rsplit}, since $q$ is restricted to $B_{r}(\frac{y}{N})$, we can know $k(q)$ in the exponent. Note that for sufficiently large $N$, that $r\le 2^{-J}$, since then $2^{-J}=\frac{1}{N^{\varepsilon(N)}}\ge\frac{1}{N^{1/4}}$. Then $\big|q-\frac{y}{N}\big|\le r\le 2^{-J}$, so that $|2^jq-\frac{y}{N}2^j|\le1$. Letting $a=\floor{\frac{y}{N}2^j}$, then we must have $k(q)=\floor{2^jq}\in\{a-1,a,a+1\}$.

Thus for any $q\in B_{r}(\frac{y}{N})$, the quantity $k(q)$ takes one of only three values  $a$, $a-1$, or $a+1$. 
As we do not care about factors of 3 (we just want to avoid any growing factors like $2^J$), we put all three in to obtain,
\begin{multline*}
(2N\sigma)^{1/2}2^{-j/2}\int_{B_{r}(\frac{y}{N})} e^{-\sigma N\pi\left(q-\frac{x}{2^jN}-\frac{k(q)}{2^j}\right)^2}e^{-\sigma N\pi(q-\frac{y}{N})^2}\,dq\\
\le (2N\sigma)^{1/2}2^{-j/2}\sum_{\ell=-1,0,1}\int_{-\infty}^\infty e^{-\sigma N\pi(q-s_\ell)^2}e^{-\sigma N\pi q^2}\,dq,
\end{multline*}
where $s_\ell=\frac{x}{2^jN}+\frac{a+\ell}{2^j}-\frac{y}{N}$.
The Gaussian integral is
$
\int_{-\infty}^\infty e^{-\sigma N\pi(q-s)^2}e^{-\sigma N\pi q^2}\,dq =\frac{1}{(2\sigma N)^{1/2}}e^{-\sigma N\pi s^2/2}$,
so we obtain
\begin{multline}
(2N\sigma)^{1/2}2^{-j/2}\sum_{\ell=-1,0,1}\int_{-\infty}^\infty e^{-\sigma N\pi(q-s_\ell)^2}e^{-\sigma N\pi q^2}\,dq\\
\le2^{-j/2}\left(e^{-\frac{\sigma \pi}{2N 4^j}(x-2^jy+Na)^2}+e^{-\frac{\sigma \pi}{2N4^j}(x-2^jy+N(a+1))^2}+e^{-\frac{\sigma \pi}{2N4^j}(x-2^jy+N(a-1))^2}\right).\label{eqn:g-error}
\end{multline}
The right hand side is always upper bounded by $3\cdot 2^{-j/2}$, yielding \eqref{eqn:cbound} with error terms \eqref{eqn:error1} and \eqref{eqn:error2}, which are both $o(2^{-J/2})$.

If $(x,y)\not\in C_{j,N}^W$, then $W<d_{\Z/N\Z}(x,2^jy)= \min_{\alpha\in\Z}|x-2^jy-N\alpha|$, and so we obtain the bound for the right hand side of \eqref{eqn:g-error},
\begin{align}\label{eqn:rbound}
\le 3\cdot 2^{-j/2}e^{-\frac{\sigma \pi}{2N4^j}W^2}.
\end{align}
In total, then from \eqref{eqn:jaction}, for $(x,y)\not\in C_{j,N}^W$, we have
$
(\hat{B}_N^j)_{xy}\le 2^{-j/2}r(N),
$
where collecting \eqref{eqn:error1}, \eqref{eqn:error2}, and \eqref{eqn:rbound} and applying \eqref{eqn:param2} show,
\begin{align}\label{eqn:rN}
r(N)=\mathcal{O}\left(N^{-1/12}+\exp\left(-\frac{\pi}{2}N^{2\varepsilon(N)}\right)\right).
\end{align}
\qed

\section{Spectral projection elements, Proof of Proposition~\ref{prop:mp-sf}}\label{sec:p-sf-proof}

In this section, we prove Proposition~\ref{prop:mp-sf}, to go from matrix elements of $\hat{B}_N^k$ to those of the spectral projection $P_{I(N)}$. 
This will complete the proof of Theorem~\ref{thm:Pmat}. 

We start with the method used in \cite{pw} to approximate $P_{I(N)}$ using trigonometric polynomials involving powers of the unitary matrix, here $\hat{B}_N$. However we will also need estimates on the off-diagonal elements of $P_{I(N)}$,
as well as knowledge on the positions of $(x,y)$ where we have ``good'' estimates on $(P_{I(N)})_{xy}$.

The particular trigonometric polynomials we use to estimate $\Chi_{I(N)}$ are the Beurling--Selberg (or Selberg) polynomials \cite[\S45.20]{Selberg}, 
 \cite{Vaaler}, \cite{Montgomery2000}. 
First, the Beurling function is for $z\in\C$,
\begin{align}
B(z)&=\bigg(\frac{\sin \pi z}{\pi}\bigg)^2\left(\sum_{n=0}^\infty\frac{1}{(z-n)^2}-\sum_{n=-\infty}^{-1}\frac{1}{(z-n)^2}+\frac{2}{z}\right).
\end{align}
It is entire of exponential type\footnote{that is, it satisfies the growth condition that for every $\varepsilon>0$, there is $A_\varepsilon$ so that $|B(z)|\le A_\varepsilon e^{(2\pi+\varepsilon)|z|}$ for all $z\in\C$.} $2\pi$ and so has Fourier transform (with $e^{-ikx}$ normalization) supported in $[-2\pi,2\pi]$. It also satisfies
\begin{align}
\operatorname{sgn}(x)\le B(x)\text{ for } x\in\R,\quad\text{and }\quad \int_\R \big(B(x)-\operatorname{sgn}(x)\big)\,dx=1.
\end{align}
A plot of $B(x)$ is shown in \cite[Fig.1]{Montgomery2000}.
If $F(z)$ is entire of exponential type $2\pi$ with $\operatorname{sgn}(x)\le F(x)$ for $x\in\R$, then $\int_\R (F(x)-\operatorname{sgn}(x))\,dx\ge1$, so the Beurling function is an extremizer of this approximation problem.

To obtain approximants for a finite interval $I=[a,b]\subset\R$, one constructs Selberg's functions, for $\frac{D}{2\pi}\ge1$,
\begin{align}
g_{I,D}^{(+)}(z)&=\frac{1}{2}\left(B\left(\frac{D}{2\pi}(b-z)\right)+B\left(\frac{D}{2\pi}(z-a)\right)\right),
\end{align}
which approximate $\Chi_I(x)$ from above;
$g_{I,D}^{(+)}(x)\ge \Chi_I(x)$ for all $x\in\R$, and $\int_{-\infty}^\infty (g_{I,D}^{(+)}(x)-\Chi_I(x))\,dx =\frac{2\pi}{D}$. Additionally, the Fourier transform $\hat{g_{I,D}^{(+)}}(k)=\frac{1}{{2\pi}}\int_\R g_{I,D}^{(+)}(x)e^{-ikx}\,dx$ is supported in $[- D,D]$. 
One can also construct minorants $g_{I,D}^{(-)}(z)$ satisfying $g_{I,D}^{(-)}(x)\le \Chi_I(x)$ for all $x\in\R$ and $\int_{-\infty}^\infty (\Chi_I(x)-g_{I,D}^{(-)}(x))\,dx=\frac{2\pi}{D}$, with Fourier transform supported also in $[-D,D]$.

Since we consider intervals in $\R/(2\pi\Z)$ corresponding to arcs on the unit circle, we take the periodic versions,
\begin{align*}
G^{(\pm)}_{I,D}(x) &= \sum_{j\in\Z}g_{I,D}^{(\pm)}(x-2\pi j),
\end{align*}
which are trigonometric polynomials of degree $\le D$, and satisfy $\hat{G_{I,D}^{(\pm)}}(k)=\hat{g_{I,D}^{(\pm)}}(k)$ for $k\in\Z$.
They can be written as the Fourier series
\begin{align}\label{eqn:G}
G_{I,D}^{(\pm)}(x) &=\frac{|I|\pm 2\pi D^{-1}}{2\pi}+\sum_{\ell=1}^{\floor{D}}\left(\hat{g_{I,D}^{(\pm)}}(\ell)e^{i\ell x}+\hat{g_{I,D}^{(\pm)}}(-\ell)e^{-i\ell x}\right).
\end{align}

\subsection{Diagonal estimates}\label{subsec:diag-power}
We keep the general case of any interval $I_N\subseteq\R/(2\pi\Z)$ and $J_N\ge1$.
The diagonal estimates are very similar to \cite{pw}. Here we present a simplification for this case. Starting from the Selberg polynomials $G^{(\pm)}_{I_N,J_N}:\R/(2\pi\Z)\to\R$, we define their analogs on the unit circle by setting 
\[
F^{(\pm)}_{I_N,J_N}(e^{it}):=G^{(\pm)}_{I_N,J_N}(t).
\]
Then by the spectral theorem, the projection matrix $P_{I_N}=\sum_{j:\theta^{(j,N)}\in I_N}|\varphi^{(j,N)}\rangle\langle\varphi^{(j,N)}|$ satisfies,
\begin{align}\label{eqn:sand2}
F^{(-)}_{I_N,J_N}(\hat{B}_N)_{xx}&\le (P_{I_N})_{xx}\le F^{(+)}_{I_N,J_N}(\hat{B}_N)_{xx},\quad \text{ any }x\in\intbrr{0:N-1}.
\end{align}
Since $G_{I_N,J_N}^{(\pm)}$ are trigonometric polynomials of degree $\le J_N$, from \eqref{eqn:G} we have
\begin{align}\label{eqn:fourier}
F^{(\pm)}_{I_N,J_N}(\hat{B}_N)&= \frac{|I_N|}{2\pi}\Big(1\pm \frac{2\pi}{|I_N|J_N}\Big)\operatorname{Id}+\sum_{\ell=1}^{\floor{J_N}}\Big(\hat{g_{I_N,J_N}^{(\pm)}}(\ell)\hat{B}_N^\ell +\hat{g_{I_N,J_N}^{(\pm)}}(-\ell)\hat{B}_N^{-\ell}\Big).
\end{align}
The identity term is already the value we want.
Using the Fourier coefficient bound,
\begin{align}
|\hat{g_{I_N,J_N}^{(\pm)}}(\ell)| &\le \frac{1}{2\pi}\int_\R |g_{I_N,J_N}^{(\pm)}|\,dx \le \frac{1}{2\pi}(|I_N|+2\pi J_N^{-1}),
\end{align}
we obtain that the non-identity terms in \eqref{eqn:fourier} are bounded as follows (allowing for $x\ne y$ to keep generality),
\begin{align}\label{eqn:Jerror}
\left|\sum_{\ell=1}^{\floor{J_N}}\Big(\hat{g_{I_N,J_N}^{(\pm)}}(\ell)(\hat{B}_N^\ell)_{xy} +\hat{g_{I_N,J_N}^{(\pm)}}(-\ell)(\hat{B}_N^{-\ell})_{xy}\Big)\right|&\le \frac{|I_N|}{2\pi}\Big(1+\frac{2\pi}{|I_N|J_N}\Big)\sum_{\ell=1}^{\floor{J_N}}(|(\hat{B}_N^\ell)_{xy}|+|(\hat{B}_N^\ell)_{yx}|).
\end{align}
Now recall the set $S\subset\intbrr{0:N-1}^2$ from  the hypotheses of Proposition~\ref{prop:mp-sf}. For $(x,y)\in S$, we are given that $|(\hat{B}_N^\ell)_{xy}|\le 2^{-\ell}r_N$ for all $\ell\in\intbrr{1:J_N}$,  and so \eqref{eqn:Jerror} becomes, for $(x,y)$ such that $(x,y),(y,x)\in S$,
\begin{align*}
\left|\sum_{\ell=1}^{\floor{J_N}}\Big(\hat{g_{I_N,J_N}^{(\pm)}}(\ell)(\hat{B}_N^\ell)_{xy} +\hat{g_{I_N,J_N}^{(\pm)}}(-\ell)(\hat{B}_N^{-\ell})_{xy}\Big)\right|&
\le \frac{|I_N|}{2\pi}\Big(1+\frac{2\pi}{|I_N|J_N}\Big)2r_N\sum_{\ell=1}^{\floor{J_N}}2^{-\ell}\\
\\
&\le\frac{|I_N|}{2\pi}\Big(1+\frac{2\pi}{|I_N|J_N}\Big)2r_N,\numberthis\label{eqn:Jerror2}
\end{align*}
for $(x,y),(y,x)\in S$.
Returning to the diagonal entries $(x,x)$, we obtain the bound for any $x$ with $(x,x)\in S$,
\begin{align}\label{eqn:p-final}
\begin{aligned}
\left|(P_{I_N})_{xx}-\frac{|I_N|}{2\pi}\right|
&\le\frac{|I_N|}{2\pi}\left[\frac{2\pi}{|I_N|J_N}+\Big(1+\frac{2\pi}{|I_N|J_N}\Big)2r_N\right],
\end{aligned}
\end{align}
which proves \eqref{eqn:diag-stronger}.

Now we specialize to $I(N)$, $J$, and $r(N)$ as in \eqref{eqn:param2} in Theorem~\ref{prop:mpowers}. The set of $x$ with $(x,x)\in S$ is taken to be those $x\not\in\da_{J,\delta,\gamma,N}^W$. (The set $S$ is $({A}_{J,\delta,\gamma,N}^W)^c$.)
By the definition of $\varepsilon(N)$, we have $|I(N)|J\to\infty$, so the bound in \eqref{eqn:p-final} is $o(|I(N)|)$, more specifically, $|I(N)|\mathcal{O}(\mathcal{R}_\mathrm{d}(N))$ with $\mathcal{R}_\mathrm{d}(N)=\frac{1}{|I(N)|J}+r(N)$ as defined in \eqref{eqn:decayrates}.

\subsection{Off-diagonal estimates}\label{subsec:off-diag}
For $(x,y)$ such that $(x,x),(y,y)\in S$, then using the spectral theorem, Cauchy--Schwarz, and that $G_{I_N,J_N}^{(+)}- \Chi_{I_N}\ge0$,
\begin{align*}
\left|F_{I_N,J_N}^{(+)}(\hat{B}_N)_{xy}-(P_{I_N})_{xy}\right| &=\left|\sum_{j=1}^{N}\left(G_{I_N,J_N}^{(+)}(\theta^{(j)})-\Chi_{I_N}(\theta^{(j)})\right)\langle x|\varphi^{(j)}\rangle\langle\varphi^{(j)}|y\rangle\right| \\
&\le\left(F_{I_N,J_N}^{(+)}(\hat{B}_N)_{xx}-(P_{I_N})_{xx}\right)^{1/2}\left(F_{I_N,J_N}^{(+)}(\hat{B}_N)_{yy}-(P_{I_N})_{yy}\right)^{1/2}\\
&\le \frac{2}{J_N}+\frac{|I_N|}{2\pi}\Big(1+\frac{2\pi}{|I_N|J_N}\Big)4r_N.
\end{align*}
For $x\ne y$ with $(x,y),(y,x)\in S$, by \eqref{eqn:Jerror2} we obtain,
\begin{align*}
\left|F_{I_N,J_N}^{(+)}(\hat{B}_N)_{xy}\right|\le \frac{|I_N|}{2\pi}\Big(1+\frac{2\pi}{|I_N|J_N}\Big)2r_N.
\end{align*}
Thus for $x\ne y$ with $(x,x),(y,y),(x,y),(y,x)\in S$,
\begin{align}
\left|(P_{I_N})_{xy}\right| &\le \frac{|I_N|}{2\pi}\left[ \frac{4\pi}{|I_N|J_N}+\Big(1+\frac{2\pi}{|I_N|J_N}\Big)6r_N\right].
\end{align}

Specializing to $I(N),J$, and $r(N)$ as in \eqref{eqn:param2} in Theorem~\ref{prop:mpowers}, by equation \eqref{eqn:sbound}, the set $S$ is $(A_{J,\delta,\gamma,N}^W)^c$, and the bound \eqref{eqn:offdiag-stronger} is $\mathcal{O}\left(|I(N)|\left[r(N)+\frac{1}{|I(N)|J}\right]\right)$. \qed

\subsection{Proof of eigenvalue Weyl law, Corollary~\ref{cor:weyl}}\label{subsec:evcounting}

With Theorem~\ref{prop:mpowers} and Proposition~\ref{prop:mp-sf} proved, this establishes the projection matrix estimates \eqref{eqn:P-diag} and \eqref{eqn:P-offdiag}.

Note that by the choices in \eqref{eqn:param2}, 
\begin{align*}
\#\da_{J,\delta,\gamma,N}^W &\le C(\gamma N+2^J\delta N+2^JW)=\mathcal{O}(N^{2/3})=o(N|I(N)|).
\end{align*}
Then using the diagonal entry estimate \eqref{eqn:P-diag} and that $|(P_{N,I(N)})_{xx}|\le1$,
\begin{align*}
\#\{j:\theta^{(j,N)}\in I(N)\}=\operatorname{tr}P_{N,I(N)} &= \sum_{x\not\in\da_{J,\delta,\gamma,N}^W}\frac{|I(N)|}{2\pi}\left(1+o(1)\right)+\sum_{x\in\da_{J,\delta,\gamma,N}^W}(P_{N,I(N)})_{xx}\\
&=\frac{N|I(N)|}{2\pi}\left(1+o(1)\right).
\end{align*}

After proving Theorem~\ref{thm:Qmat} in Appendix~\ref{subsec:qN}, the remaining part of Corollary~\ref{cor:weyl}, the extension \eqref{eqn:weyl-qn} to $q_N$, will follow similarly, as described in
Appendix~\ref{appsubsec:ev}.

\section{Windowed local Weyl law and quantum ergodicity}
\label{sec:lweyl-proof}

In this section we prove Proposition~\ref{thm:window} to go from matrix entries of $P_{I(N)}$ to the windowed local Weyl law. This will prove Theorem~\ref{thm:lweyl}. In Section~\ref{subsec:qe}, we will then discuss the application to prove windowed quantum ergodicity (Theorem~\ref{thm:qv}), for which the proof will be given in Appendix~\ref{sec:wqe}.

In what follows we drop the $N$ superscript on the eigenvalues $\theta^{(j,N)}$ and (orthonormal) eigenvectors $\varphi^{(j,N)}$.

\subsection{Extracting diagonal terms}
Let $\{|x\rangle\}_{x=0}^{N-1}$ be the position basis. We know that (Lemma~\ref{lem:useful}) $\sum_{x=0}^{N-1}\langle x|\operatorname{Op}_N^\W(f)|x\rangle =\operatorname{Tr}(\operatorname{Op}_N^\W(f))= N\int_{\T^2}f(q,p)\,dq\,dp+\mathcal{O}_M(\frac{\|f\|_{C^M}}{N^{M-1}})$, so we will extract this term (times $\frac{|I(N)|}{2\pi}$) from the following expansion and show the remaining terms are small, i.e. $o(N|I(N)|)$. Write,
\begin{align*}
\sum_{j:\theta^{(j)}\in I(N)}\langle\varphi^{(j)}|\operatorname{Op}_N^\W(f)|\varphi^{(j)}\rangle
&=\sum_{x,y=0}^{N-1}\sum_{\theta^{(j)}\in I(N)}\langle \varphi^{(j)}|y\rangle\langle y|\operatorname{Op}_N^\W(f)|x\rangle\langle x|\varphi^{(j)}\rangle\\
&=\sum_{x=0}^{N-1}\langle x|\operatorname{Op}_N^\W(f)|x\rangle(P_{I(N)})_{xx}+\sum_{\substack{x,y=0\\x\ne y}}^{N-1}\langle y|\operatorname{Op}_N^\W(f)|x\rangle (P_{I(N)})_{xy}.\numberthis\label{eqn:op-exp}
\end{align*}
By assumption, we know $(P_{I(N)})_{xx}=\frac{|I(N)|}{2\pi}(1+\mathcal{O}(\mathcal{R}_\mathrm{d}(N)))$ for $x\not\in\da_{J,\delta,\gamma,N}^W$. 
For $x\in\da_{J,\delta,\gamma,N}^W$, we will just use the inequality $(P_{I(N)})_{xx}\le1$.
Then considering just the first sum in \eqref{eqn:op-exp}, which consists only of diagonal terms, we write
\begin{align*}
\sum_{x=0}^{N-1}\langle x&|\operatorname{Op}_N^\W(f)|x\rangle(P_{I(N)})_{xx}\\
&=\sum_{x\not\in\da_{J,\delta,\gamma,N}^W}\langle x|\operatorname{Op}_N^\W(f)|x\rangle\frac{|I(N)|}{2\pi}(1+\mathcal{O}(\mathcal{R}_\mathrm{d}(N)))+\sum_{x\in \da_{J,\delta,\gamma,N}^W}\langle x|\operatorname{Op}_N^\W(f)|x\rangle(P_{I(N)})_{xx}\\
&=\sum_{x=0}^{N-1}\langle x|\operatorname{Op}_N^\W(f)|x\rangle\frac{|I(N)|}{2\pi}(1+\mathcal{O}(\mathcal{R}_\mathrm{d}(N)))+\mathcal{O}\left(\|\operatorname{Op}_N^\W(f)\|\right)\#\da_{J,\delta,\gamma,N}^W.
\end{align*}
Since $\#\da_{J,\delta,\gamma,N}^W\le C(\gamma N+2^J\delta N+2^JW)$ and $\|\operatorname{Op}_N^\W(f)\|\le C\|f\|_{C^2}$ (Lemma~\ref{lem:useful}),
in total we have,
\begin{multline}\label{eqn:est-diag-terms}
\sum_{x=0}^{N-1}\langle x|\operatorname{Op}_N^\W(f)|x\rangle(P_{I(N)})_{xx} =\\ \frac{N|I(N)|}{2\pi}\left(\frac{1}{N}\operatorname{Tr}\operatorname{Op}_N^\W(f)+{\|f\|_{C^2}\mathcal{O}(\mathcal{R}_\mathrm{d}(N))}+\|f\|_{C^2}\mathcal{O}\bigg(\frac{\gamma+2^J\delta+\frac{2^JW}{N}}{|I(N)|}\bigg)\right).
\end{multline}
The non-trace terms on the right side are $o(1)$, more precisely $\mathcal{O}\left(\|f\|_{C^2}\mathcal{R}_d(N)\right)+o\big(\frac{\log N}{N^{1/3}}\big)$, by the choices \eqref{eqn:param2}, \eqref{eqn:decayrates}, and growth estimate $|I(N)|\gg\frac{1}{\log N}$.
Since $\frac{1}{N}\operatorname{Tr}\operatorname{Op}_N^\W(f)=\int_{\T^2}f(\x)\,d\x+\mathcal{O}(\frac{\|f\|_{C^3}}{N^3})$, it just remains to show the right-most sum over off-diagonal terms in \eqref{eqn:op-exp} is $o(N|I(N)|)$.

\subsection{Off-diagonal terms}
For the off-diagonal sum $\sum_{\substack{x,y=0\\x\ne y}}^{N-1}\langle y|\operatorname{Op}_N^\W(f)|x\rangle (P_{I(N)})_{xy}$, we will break the sum up into cases depending on the location of $(x,y)$. 
First consider a single term, recalling $\tilde{f}(k)=\int_{\T^2}f(q,p)e^{-2\pi i(qk_2-pk_1)}\,dq\,dp$ and $T(k)=e^{2\pi i(k_2Q-k_1P)}$ (defined in \eqref{eqn:T(k)}),
\begin{align*}
\langle y|\operatorname{Op}_N^\W(f)|x\rangle&=\sum_{k\in\Z^2}\tilde{f}(k)\langle y|T(k)|x\rangle \\
&=\sum_{k\in\Z^2}\tilde{f}(k)e^{-i\pi k_1k_2/N}e^{2\pi ik_2(x+k_1)/N}\langle y|x+k_1\rangle_{\Z/N\Z}\\
&= \sum_{\substack{m_1\in \Z\\k_2\in\Z}}\tilde{f}([y-x]_N+Nm_1,k_2)e^{i\pi [y-x]_Nk_2/N}e^{i\pi m_1k_2}e^{2\pi ik_2x/N},
\end{align*}
where we let $[y-x]_N\in[-N/2,N/2]$ be the representative of $y-x$ in $\Z/N\Z$ with $|[y-x]_N|=d_{\Z/N\Z}(x,y)$.
Since the Fourier coefficients decay away from $(0,0)$ since $f$ is smooth, we split up the sum into cases $m_1=0$ and $m_1\ne0$,
\begin{align*}
|\langle y|\operatorname{Op}_N^\W(f)|x\rangle| 
&\le\sum_{k_2\in\Z}|\tilde{f}([y-x]_N,k_2)|+\sum_{\substack{k_2\in\Z\\m_1\in\Z\setminus\{0\}}}|\tilde{f}([y-x]_N+Nm,k_2)|.
\end{align*}
We have thrown out all the phases here, but we also do not know the phases of $(P_{I(N)})_{xy}$ which multiply it.
Let $M\ge3$ be a fixed integer for the rest of the proof (we will at the end choose a specific value of $M$).
Using the Fourier decay \eqref{eqn:fourier-bound}
$|\tilde{f}(k)|\le\frac{C_M\|f\|_{C^M}}{\|k\|_2^M}$,
then for $x\ne y$,
\begin{align*}
|\langle y|\operatorname{Op}_N^\W(f)|x\rangle| &\le{C_M\|f\|_{C^M}}\bigg(\sum_{k_2\in\Z}\frac{1}{([y-x]_N^2+k_2^2)^{M/2}}
+\sum_{\substack{k_2\in\Z\\m_1\in\Z\setminus\{0\}}}\frac{1}{((Nm_1+[y-x]_N)^2+k_2^2)^{M/2}}\bigg).
\end{align*}
Since these summands are decreasing functions in $|k_2|$, we can approximate by integrals.
The first sum is, for $x\ne y$,
\begin{align*}
\sum_{k_2\in\Z}\frac{1}{([y-x]_N^2+k_2^2)^{M/2}} &\le \int_\R \frac{1}{([y-x]_N^2+z^2)^{M/2}}\,dz+\frac{1}{|[y-x]_N|^M}\\
&\le\frac{c_M}{|[y-x]_N|^{M-1}}.
\end{align*}
The second sum is similar, also using that $|Nm_1+[y-x]_N|\ge N|m_1|-|[y-x]_N|\ge N(|m_1|-\frac{1}{2})$,
\begin{align*}
\sum_{\substack{k_2\in\Z\\m_1\in\Z\setminus\{0\}}}\frac{1}{((Nm_1+[y-x]_N)^2+k_2^2)^{M/2}} 
&\le  \sum_{m_1\in\Z\setminus\{0\}}\frac{c_M}{|Nm_1+[y-x]_N|^{M-1}} \\
&\le \sum_{m_1\in\Z\setminus\{0\}}\frac{c_M}{N^{M-1}(|m_1|-\frac{1}{2})^{M-1}}
\le \frac{c'_M}{N^{M-1}}.
\end{align*}
Thus for any $(x,y)\in\intbrr{1:N-1}^2$ and a new constant $C_M$,
\begin{align}\label{eqn:op-est}
|\langle y|\operatorname{Op}_N^\W(f)|x\rangle|&\le C_{M}\|f\|_{C^M}\left(\frac{1}{|[y-x]_N|^{M-1}}+\frac{1}{N^{M-1}}\right).
\end{align}
The key point is that this decays away from the diagonal in $\intbrr{0:N-1}^2$.
This can be viewed as an analogue of the rapid decay away from the diagonal for the Schwartz kernel of semiclassical pseudodifferential operators on $\R^n$, cf. \cite[\S9.3.2]{zworski}.

\subsubsection{Away from the diagonal}
Let $V=\log N$.
Using just the bound $|(P_{I(N)})_{xy}|\le1$, the decay from \eqref{eqn:op-est} shows that the sum over terms where $|[x-y]_N|>V$ is small, though there are $N^2(1-o(1))$ such terms,
\begin{align*}
\sum_{\substack{x,y=0;\,x\ne y\\|[x-y]_N|> V}}^{N-1}|\langle y|\operatorname{Op}_N^\W(f)|x\rangle| &\le C_{M}\|f\|_{C^M}\sum_{y=0}^{N-1}\sum_{\substack{x:x\ne y\\|[x-y]_N|>V}}\left(\frac{1}{|[y-x]_N|^{M-1}}\right)+C_{M}\|f\|_{C^M}\frac{N^2}{N^{M-1}}\\
&\le C_{M}\|f\|_{C^M}\sum_{y=0}^{N-1}\sum_{d=V+1}^\infty\frac{2}{d^{M-1}}+C_{M}\|f\|_{C^M}\frac{1}{N^{M-3}}\\
&\le \tilde{C}_{M}\|f\|_{C^M}\left(\frac{N}{V^{M-2}}+\frac{1}{N^{M-3}}\right).
\numberthis\label{eqn:est-away-diag}
\end{align*}
This is $o(N|I(N)|)$ since $M\ge 3$ and $\frac{N}{\log N}=o(N|I(N)|)$.

\subsubsection{Near the diagonal}
We are left with the sum over terms near the diagonal. For this we will need to consider whether or not $(x,y)\in\tilde{A}_{j,\delta,\gamma,N}^W$, and we will need to use that we know the structure of $\tilde{A}_{j,\delta,\gamma,N}^W$. There are of order $VN\gg N$ pairs $(x,y)$ with $|[x-y]_N|\le V$, which is too many to make the sum $o(N)$ (and we actually need $o(N|I(N)|)$), without knowing where $(P_{I(N)})_{xy}=o(|I(N)|)$ and where we only have $|(P_{I(N)})_{xy}|\le1$. However, the structure of $\tilde{A}_{J,\delta,\gamma,N}^W$ will mean there are only $o(N|I(N)|)$ pairs of $(x,y)$ such that both $|[x-y]_N|\le V$ and $(x,y)\in \tilde{A}_{J,\delta,\gamma,N}^W$. The remaining $VN(1-o(1))$ pairs $(x,y)$ will have the estimate $(P_{I(N)})_{xy}=o(|I(N)|)$, which with the decay in \eqref{eqn:op-est} will be enough to produce the desired $o(N|I(N)|)$ bound.

The estimate on the number of pairs with  $|[x-y]_N|\le V$ and $(x,y)\in\tilde{A}_{J,\delta,\gamma,N}^W$ is
\begin{align*}
\#\{(x,&y)\in\tilde{A}_{J,\delta,\gamma,N}^W,|[x-y]_N|\le V\} \\
&\le\#\{(x,y)\in{A}_{J,\delta,\gamma,N}^W,|[x-y]_N|\le V\}
+\#\{(y,x)\in{A}_{J,\delta,\gamma,N}^W,|[x-y]_N|\le V\}\,+\\
&\quad\;+\#\{(x,y):x\in\da_{J,\delta,\gamma,N},|[x-y]_N|\le V\}
+\#\{(x,y):y\in\da_{J,\delta,\gamma,N},|[x-y]_N|\le V\}\\
&\le 2\,\#\{(x,y)\in{A}_{J,\delta,\gamma,N}^W,|[x-y]_N|\le V\} + 4V\,\#\da_{J,\delta,\gamma,N}.
\end{align*}
We estimate $\#\{(x,y)\in{A}_{J,\delta,\gamma,N}^W,|[x-y]_N|\le V\}$ similarly as in Figure~\ref{fig:area} in Section~\ref{subsec:sets}; to count the number of $(x,y)$ with $|[x-y]_N|\le V$ and $(x,y)\in C_{J,N}^W$, we compute the area of the shaded region in Figure~\ref{fig:Varea}. 
The estimates are just those for $\#\da_{J,\delta,\gamma,N}$ (equation \eqref{eqn:da}) multiplied by $V+1$ (or by $2V$ for the intersection with $B_{J,\delta,\gamma,N}$ in Figure~\ref{fig:discont2}), yielding

\begin{figure}[!ht]
\makebox[\textwidth][c]{
\begin{subfigure}[t]{.5\textwidth}
\captionsetup{width=.9\linewidth}
\includegraphics{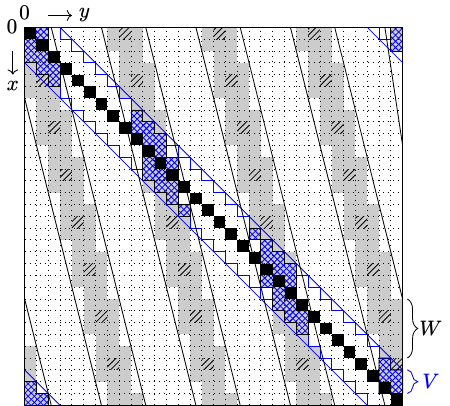}
\caption{Bounding regions for $C_{J,N}^W$ and $\{(x,y)\in C_{J,N}^W\text{ and }|[x-y]_N|\le V\}$.}\label{fig:slopes2}
\end{subfigure}
\hfill
\begin{subfigure}[t]{.5\textwidth}
\captionsetup{width=.9\linewidth}
\includegraphics{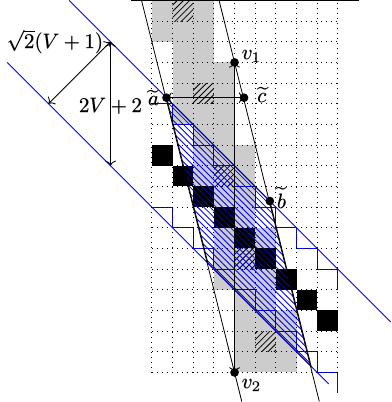}
\caption{Zoomed in diagram of (A), now showing the bounding region with blue northwest hatching.}
\label{fig:Varea}
\end{subfigure}
}
\caption{Coordinates $(x,y)\in C_{J,N}^W$ with $|[x-y]_N|\le V$ with $V=2$.}
\end{figure}

\begin{figure}[!ht]
\includegraphics{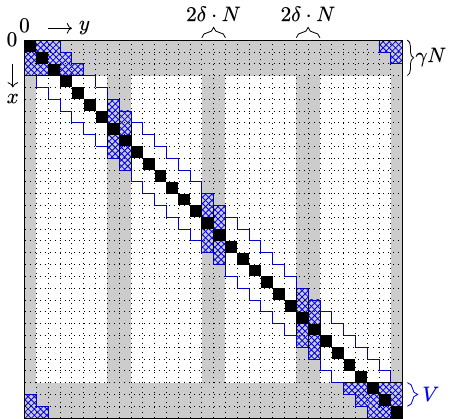}
\caption{Coordinates $(x,y)\in B_{2,\delta,\gamma,N}$ with $|[y-x]_N|\le V$ in blue cross-hatching.}\label{fig:discont2}
\end{figure}

\begin{align*}
\sum_{\substack{(x,y)\in \tilde{A}_{J,\delta,\gamma,N}^W\\x\ne y,\,|[x-y]_N|\le V}}|\langle y|\operatorname{Op}_N^\W(f)|x\rangle||(P_{I(N)})_{xy}|&\le C\|f\|_{C^2}\sum_{\substack{(x,y)\in \tilde{A}_{J,\delta,\gamma,N}^W\\x\ne y,\,|[x-y]_N|\le V}}1\\
&\le C\|f\|_{C^2} C(\gamma NV+2^J\delta NV+2^JWV).
\numberthis\label{eqn:est-diag-bad}
\end{align*}
Evaluating the terms using \eqref{eqn:param2} and $V=\log N$, then \eqref{eqn:est-diag-bad} is $\mathcal{O}(N^{2/3}\log N)=o(N|I(N)|)$.

For the pairs not in $\tilde{A}_{J,\delta,\gamma,N}^W$, we have $(P_{I(N)})_{xy}=|I(N)|\mathcal{O}(\mathcal{R}_\mathrm{od}(N))$, and so using \eqref{eqn:op-est}, 
\begin{align*}
\sum_{\substack{(x,y)\not\in \tilde{A}_{J,\delta,\gamma,N}^W\\x\ne y,\,|[x-y]_N|\le V}}|\langle y|&\operatorname{Op}_N^\W(f)|x\rangle||(P_{I(N)})_{xy}|\\
&\le C_M\|f\|_{C^M}\mathcal{O}(|I(N)|\mathcal{R}_\mathrm{od}(N))\sum_{y=0}^{N-1}\sum_{\substack{x:x\ne y\\|[x-y]_N|\le V}}\left(\frac{1}{|[y-x]_N|^{M-1}}+\frac{1}{N^{M-1}}\right)\\
&\le {C}_M\|f\|_{C^M}\mathcal{O}(|I(N)|\mathcal{R}_\mathrm{od}(N))\left(2Nc_M+\frac{2NV}{N^{M-1}}\right),\numberthis\label{eqn:est-diag-good}
\end{align*}
which is $o(|I(N)|N)$ since $\mathcal{R}_\mathrm{od}(N)\to0$.

Thus combining \eqref{eqn:est-away-diag}, \eqref{eqn:est-diag-bad}, \eqref{eqn:est-diag-good}, we obtain,
\begin{multline}\label{eqn:6.7}
\left|\sum_{\substack{x,y=0\\x\ne y}}^{N-1}\langle y|\operatorname{Op}_N^\W(f)|x\rangle (P_{I(N)})_{xy}\right|\\
\le CN|I(N)|\|f\|_{C^M}\bigg[\frac{1}{|I(N)|}\left(\frac{1}{V^{M-2}}+\frac{1}{N^{M-2}}\right)+\frac{1}{|I(N)|}\left(\gamma V+2^J\delta V+2^JWV/N\right)+\\
+\mathcal{R}_\mathrm{od}(N)\left(1+\frac{V}{N^{M-1}}\right)\bigg],
\end{multline}
which (as we already checked each piece) is $N|I(N)|\|f\|_{C^M}\cdot o(1)$.
Finally, taking $M=3$ and using \eqref{eqn:6.7}, \eqref{eqn:est-diag-terms}, and \eqref{eqn:op-exp}, we obtain the desired result \eqref{eqn:lweyl}.
The error estimate in \eqref{eqn:lweyl} follows as the slowest decay rate in \eqref{eqn:6.7} is $\mathcal{O}(\|f\|_{C^3}\mathcal{R}_{\mathrm{od}}(N))$, and $\mathcal{R}_\mathrm{od}(N)=\frac{1}{|I(N)|J}+r(N)$ with $r(N)\ll\frac{1}{|I(N)|J}$. \qed

\subsection{Windowed quantum ergodicity}\label{subsec:qe}

Once we have the windowed local Weyl law Theorem~\ref{thm:lweyl}, we can replace the usual local Weyl law (which involves an average over all eigenstates) with the windowed version in the standard proof of quantum ergodicity. 
For completeness, and because the discontinuities here require some consideration, we provide the details for Theorem~\ref{thm:qv} and Corollary~\ref{cor:qe} in 
Appendix~\ref{sec:wqe}. 
We also note that the (non-windowed) quantum ergodicity proof in \cite{DNW} used a slightly different approach starting from exponential decay of classical correlations; we use the method described in \cite[\S15.4]{zworski}, \cite{MOK} since it is a bit easier to replace the local Weyl law with the windowed version in this procedure.

\section{Random quasimodes}\label{sec:rw} 

For $I(N)$ with $|I(N)|\log N\to\infty$, let $S_{I(N)}=\operatorname{span}\{\varphi^{(j)}:\theta^{(j)}\in I(N)\}$.
Consider a random quasimode $\psi$ that is a random linear combination of the eigenvectors in $S_{I(N)}$,
\begin{align}\label{eqn:rw}
\psi(x) &= \frac{1}{\sqrt{\dim S_{I(N)}}}\sum_{j:\theta^{(j,N)}\in I(N)} g_j \varphi^{(j,N)}(x),\qquad \text{for }g_i\text{ iid }N_\C(0,1).
\end{align}
By Gaussian concentration, for large $N$ (which implies large $\dim S_{I(N)}$ by Corollary~\ref{cor:weyl}), this normalization means $\psi$ is approximately a random vector chosen according to Haar measure from the unit sphere of $S_{I(N)}$.

Since $I(N)$ are allowed to shrink, we can take $I(N)=[\theta,\theta+o(1)]$ for some fixed $\theta\in\R/(2\pi\Z)$ and a rate $|I(N)|=o(1)$ such that $|I(N)|\log N\to\infty$. In this case the eigenangle is asymptotically fixed as $\theta$.
In general though, we are free to take any sequence of intervals with $|I(N)|\log N\to\infty$.

Taking the expected value over the random coefficients shows the mean of $\psi$ is the zero vector, and gives the covariance relation (including for $x=y$)
\begin{align*}
\E \psi(x)\overline{\psi(y)} &= \frac{1}{\dim S_{I(N)}}(P_{I(N)})_{xy},
\end{align*}
along with $\E\psi(x)\psi(y)=0$, which completely determine the behavior of $\psi$ as a complex Gaussian vector. The point of Theorem~\ref{thm:rw} is that the asymptotic behavior of the matrix entries of $P_{I(N)}$ in Theorem~\ref{thm:Pmat} is enough to imply the desired statistics.

\subsection{Proof of Theorem~\ref{thm:rw}}\label{subsec:rw-proof}
Let $(\Omega_N,\mathbb{P})$ be the probability space from which $\psi$ is drawn.

\subsubsection{Part (i)}
Gaussian value statistics will follow from a characteristic function argument, which was used in \cite{DiaconisFreedman} to solve the ``projection pursuit'' problem. 
The application here is similar to that in \cite{pw}, though is simpler here  since we only consider a single random vector $\psi$, rather than an entire orthonormal basis of random vectors. In this case, we can just use the complex version of the theorem in \cite{DiaconisFreedman} instead of the later quantitative versions developed in \cite{Meckes,ChatterjeeMeckes}. 
(The latter can still be used to form a full basis of random quasimodes with the desired property, similarly as used in Theorem~\ref{thm:walsh-main} or \cite[Theorem 2.5]{pw}.)

\begin{thm}[Adapted from Theorem 1.1 in \cite{DiaconisFreedman}]\label{thm:conv}
Let $N\in\N$, and let $P^{(N)}$ be an $N\times N$ self-adjoint projection matrix onto a subspace $V^{(N)}$ of $\C^N$. 
Suppose there is a function $1\le L(N)\le N$, with $L(N)\to\infty$ as $N\to\infty$, such that for any $\varepsilon>0$, as $N\to\infty$,
\begin{align}
\label{eqn:dfnorm}\frac{1}{N}\#\left\{x\in\intbrr{0:N-1}:\left|\big\|P^{(N)}e_x\big\|_2^2-\frac{L(N)}{N}\right|>\varepsilon\frac{L(N)}{N}\right\} & \to 0\\
\label{eqn:dfinnerproduct}\frac{1}{N^2}\#\left\{(x,y)\in\intbrr{0:N-1}^2:\left|\langle P^{(N)}e_x,P^{(N)}e_y\rangle\right|>\varepsilon\frac{L(N)}{N}\right\}& \to0,
\end{align}
where $e_x$ be the $x$th standard basis vector. Let $z\sim N_\C(0,I_N)$ and set $v:=P^{(N)}z\sim N_\C(0,P^{(N)})$, and define the empirical distribution $\mu_z^{(N)}$ of the coordinates of $v$ scaled by $\sqrt{N/L(N)}$,
\begin{equation*}
\mu_z^{(N)} := \frac{1}{N}\sum_{x=0}^{N-1} \delta_{\frac{\sqrt{N}}{\sqrt{L(N)}}\langle P^{(N)}z,e_x\rangle}.
\end{equation*}
Then $\mu_z^{(N)}$ converges weakly in probability to the standard complex Gaussian $N_\C(0,1)$, i.e. for any $f:\C\to\C$ bounded and continuous and any $\varepsilon>0$,
\begin{equation*}
\mathbb{P}\left[\left|\int f\,d{\mu}_z^{(N)}-\mathbb{E}_{N_\C(0,1)}[f]\right|>\varepsilon\right] \xrightarrow{N\to\infty} 0.
\end{equation*}
\end{thm}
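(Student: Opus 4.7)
The plan is to follow the classical Diaconis--Freedman characteristic function approach, adapted to the complex Gaussian setting. For each $t\in\C$, define the empirical characteristic function
\[
\phi_N(t):=\int e^{i\Re(\bar t\,w)}\,d\mu_z^{(N)}(w)=\frac{1}{N}\sum_{x=0}^{N-1}\exp\!\left(i\Re\!\left(\bar t\sqrt{N/L(N)}\,v_x\right)\right),\quad v_x:=\langle P^{(N)}z,e_x\rangle.
\]
Since the standard complex Gaussian has characteristic function $t\mapsto e^{-|t|^2/4}$, it suffices (by the L\'evy continuity theorem combined with a countable-dense-$t$ argument) to show $\phi_N(t)\to e^{-|t|^2/4}$ in probability for every fixed $t$. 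The standard route is to control the first two moments of $\phi_N(t)$ separately and then apply Chebyshev's inequality.

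First, I would compute $\mathbb{E}\phi_N(t)$. The coordinate $v_x$ is a centered complex Gaussian with $\mathbb{E}|v_x|^2=\|P^{(N)}e_x\|_2^2$, so $\Re(\bar t v_x)$ is a real centered Gaussian with variance $\tfrac12|t|^2\|P^{(N)}e_x\|_2^2$. Hence
\[
\mathbb{E}\exp\!\left(i\Re(\bar t\sqrt{N/L(N)}\,v_x)\right)=\exp\!\left(-\frac{|t|^2}{4}\cdot\frac{N}{L(N)}\|P^{(N)}e_x\|_2^2\right).
\]
By hypothesis \eqref{eqn:dfnorm}, for any $\varepsilon>0$ this exponent lies in $[-\frac{|t|^2}{4}(1+\varepsilon),-\frac{|t|^2}{4}(1-\varepsilon)]$ for all but $o(N)$ coordinates $x$; on the exceptional set of size $o(N)$ the summand is bounded in modulus by $1$. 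Averaging over $x$ gives $\mathbb{E}\phi_N(t)\to e^{-|t|^2/4}$.

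Next, I would bound $\mathrm{Var}\,\phi_N(t)=\frac{1}{N^2}\sum_{x,y}\mathrm{Cov}\!\left(e^{i\Re(\bar t\sqrt{N/L(N)}v_x)},e^{i\Re(\bar t\sqrt{N/L(N)}v_y)}\right)$. The pair $(v_x,v_y)$ is jointly complex Gaussian with $\mathbb{E}v_x\bar v_y=\langle P^{(N)}e_x,P^{(N)}e_y\rangle$, so the covariance of the two exponentials can be expressed explicitly in terms of this inner product and the two diagonal norms, and vanishes whenever $\langle P^{(N)}e_x,P^{(N)}e_y\rangle=0$. A short computation (writing $(v_x,v_y)$ in terms of two independent Gaussians and expanding to first order in the off-diagonal) shows the covariance is $O\!\left(\frac{N}{L(N)}|\langle P^{(N)}e_x,P^{(N)}e_y\rangle|\right)$ uniformly, while trivially it is at most $2$. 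By \eqref{eqn:dfinnerproduct}, the set of $(x,y)$ on which the first bound exceeds any fixed $\varepsilon$ has size $o(N^2)$, so one splits the double sum into this exceptional $o(N^2)$ set (controlled by the trivial bound) and the rest (controlled by $\varepsilon$), yielding $\mathrm{Var}\,\phi_N(t)=o(1)$.

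Combining the two estimates, Chebyshev gives $\phi_N(t)\to e^{-|t|^2/4}$ in probability for each $t\in\C$. Taking a countable dense set of $t$'s and using the uniform bound $|\phi_N(t)|\le1$ together with equicontinuity (both $\phi_N$ and the limit are $1$-Lipschitz in $t$ when test coordinates are bounded, which one can enforce by truncation), the convergence is simultaneous on compact sets with probability $\to1$, so the continuity theorem gives weak convergence in probability as claimed. The main technical nuisance I expect is keeping track of the two exceptional sets (those failing \eqref{eqn:dfnorm} and those failing \eqref{eqn:dfinnerproduct}) through the covariance computation, since one must avoid a spurious loss like $\frac{N}{L(N)}$ coming from the normalization when estimating off-diagonal terms; the saving comes from the joint Gaussian covariance scaling with $\langle P^{(N)}e_x,P^{(N)}e_y\rangle$ itself, which is what the hypothesis precisely controls.
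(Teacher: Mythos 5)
Your proposal is correct and takes essentially the same route as the paper, which simply adapts Theorem 1.1 of \cite{DiaconisFreedman}: that proof is exactly the characteristic-function argument you give, with hypothesis \eqref{eqn:dfnorm} controlling the mean of the empirical characteristic function and \eqref{eqn:dfinnerproduct} controlling its variance via the joint-Gaussian covariance formula, followed by Chebyshev. The only loosely sketched point is the final passage from pointwise-in-$t$ convergence in probability of the characteristic functions to weak convergence in probability, but this is a standard step (handled in \cite{DiaconisFreedman} as well) and your truncation/subsequence outline can be completed without new ideas.
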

In our case we set $L(N)=\dim S_{I(N)}$ and $P^{(N)}=P_{I(N)}$. From the definition of $\psi(x)$, we see that $\frac{\sqrt{N}}{\sqrt{L(N)}}\langle P^{(N)}z,e_x\rangle\sim \sqrt{N}\psi(x)$ in distribution. By Corollary~\ref{cor:weyl}, $\frac{L(N)}{N}=\frac{1}{N}\dim S_{I(N)}=\frac{|I(N)|}{2\pi}(1+o(1))$. Then
Theorem~\ref{thm:Pmat} shows the conditions \eqref{eqn:dfnorm} and \eqref{eqn:dfinnerproduct} hold. 
Theorem~\ref{thm:conv} then implies the empirical measure $\frac{1}{N}\sum_{x=0}^{N-1}\delta_{\sqrt{N}\psi(x)}$ converges weakly in probability to $N_\C(0,1)$.

\subsubsection{Part (ii)}\label{subsubsec:hw}

This follows from the Hanson--Wright inequality (specifically, the version \cite[Thoerem 1.1]{RudelsonVershynin-HW}) and windowed generalized Weyl law Theorem~\ref{thm:lweyl}.
Let $d=\dim S_{I(N)}$, and let $M_{I(N)}$ be the $N\times d$ matrix whose columns are the eigenvectors $\varphi^{(j)}$ in $S_{I(N)}$, so that $P_{I(N)}=M_{I(N)}M_{I(N)}^\dagger$ and we can set $\psi=\frac{1}{\sqrt{d}}M_{I(N)}g$ where $g\sim N_\C(0,I_d)$. Then
\begin{align*}
\langle \psi|\operatorname{Op}_N^\W(a)|\psi\rangle_{\C^N} &= \frac{1}{d}\langle g|M_{I(N)}^\dagger\operatorname{Op}_N^\W(a)M_{I(N)}|g\rangle_{\C^d}.
\end{align*}
The Hanson--Wright inequality gives concentration about the mean for such a quadratic form. Using Theorem~\ref{thm:lweyl}, the mean is 
\begin{align*}
\E\left[\frac{1}{d}\langle g|M_{I(N)}^\dagger\operatorname{Op}_N^\W(a)M_{I(N)}|g\rangle_{\C^d}\right]&= \frac{1}{d}\operatorname{Tr}\big(M_{I(N)}^\dagger\operatorname{Op}_N^\W(a)M_{I(N)}\big) \\
&= \frac{1}{d}\sum_{\theta^{(j)}\in I(N)}\langle\varphi^{(j)}|\operatorname{Op}_N^\W(a)|\varphi^{(j)}\rangle
=\int_{\T^2}a(\x)\,d\x +\mathfrak{r}_N(a),
\end{align*}
where $|\mathfrak{r}_N(a)|\le \mathfrak{r}_N(1+\|a\|_{C^3})$
for some $\mathfrak{r}_N\to0$.
By the Hanson--Wright inequality applied with $g\sim N_\C(0,I_d)$, then for $t\ge 2|\mathfrak{r}_N(a)|$ so that $t-|\mathfrak{r}_N(a)|\ge t/2$,
\begin{align*}
\P\left[\left|\langle\psi|\operatorname{Op}_N^\W(a)|\psi\rangle - \int_{\T^2}a(\x)\,d\x\right|>t\right] &\le \P\left[\left|\langle\psi|\operatorname{Op}_N^\W(a)|\psi\rangle - \E\langle \psi|\operatorname{Op}_N^\W(a)|\psi\rangle\right|>t-|\mathfrak{r}_N(a)|\right]\\
&\le 2\exp\left[-C\min \left(\frac{t^2}{\|a\|_{C^2}^2},\frac{t}{\|a\|_{C^2}}\right)d\right].
\end{align*}
Recall $d=\frac{|I(N)|N}{2\pi}(1+o(1))$, and let $\varepsilon_N=\max(2\mathfrak{r}_N,d^{-1/4})\to0$. Then since $\varepsilon_N(1+\|a\|_{C^3})\ge 2|\mathfrak{r}_N(a)|$,
\begin{align}\label{eqn:hwbound}
\P\left[\left|\langle\psi|\operatorname{Op}_N^\W(a)|\psi\rangle - \int_{\T^2}a(\x)\,d\x\right|>(1+\|a\|_{C^3})\varepsilon_N\right] 
&\le 2\exp\left[-C|I(N)|^{1/2}N^{1/2}\right].
\end{align}
Finally, letting $(a_\ell)_{\ell=1}^\infty$ be a countable dense set in $C^\infty(\T^2)$ in the $C^3$ norm, define
\begin{align*}
\Gamma_N=\left\{\psi_N\in\Omega_N:\forall \ell\in\intbrr{1:N},\,\left|\langle\psi_N|\operatorname{Op}_N^\W(a_\ell)|\psi_N\rangle-\int_{\T^2}a_\ell(\x)\,d\x\right|\le(1+\|a_\ell\|_{C^3})\varepsilon_N\right\}.
\end{align*}
Then $\P[\Gamma_N^c]\le 2N\exp(-C|I(N)|^{1/2}N^{1/2})\to0$ by \eqref{eqn:hwbound}, and using $\|\operatorname{Op}_N^\W(a-a_\ell)\|\le C\|a-a_\ell\|_{C^2}$, one obtains for any sequence of $\psi_N$ with $\psi_N\in\Gamma_N$, that
\begin{align*}
\lim_{N\to\infty}\langle \psi_N |\operatorname{Op}_N^\W(a)|\psi_N\rangle = \int_{\T^2}a(\x)\,d\x,\quad \forall a\in C^\infty(\T^2). 
\end{align*}

\subsubsection{Part (iii)}

Since $\psi(x)\sim \frac{1}{\sqrt{\dim S_{I(N)}}}N_\C(0,(P_{I(N)})_{xx})$, then for $x\not\in\da_{J,\delta,\gamma,N}^W$ with parameters as in \eqref{eqn:param2}, \eqref{eqn:P-diag} implies,
\begin{align}
\E|\sqrt{N}\psi(x)|^{m} &= \frac{N^{m/2}}{(\dim S_{I(N)})^{m/2}}(P_{I(N)})_{xx}^{m/2}\,\E|g|^m=\E|g|^m(1+o(1)),
\end{align}
where $g\sim N_\C(0,1)$.

The autocorrelation function computations are immediate by Isserlis' (or Wick's) theorem. For $(x,y)\not\in\tilde{A}_{J,\delta,\gamma,N}^W$ with $x\ne y$, Isserlis' theorem for complex Gaussians followed by equations \eqref{eqn:P-diag} and \eqref{eqn:P-offdiag} implies,
\begin{align*}
\E[|\psi(x)|^2|\psi(y)|^2] &= \E[|\psi(x)|^2]\E[|\psi(y)|^2]+\E\big[\psi(x)\overline{\psi(y)}\big]\E\big[\overline{\psi(x)}\psi(y)\big]\\
&=\frac{1}{(\dim S_{I(N)})^2}\left[\left(\frac{|I(N)|}{2\pi}\right)^2(1+o(1))+o(|I(N)|^2)\right],
\end{align*}
and so 
\begin{align}
\E[N^2|\psi(x)|^2|\psi(y)|^2]&=1+o(1).
\end{align} 
This matches in the limit $N\to\infty$ with the value for $g,g'\sim N_\C(0,1)$ iid, which is  $\E[|g|^2|g'|^2]=1$.

For $x\in\da_{J,\delta,\gamma,N}^W$ or $(x,y)\in\tilde{A}_{J,\delta,\gamma,N}^W$, we do not necessarily have the above standard Gaussian behavior, as the covariances in $P_{I(N)}$ may be different (e.g. see Figure~\ref{fig:P} numerically). Here we demonstrate an explicit example where the variance $(P_{I(N)})_{xx}$ is not $\frac{|I(N)|}{2\pi}(1+o(1))$, and so the moments $\E|\sqrt{N}\psi(x)|^m$ will be a different value.
\begin{prop}[exceptional coordinate]\label{lem:notall}
Let $I(N)=[-\pi/2,\pi/2]$. Then there is a sequence of $N\to\infty$ such that 
\begin{align}
(P_{I(N)})_{00}\ge 0.89182655+o(1).
\end{align}
\end{prop}
\begin{proof}[Proof of Proposition~\ref{lem:notall}]
We first show that for any $k\in\N$ with $2^k$ dividing $N$, that $(\hat{B}_N^k)_{00}=2^{-k/2}$. In fact we will show recursively that the first row $(\hat{B}_N^k)_{0y}$ is $2^{-k/2}\delta_{y\in \frac{N}{2^k}\Z}$ for such $k$. Let $r_j=\frac{N}{2^j}\in\N$ for any $1\le j\le k$. We note that for any \emph{even} row index $x=0,2,\ldots,N-2$ of $\hat{B}_N$, that $(\hat{B}_N)_{xy}=\frac{1}{\sqrt{2}}(\delta_{x/2}(y)+\delta_{x/2+N/2}(y)) $.
Now assuming $(\hat{B}_N^{k-1})_{0y}=\left(\frac{1}{2^{(k-1)/2}}\delta_{y\in r_{k-1}\Z}\right)_y$, we compute,
\begin{align*}
(\hat{B}_N^k)_{0y} &= \sum_{\ell=0}^{N-1}\langle 0|\hat{B}_N^{k-1}|\ell\rangle\langle\ell|\hat{B}_N|y\rangle \\
&=\frac{1}{2^{(k-1)/2}}\sum_{j=0}^{2^{k-1}-1}\langle jr_{k-1}|\hat{B}_N|y\rangle
=\frac{1}{2^{k/2}}\sum_{j=0}^{2^{k-1}-1}(\delta_{r_{k-1}j/2}(y)+\delta_{r_{k-1}j/2+N/2}(y)),
\end{align*}
since $r_{k-1}=\frac{N}{2^{k-1}}=2r_k$ is even. Since $\frac{r_{k-1}j}{2}=r_kj$, we see the above is $(\hat{B}_N^k)_{0y}=2^{-k/2}\delta_{y\in r_k\Z}$, as desired. 

Now for convenience, take $N=2^K$, so that $r_j=\frac{N}{2^j}\in\N$ for any $j\le K$. (One can take other sequences of $N$ and apply the same argument, as long as the largest power of $2$ dividing $N$ is growing.) 
Now we apply the argument in \cite[\S9]{pw}, using the piecewise continuous approximation $h_{\delta}$ to $\Chi_{[-\pi/2,\pi/2]}$,
\[
h_\delta(x)=\begin{cases}
1,&-\frac{\pi}{2}+\delta\le x\le\frac{\pi}{2}-\delta\\
\frac{1}{\delta}\left(x+\frac{\pi}{2}\right),&-\frac{\pi}{2}\le x\le-\frac{\pi}{2}+\delta\\
-\frac{1}{\delta}\left(x-\frac{\pi}{2}\right),&\frac{\pi}{2}-\delta\le x\le\frac{\pi}{2}\\
0,&|x|\ge\frac{\pi}{2}
\end{cases}.
\]
Computing with the $K$th partial Fourier sums $S_K h_\delta$ for $\delta=K^{-3/4}$ eventually implies (see \cite[\S9]{pw}),
\begin{align*}
(P_{I(N)})_{00} \ge (h_\delta(\hat{B}_N))_{00} &=((S_Kh_\delta)(\hat{B}_N))_{00}+o(1)\\
&\ge 0.89182655-o(1).
\end{align*}
\end{proof}
One also expects off-diagonal coordinates $(x,y)$ corresponding to the graph of small powers of the classical map $B$ to show deviations, for example those seen in Figure~\ref{fig:P}.
We note that the proof of Proposition~\ref{lem:notall} here is specific to the Balasz--Voros quantization $\hat{B}_N$. However similar pictures as Figure~\ref{fig:P} for e.g. the Saraceno quantization \cite{Saraceno} suggest similar exceptional coordinates exist as well.

\subsubsection{Part (iv)}
Using that $\psi(x)\sim \frac{1}{\sqrt{\dim S_{I(N)}}}N_\C(0,(P_{I(N)})_{xx})$ and that $\#\da_{J,\delta,\gamma,N}^W=\mathcal{O}(N^{2/3})$ for parameters in \eqref{eqn:param2},
then with Theorem~\ref{thm:Pmat},
\begin{align*}
\E\|\psi\|_p^p &= \sum_{x\not\in\da_{J,\delta,\gamma,N}^W}\E|\psi(x)|^p+\sum_{x\in\da_{J,\delta,\gamma,N}^W}\E|\psi(x)|^p \\
&=N(1-o(1))\frac{\E|g|^p}{N^{p/2}}+\mathcal{O}(N^{2/3})\frac{\E|g|^p(2\pi)^{p/2}}{N^{p/2}|I(N)|^{p/2}}\\
&=\frac{\E|g|^p}{N^{p/2-1}}\left[1-o(1)+\frac{\mathcal{O}(N^{-1/3})(2\pi)^{p/2}}{|I(N)|^{p/2}}\right]
=\frac{\E|g|^p}{N^{p/2-1}}\left[1+o(1)\right],
\end{align*}
since $|I(N)|\gg\frac{1}{\log N}$ so $N^{1/3}|I(N)|^{p/2}\to\infty$ for any $0<p<\infty$.

For $p=\infty$, we just use that the expected maximum of $N$ centered real subgaussian variables is upper bounded by $\sqrt{2\sigma_\mathrm{max}^2\log N}$, which holds without any covariance relation assumptions. Here we have $N$ (complex) Gaussians $\psi(x)\sim \frac{1}{\sqrt{\dim S_{I(N)}}}N_\C(0,(P_{I(N)})_{xx})$, for $x\in\intbrr{0:N-1}$. Thus since $\sigma_x^2=\frac{(P_{I(N)})_{xx}}{\dim S_{I(N)}}\le\frac{1}{\dim S_{I(N)}}=\frac{2\pi}{N|I(N)|}(1+o(1))$,
\begin{align*}
\E\left[\max_{x\in\intbrr{0:N-1}}|\psi(x)|\right] &\le \frac{C\sqrt{\log N}}{\sqrt{N|I(N)|}}(1+o(1))= \frac{o(\log N)}{\sqrt{N}},
\end{align*}
since $\frac{1}{|I(N)|}=o(\log N)$.

\subsubsection{Part (v)}
Let $\mathfrak{Z}_\psi:=\{x\in\Z_N:\Re\psi(x)=0\text{ or }\Im\psi(x)=0\}$. Such a zero value can only happen (with nonzero probability) if $(P_{I(N)})_{xx}=0$; if $(P_{I(n)})_{xx}\ne0$ then we may assume $\Re\psi(x)\ne0,\Im\psi(x)\ne0$.
 We choose parameters as in \eqref{eqn:param2} and assume $N$ is large enough that the asymptotics \eqref{eqn:P-diag} start to kick in, so that $(P_{I(N)})_{xx}\ne0$ for all $x\not\in\da_{J,\delta,\gamma,N}^W$, and so $\mathfrak{Z}_\psi\subseteq\da_{J,\delta,\gamma,N}^W$.

First we want to show for any $f\in C(\R/\Z)$, that $\frac{2}{N}\E\sum_{x\in Z_{N,\psi}^r}f(x/N)\to\int_0^1f(t)\,dt$ as $N\to\infty$. Write,
\begin{align*}
\E\sum_{x\in Z_{N,\psi}^r}f\Big(\frac{x}{N}\Big) &= \sum_{x=0}^{N-1}f\Big(\frac{x}{N}\Big)\E[\oneb_{Z_{N,\psi}^r}(x)]\\
&=\sum_{\substack{x\in\Z_N:\\(x,x+1)\not\in \tilde{A}_{J,\delta,\gamma,N}^W}}f\Big(\frac{x}{N}\Big)\E[\oneb_{Z_{N,\psi}^r}(x)] +\mathcal{O}\big(\|f\|_\infty \#\{x:(x,x+1)\in\tilde{A}_{J,\delta,\gamma,N}^W\}\big),
\end{align*}
where we chose the parameters $J,\delta,\gamma,W$ as in \eqref{eqn:param2} in Theorem~\ref{prop:mpowers}.
The same computation as for estimating the size of $\da_{J,\delta,\gamma,N}^W$ (since both are just counting intersections with a line of slope $-1$) gives the estimate, 
\begin{align*}
\#\{x:(x,x+1)\in\tilde{A}_{J,\delta,\gamma,N}^W\}&\le\#\{x:(x,x+1)\in\tilde{A}_{J,\delta,\gamma,N}^W\}\\
&\le 4\#\da_{J,\delta,\gamma,N}^W.
\end{align*}
This is $o(N^{2/3})$ by the choice of parameters \eqref{eqn:param2}.
Then for $(x,x+1)\not\in\tilde{A}_{J,\delta,\gamma,N}^W$, we have $x,x+1\not\in\da_{J,\delta,\gamma,N}^W$ and $(P_{I(N)})_{xx},(P_{I(N)})_{x+1,x+1}\ne0$, then
\begin{align*}
\E[\oneb_{Z_{N,\psi}^r}(x)]&=\P[\Re\psi(x),\Re\psi(x+1)\text{ have different signs}]\\
&=2\P[\Re\psi(x)>0,\Re\psi(x+1)<0].
\end{align*}
We can explicitly compute this in terms of a $2\times 2$ submatrix of $P_{I(N)}$, corresponding to the coordinates $x$ and $x+1\;\mathrm{mod}\;N$. Since $\psi\sim N_{\C^N}(0,P_{I(N)},0)$, then $\Re\psi\sim N(0,\frac{1}{2}\Re P_{I(N)})$, and $(\Re\psi(x),\Re\psi(x+1))\sim N(0,\Sigma)$, where $\Sigma=\begin{pmatrix}(P_{I(N)})_{xx}&\Re (P_{I(N)})_{x,x+1}\\
\Re (P_{I(N)})_{x+1,x}&(P_{I(N)})_{x+1,x+1}\end{pmatrix}$.
The diagonal of $P_{I(N)}$ is real, and for the off-diagonal all we care is that it is small, so a bound like $|(\Re P_{I(N)})_{xy}|\le |(P_{I(N)})_{xy}|=o(|I(N)|)$ will be sufficient.

One explicitly computes that for $(X,Y)\sim N(0,\Sigma)$, that
\begin{align*}
\P[X>0,Y<0] &= \frac{1}{2}-\frac{1}{2\pi}\cos^{-1}\left(\frac{-\Sigma_{12}}{\sqrt{\Sigma_{11}\Sigma_{22}}}\right).
\end{align*}
Then for $(x,x+1)\not\in\tilde{A}_{J,\delta,\gamma,N}$, 
\begin{align*}
\P[\Re\psi(x)>0,\Re\psi(x+1)<0] &= \frac{1}{2}-\frac{1}{2\pi}\left(\frac{\pi}{2}+o(1)\right) = \frac{1}{4}+o(1).
\end{align*}
Thus
\begin{align*}
\frac{2}{N}\E\sum_{x\in Z_{N,\psi}^r}f\Big(\frac{x}{N}\Big) &= \frac{2}{N}\sum_{\substack{x\in\Z_N:\\(x,x+1)\not\in \tilde{A}_{J,\delta,\gamma,N}^W}}f\Big(\frac{x}{N}\Big)\Big(\frac{1}{2}+o(1)\Big)+\mathcal{O}(\|f\|_\infty N^{-1/3})\\
&= (1+o(1))\frac{1}{N}\sum_{x=0}^{N-1}f\Big(\frac{x}{N}\Big)+\mathcal{O}(\|f\|_\infty N^{-1/3})\xrightarrow{N\to\infty}\int_0^1f(x)\,dx,
\end{align*}
as desired.
Since $\Im\psi\sim N(0,\frac{1}{2}\Re P_{I(N)})$ as well, the same result holds with $Z_{N,\psi}^i$ in place of $Z_{N,\psi}^r$. 
To obtain the mean number of sign changes, take $f\equiv1$ in the above. \qed

One could in principle compute other quantities such as moments of the $Z_{N,\psi}^r$ in terms of the matrix entries of $P_{I(N)}$, though the asymptotics we use from Theorem~\ref{thm:Pmat} for the mean are in general likely not enough; one should need more precise information on the off-diagonal values of $P_{I(N)}$.

\section{Random eigenstates of the Walsh quantized baker map}\label{sec:walsh}

\subsection{Walsh quantization background}

We provide an overview on Walsh quantization on the torus here. For further details, see \cite{AN}. 
The Walsh transform $W_{D^k}$ and Walsh quantization $B_k^\Wa$ of the $D$-baker map were written out in Section~\ref{sec:walshintro}. Here we define Walsh coherent states and quantization of observables. Recall we view the Hilbert space $\mathcal{H}_N$ for $N=D^k$ as the $k$-fold tensor product $(\C^D)^{\otimes k}$.

Let $\ell\in\intbrr{0:k}$, and for $\eta\in\intbrr{0:D-1}$, let $|\eta\rangle\in\C^D$ be the standard basis vector for the $\eta$th coordinate.
For $\varepsilon=\varepsilon_1\ldots\varepsilon_\ell\in\intbrr{0:D-1}^\ell$ and $\varepsilon'=\varepsilon_{\ell+1}\ldots\varepsilon_{k}\in\intbrr{0:D-1}^{k-\ell}$, the $(k,\ell)$-coherent state $|\varepsilon'\cdot\varepsilon\rangle$ is
\begin{align*}
|\varepsilon'\cdot\varepsilon\rangle = |{\varepsilon_1}\rangle\otimes\cdots\otimes |{\varepsilon_\ell}\rangle\otimes\hat{F}_D^\dagger |{\varepsilon_k}\rangle\otimes\cdots\otimes\hat{F}_D^\dagger |{\varepsilon_{\ell+1}}\rangle.
\end{align*}
(The notation with the separator $\cdot$ is reminiscent of viewing $B$ using symbolic dynamics with a decimal place separating $q$ and $p$.
While the symbol notation and indexing used here for $\varepsilon'$ may not be standard, it is convenient for keeping track of cyclic rotations in the proof of Proposition~\ref{prop:walsh-powers}.)
When $\ell=k$, then this reduces to the position basis, and in this case we may denote position coordinates by $x$ or $y$ ranging from $0$ to $D^k-1$, rather than by $|\varepsilon\rangle$.
We will in this case use the relation $x=\sum_{m=1}^k\varepsilon_m D^{k-m}$.
The coherent state $|\varepsilon'\cdot\varepsilon\rangle$ is localized on a quantum rectangle $[\varepsilon'\cdot\varepsilon]=\{(q,p)\in\T^2:
b(\varepsilon)\le q<b(\varepsilon)+ D^{-\ell},b(\varepsilon')\le p<b(\varepsilon')+D^{-(k-\ell)}\}$ of area $D^{-k}$, where $b(\delta_1,\ldots,\delta_j)=\sum_{i=1}^j\delta_i D^{-i}$ is the value of the $D$-ary number with leading digits $\delta_1,\ldots,\delta_j$ followed by zeros.
The set of all $(k,\ell)$-coherent states forms an orthonormal basis of $\mathcal{H}_{D^k}$. The index set for the coherent states or quantum rectangles will be denoted
\begin{align*}
\mathcal{R}^{k,\ell}:=\{[\varepsilon'\cdot\varepsilon]:\varepsilon\in\intbrr{0:D-1}^\ell,\,\varepsilon'\in\intbrr{0:D-1}^{k-\ell}\}.
\end{align*}
The Walsh--anti-Wick quantization of a classical observable $a\in \mathrm{Lip}(\T^2)$ is
\begin{align*}
\operatorname{Op}_{k,\ell}(a):=D^k\sum_{[\varepsilon'\cdot\varepsilon]\in\mathcal{R}^{k,\ell}}|\varepsilon'\cdot\varepsilon\rangle\langle\varepsilon'\cdot\varepsilon|\int_{[\varepsilon'\cdot\varepsilon]}a(\x)\,d\x.
\end{align*}
With these definitions, it was shown in \cite[\S3]{AN} that $B_k^\Wa$ satisfies a classical-quantum correspondence principle (Egorov theorem) and quantum ergodic theorem in the semiclassical limit $\ell(k)\to\infty$, $k-\ell(k)\to\infty$ as $k\to\infty$. Additionally, due to the tensor product structure,
$(B_k^\Wa)^k=(\hat{F}_D^\dagger)^{\otimes k}$,
so that for $D=2$, $(B_k^\Wa)^{2k}=I_{2^k}$, and for $D\ge3$, $(B_k^\Wa)^{4k}=I_{D^k}$. The eigenvalues of the $D^k\times D^k$ matrix $B_k^\Wa$ are thus $(4k)$th roots of unity for $D\ge3$ and $(2k)$th roots of unity for $D=2$. Each eigenspace has high degeneracy, with the same leading order dimension (Corollary~\ref{cor:mult}).

\subsection{Main intermediate results}
The main result we need to prove Theorem~\ref{thm:walsh-main} is the following projection matrix estimates. The proof to then go from Theorems~\ref{thm:walsh-proj} and \ref{thm:walsh-lweyl} below to Theorem~\ref{thm:walsh-main} is similar to the proof of Theorem~\ref{thm:rw} or to the proof of \cite[Theorem 2.5]{pw}. 

\begin{thm}[Projection matrix estimates]\label{thm:walsh-proj}
Let $D\ge3$. For each $k$ choose an arbitrary $\ell=\ell(k)$, and consider the family of $(k,\ell)$-coherent states. For $\jj=0,\ldots,4k-1$, let $P_\jj$ be the orthogonal projection onto the eigenspace $E_{(\jj)}$ of $e^{2\pi i\jj/(4k)}$. There is a subset $G_{k,\ell}$ of $(k,\ell)$-coherent states $|\varepsilon'\cdot\varepsilon\rangle$ for which the following the diagonal estimates hold for any $\jj\in\intbrr{0:4k-1}$ in the limit $k\to\infty$,
\begin{align}\label{eqn:wdiag}
\langle \varepsilon'\cdot\varepsilon|P_\jj|\varepsilon'\cdot\varepsilon\rangle &= \frac{1}{4k}(1+o(1)), \quad|\varepsilon'\cdot\varepsilon\rangle\in G_{k,\ell},
\end{align}
and $\#G_{k,\ell}\ge D^k\left(1-o\big(\frac{1}{4k}\big)\right)$.
We also have the off-diagonal estimates for all $\jj$,
\begin{align}\label{eqn:woffdiag}
\langle \delta'\cdot\delta|P_\jj|\varepsilon'\cdot\varepsilon\rangle &=o\bigg(\frac{1}{4k}\bigg), \quad |\delta'\cdot\delta\rangle\ne|\varepsilon'\cdot\varepsilon\rangle,\; (|\delta'\cdot\delta\rangle,|\varepsilon'\cdot\varepsilon\rangle)\in GP_{k,\ell}\subseteq \intbrr{1:D^k}^2,
\end{align}
where $GP_{k,\ell}$ is a set of $(k,\ell)$-coherent state pairs with $\#GP_{k,\ell}\ge (D^k)^2\left(1-o\big(\frac{1}{4k}\big)\right)$, again in the limit $k\to\infty$.

Additionally, specializing to the position basis, there are at least $D^k\left(1-o\big(\frac{1}{4k}\big)\right)$ coordinates $x\in\intbrr{1:D^k}$ such that both $\langle x|P_\jj|x+1\rangle=o\big(\frac{1}{4k}\big)$ and $\langle x+1|P_\jj|x\rangle=o\big(\frac{1}{4k}\big)$.

All rates of decay above depend only on $k$ and $D$, and can be taken independent of $\jj$, $\ell$, and of the particular element in $G_{k,\ell}$ or $GP_{k,\ell}$.

When $D=2$, all of the above hold with $4k$ replaced by $2k$.
\end{thm}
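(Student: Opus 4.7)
The plan is to apply the spectral projection formula
\begin{equation*}
P_\jj = \frac{1}{T}\sum_{m=0}^{T-1} e^{-2\pi i\jj m/T}(B_k^\Wa)^m,
\end{equation*}
with $T=4k$ when $D\ge 3$ and $T=2k$ when $D=2$, and control the matrix elements of each power $(B_k^\Wa)^m$ using the explicit tensor-product action. Iterating \eqref{eqn:walsh-action} yields the formula $(B_k^\Wa)^m(v^{(1)}\otimes\cdots\otimes v^{(k)}) = (\hat F_D^\dagger)^{N_1}v^{(j(1))}\otimes\cdots\otimes(\hat F_D^\dagger)^{N_k} v^{(j(k))}$, where $j(i)=((i+m-1)\bmod k)+1$ is a cyclic shift by $m$ and, writing $m = qk+r$ with $0\le r<k$, $N_i = q+1$ when $j(i)\le r$ and $N_i = q$ otherwise. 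Substituting a $(k,\ell)$-coherent state $|\varepsilon'\cdot\varepsilon\rangle = w_1\otimes\cdots\otimes w_k$ and using identities such as $\langle \hat F_D^\dagger e_a|(\hat F_D^\dagger)^n|e_b\rangle = ((\hat F_D^\dagger)^{n-1})_{ab}$ makes any matrix element of $(B_k^\Wa)^m$ factor as a product of $k$ single-qudit quantities $((\hat F_D^\dagger)^{n_i'})_{a_i,b_i}$, where $n_i'$ shifts $N_i$ by $0$ or $\pm 1$ according to the position/momentum type of each end.

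Because $\hat F_D$ has order $4$, each such factor is either a Kronecker delta ($n_i'\equiv 0\bmod 4$), a reversal delta $\delta_{a_i,-b_i}$ ($n_i'\equiv 2$), or has magnitude $D^{-1/2}$ ($n_i'$ odd). Hence every diagonal entry $\langle\phi|(B_k^\Wa)^m|\phi\rangle$ equals either $0$ or $D^{-p_m/2}$, where $p_m$ is the number of odd-exponent positions. A direct count shows that for $m=qk+r$ with $r\in\{1,\ldots,k-1\}$ one has $p_m\in\{r,k-r\}$, while the delta constraints force equalities (or sign-flipped equalities) among the $\varepsilon$'s, so at most $D^{k-\min(r,k-r)}$ coherent states contribute a nonzero value; this produces the identity $\sum_\phi|\langle\phi|(B_k^\Wa)^m|\phi\rangle|^2 = O(1)$ uniformly in such $m$. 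For the tensor-power exponents $r=0$, $(B_k^\Wa)^{qk}$ equals $(\hat F_D^\dagger)^{\otimes k}$ (for $q=1,3$) with every diagonal of magnitude $D^{-k/2}$, or $R^{\otimes k}$ (for $q=2$, only present when $D\ge 3$) with nonzero diagonal at only the $\le 2^k$ coherent states whose every $\varepsilon_i$ lies in $\{0,D/2\}$. Summing yields
\begin{equation*}
\sum_\phi\sum_{m=1}^{T-1}|\langle\phi|(B_k^\Wa)^m|\phi\rangle|^2 = O(k+2^k),
\end{equation*}
which is $o(D^k/k^2)$ for $D\ge 3$ and $O(k)=o(2^k/k^2)$ for $D=2$ (the $q=2$ term being absent since $R=\hat F_2^2=I$). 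Markov's inequality then produces a good set $G_{k,\ell}$ with $|G_{k,\ell}^c|=o(D^k/(4k))$ on which $\sum_{m\ne 0}|\langle\phi|(B_k^\Wa)^m|\phi\rangle|^2$ is exponentially small, and Cauchy--Schwarz upgrades this to $\sum_{m=1}^{T-1}|\langle\phi|(B_k^\Wa)^m|\phi\rangle|=o(1)$; substituting into the spectral formula gives the diagonal bound \eqref{eqn:wdiag} uniformly in $\jj$.

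The off-diagonal bound \eqref{eqn:woffdiag} follows from the identical factorization analysis applied to $\langle\delta'\cdot\delta|(B_k^\Wa)^m|\varepsilon'\cdot\varepsilon\rangle$, with the $m=0$ term now vanishing by orthogonality, combined with a Markov argument over pairs to produce the set $GP_{k,\ell}$ of the claimed size. For the final position-basis neighbor statement, one uses that for at least a $1-1/D$ fraction of $x\in\intbrr{0:D^k-1}$ the increment $x\mapsto x+1$ modifies only the last $D$-ary digit, so $\langle x|(B_k^\Wa)^m|x+1\rangle$ admits the same factorized form as a diagonal element except at a single coordinate mismatch, and the same Markov/counting argument eliminates all but $o(D^k/(4k))$ remaining bad $x$. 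The main technical obstacle is verifying the bound $\sum_\phi|\langle\phi|(B_k^\Wa)^m|\phi\rangle|^2=O(1)$ \emph{uniformly} in $\ell$: the effective exponents $n_i'$ depend on where the position/momentum boundary $\ell$ sits relative to the cyclic shift $j(\cdot)$, so $p_m$ and the delta-constraint dimension must be recomputed in each $(k,\ell)$ block. The saving feature is that the boundary contributes only $O(1)$ extra type-mismatches per cycle of the shift, so the exponential separation between the magnitude $D^{-p_m/2}$ and the constraint dimension $D^{k-\min(r,k-r)}$ is preserved up to universal constants, and the argument proceeds identically for both $D\ge 3$ and $D=2$.
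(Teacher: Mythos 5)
Your diagonal and off-diagonal arguments are essentially sound and follow the same skeleton as the paper: write $P_\jj$ exactly as the average $\frac{1}{4k}\sum_{m=0}^{4k-1}e^{-2\pi i \jj m/(4k)}(B_k^\Wa)^m$, analyze each power through the tensor-product action, and then discard few coherent states (the paper does this by excluding the states where the entries of the ``dangerous'' powers near $m=1$, $2k$, $4k$ are nonzero, using exact counts; you do it by a second-moment/Markov selection plus Cauchy--Schwarz, which also works, and for the off-diagonal pairs even crude unitarity $\sum_{\phi,\psi}|\langle\psi|(B_k^\Wa)^m|\phi\rangle|^2=D^k$ suffices). One point needs tightening in the diagonal step: your stated count ``at most $D^{k-\min(r,k-r)}$ contributing states'' together with ``magnitude $D^{-p_m/2}$, $p_m\in\{r,k-r\}$'' does not by itself give $\sum_\phi|\langle\phi|(B_k^\Wa)^m|\phi\rangle|^2=O(1)$: if $p_m=\min(r,k-r)$ that product is $D^{k-2\min(r,k-r)}$. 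What is actually true, and what you must prove, is that the number of nonzero diagonal entries is exactly $D^{p_m}$, i.e.\ the $k-p_m$ delta/reversal constraints are independent and solvable; this is the content of the paper's Lemma~\ref{lem:la} (solving the cyclically shifted system digit by digit), and with it the second moment is exactly $1$ for $[m]_{2k}\ne0$ (and $\le 2^k$ at $m=2k$ for even $D$), after which your Markov and Cauchy--Schwarz steps go through.

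The genuine gap is in the position-basis neighbor statement. Restricting to the $x$ whose last $D$-ary digit is not $D-1$ (so that $x\mapsto x+1$ causes no carry) discards $D^{k-1}$ coordinates, and $D^{k-1}$ is \emph{much larger} than the allowed exceptional size $o\big(\frac{D^k}{4k}\big)$ for fixed $D$ and $k\to\infty$; so your good set can have density at most $1-\frac1D$, not $1-o\big(\frac{1}{4k}\big)$ as claimed. Nor can you recover the carry coordinates by a crude Markov bound, since without structural input the best per-power estimate is $\sum_x|\langle x+1|(B_k^\Wa)^m|x\rangle|^2\le D^k$, which after summing over $m$ and applying Chebyshev gives nothing at the required scale. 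You must analyze the carries directly, as the paper does in Proposition~\ref{prop:walsh-powers}(iv): writing $x+1$ digit-wise via the carry map $f(\varepsilon)$, the system $[\pm f(\varepsilon)_m]_D=\varepsilon_{m+s}$ can still be solved from the bottom up because the carry at digit $m$ depends only on digits of higher index, which are free or already determined; this yields exactly $D^{\eta_k(m)}$ solutions $x$ among \emph{all} coordinates (and the special counts at $[m]_{4k}=2k,0$), restoring the $O(1)$ per-power second moment along the shifted diagonal and hence the claimed bad-set size $o\big(\frac{D^k}{4k}\big)$.
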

As we will see in the proof (Section~\ref{subsec:proof-thm8.2}), more precise estimates can be given for the error terms in \eqref{eqn:wdiag} and \eqref{eqn:woffdiag} as well as for the sizes and locations of the good sets $G_{k,\ell}$ and $GP_{k,\ell}$. For example, setting the parameter $r(k)=k/2$ in the proof gives the estimate
\begin{align}\label{eqn:walsh-gset}
\# G_{k,\ell}^c &\le \mathcal{O}(D^{k/2}).
\end{align}

Taking the trace of $P_\jj$ like in Section~\ref{subsec:evcounting}, Eq.~\eqref{eqn:wdiag} shows
\begin{cor}\label{cor:mult}
For $D\ge3$, the degeneracy of each eigenspace is $\frac{D^k}{4k}(1+o(1))$. For $D=2$, the degeneracy of each eigenspace is $\frac{2^k}{2k}(1+o(1))$.
\end{cor}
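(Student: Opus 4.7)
The plan is to deduce the eigenspace dimensions by taking the trace of the spectral projector $P_\jj$ in an orthonormal basis, using the pointwise diagonal estimates from Theorem~\ref{thm:walsh-proj} as the main input. Since $P_\jj$ is the orthogonal projection onto $E_{(\jj)}$, we have $\dim E_{(\jj)} = \operatorname{Tr}(P_\jj)$, and this trace can be computed in any orthonormal basis. The natural choice is a $(k,\ell)$-coherent state basis $\mathcal{R}^{k,\ell}$, since Theorem~\ref{thm:walsh-proj} gives us control on $\langle\varepsilon'\cdot\varepsilon|P_\jj|\varepsilon'\cdot\varepsilon\rangle$ precisely in this basis. (Any $\ell\in\intbrr{0:k}$ works; one can take $\ell=k$, i.e.\ the position basis, for concreteness.)

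Concretely, for $D\ge3$ I would split
\begin{align*}
\dim E_{(\jj)} = \operatorname{Tr}(P_\jj) = \sum_{|\varepsilon'\cdot\varepsilon\rangle\in G_{k,\ell}}\langle\varepsilon'\cdot\varepsilon|P_\jj|\varepsilon'\cdot\varepsilon\rangle + \sum_{|\varepsilon'\cdot\varepsilon\rangle\notin G_{k,\ell}}\langle\varepsilon'\cdot\varepsilon|P_\jj|\varepsilon'\cdot\varepsilon\rangle.
\end{align*}
On the first sum Theorem~\ref{thm:walsh-proj} gives $\langle\varepsilon'\cdot\varepsilon|P_\jj|\varepsilon'\cdot\varepsilon\rangle=\frac{1}{4k}(1+o(1))$ uniformly, contributing $\frac{\#G_{k,\ell}}{4k}(1+o(1))$. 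For the second sum I would use only the trivial bound $0\le\langle\varepsilon'\cdot\varepsilon|P_\jj|\varepsilon'\cdot\varepsilon\rangle\le\|P_\jj\|\le 1$, valid because $P_\jj$ is an orthogonal projection and $|\varepsilon'\cdot\varepsilon\rangle$ is a unit vector; combined with $D^k-\#G_{k,\ell}=o\!\left(\frac{D^k}{4k}\right)$ from Theorem~\ref{thm:walsh-proj}, this second sum is $o\!\left(\frac{D^k}{4k}\right)$. Since also $\#G_{k,\ell}=D^k(1-o(1/(4k)))$, the first sum equals $\frac{D^k}{4k}(1+o(1))$, and adding the two yields $\dim E_{(\jj)}=\frac{D^k}{4k}(1+o(1))$.

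The $D=2$ case is identical with $4k$ replaced by $2k$ throughout, using the corresponding version of Theorem~\ref{thm:walsh-proj}. There is no real obstacle here, since the corollary is essentially a book-keeping consequence of the pointwise diagonal spectral projection estimate; the only thing to check is that the exceptional set bound $o(1/(4k))$ (for the fraction of bad coordinates) is sufficient to beat the trivial bound on each exceptional diagonal entry. This is precisely why Theorem~\ref{thm:walsh-proj} was stated with the exceptional fraction being $o(1/(4k))$ rather than just $o(1)$: without this extra factor of $1/(4k)$ in the exceptional fraction, the trivial bound $\le1$ on each bad diagonal entry would contribute $o(D^k)$ rather than $o(D^k/(4k))$, which would only give the weaker conclusion $\dim E_{(\jj)}=\frac{D^k}{4k}+o(D^k)$ and would not identify the leading constant.
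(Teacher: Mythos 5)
Your proposal is correct and is essentially the paper's own argument: the paper proves the corollary by taking the trace of $P_\jj$ in a coherent state basis exactly as in Section~\ref{subsec:evcounting}, splitting over the good set $G_{k,\ell}$ where \eqref{eqn:wdiag} applies and bounding the $o(D^k/(4k))$ exceptional diagonal entries trivially by $1$. Your remark on why the exceptional fraction must be $o(1/(4k))$ rather than $o(1)$ is also the right observation.
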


Since observables $\operatorname{Op}_{k,\ell}(a)$ are diagonal in the $(k,\ell)$-coherent state basis, Eq.~\eqref{eqn:wdiag} also implies,
\begin{thm}[generalized Weyl law in a single eigenspace]\label{thm:walsh-lweyl}
Let $D\ge3$, and for $\jj=0,\ldots,4k-1$, let $E_{(\jj)}$ be the eigenspace of $e^{2\pi i\jj/(4k)}$ for $B_k^\Wa$. Choose any orthonormal basis $(\phi^{(m)})_m$ of $E_{(\jj)}$. Then for any $a\in C^\infty(\T^2)$,
\begin{align}
\frac{4k}{D^k}\sum_{\phi^{(m)}\in E_{(\jj)}}\langle\phi^{(m)}|\operatorname{Op}_{k,\ell}(a)|\phi^{(m)}\rangle =\int_{\T^2}a(\x)\,d\x+o(1)\|a\|_\infty,\quad \text{as }k\to\infty.
\end{align}
The same holds for $D=2$ with $4k$ replaced by $2k$.
\end{thm}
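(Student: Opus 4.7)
The plan is to reduce this local Weyl law inside a single eigenspace to the diagonal projection estimate \eqref{eqn:wdiag} in Theorem~\ref{thm:walsh-proj}, using that the Walsh--anti-Wick observable $\operatorname{Op}_{k,\ell}(a)$ is diagonal in the $(k,\ell)$-coherent state basis. Let $P_\jj=\sum_{\phi^{(m)}\in E_{(\jj)}}|\phi^{(m)}\rangle\langle\phi^{(m)}|$ denote the orthogonal projection onto the eigenspace. Then the left-hand side equals $\frac{4k}{D^k}\operatorname{tr}(P_\jj\operatorname{Op}_{k,\ell}(a))$, and expanding $\operatorname{Op}_{k,\ell}(a)$ in the $(k,\ell)$-coherent state basis gives
\begin{align*}
\operatorname{tr}(P_\jj\operatorname{Op}_{k,\ell}(a)) &= D^k\sum_{[\varepsilon'\cdot\varepsilon]\in\mathcal{R}^{k,\ell}}\langle\varepsilon'\cdot\varepsilon|P_\jj|\varepsilon'\cdot\varepsilon\rangle\int_{[\varepsilon'\cdot\varepsilon]}a(\x)\,d\x.
\end{align*}

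Next I would split the sum according to whether $|\varepsilon'\cdot\varepsilon\rangle\in G_{k,\ell}$ or not. For the good terms, Theorem~\ref{thm:walsh-proj} gives $\langle\varepsilon'\cdot\varepsilon|P_\jj|\varepsilon'\cdot\varepsilon\rangle=\frac{1}{4k}(1+o(1))$, with rate of decay uniform over $|\varepsilon'\cdot\varepsilon\rangle\in G_{k,\ell}$ and $\jj$. Since the quantum rectangles $\{[\varepsilon'\cdot\varepsilon]\}$ form a partition of $\T^2$ into cells of equal volume $D^{-k}$, summing the good terms yields
\begin{align*}
\frac{D^k}{4k}(1+o(1))\sum_{|\varepsilon'\cdot\varepsilon\rangle\in G_{k,\ell}}\int_{[\varepsilon'\cdot\varepsilon]}a(\x)\,d\x = \frac{D^k}{4k}(1+o(1))\left(\int_{\T^2}a(\x)\,d\x+\mathcal{E}_1\right),
\end{align*}
where the omitted-cells remainder is controlled by $|\mathcal{E}_1|\le\#(G_{k,\ell}^c)\cdot D^{-k}\|a\|_\infty=o\big(\frac{1}{4k}\big)\|a\|_\infty$.

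For the bad terms I would use only the trivial bound $\langle\varepsilon'\cdot\varepsilon|P_\jj|\varepsilon'\cdot\varepsilon\rangle\le 1$ together with $\big|\int_{[\varepsilon'\cdot\varepsilon]}a\big|\le D^{-k}\|a\|_\infty$, which gives a contribution of size $D^k\cdot\#(G_{k,\ell}^c)\cdot D^{-k}\|a\|_\infty=o\big(\frac{D^k}{4k}\big)\|a\|_\infty$ by the size bound on $G_{k,\ell}^c$ in Theorem~\ref{thm:walsh-proj}. Multiplying the full identity by $\frac{4k}{D^k}$ then yields the claimed asymptotic with an error of order $o(1)\|a\|_\infty$. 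The $D=2$ case is identical after replacing $4k$ by $2k$ throughout.

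I do not expect a genuine obstacle: all the work has already been done in establishing Theorem~\ref{thm:walsh-proj}. The only thing to be careful about is collecting the two error contributions (from the bad coherent states and from the incomplete partition of $\T^2$) and checking that both are $o(1)\|a\|_\infty$ after multiplication by $\frac{4k}{D^k}$, which is immediate from the $\#G_{k,\ell}^c=o(D^k/(4k))$ bound.
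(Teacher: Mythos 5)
Your proposal is correct and follows essentially the same route as the paper: both diagonalize $\operatorname{Op}_{k,\ell}(a)$ in the $(k,\ell)$-coherent state basis, invoke the diagonal estimate \eqref{eqn:wdiag} of Theorem~\ref{thm:walsh-proj} together with the $\#G_{k,\ell}^c=o(D^k/(4k))$ bound and $|\langle\varepsilon'\cdot\varepsilon|\operatorname{Op}_{k,\ell}(a)|\varepsilon'\cdot\varepsilon\rangle|\le\|a\|_\infty$, and finish with $\operatorname{tr}\operatorname{Op}_{k,\ell}(a)=D^k\int_{\T^2}a(\x)\,d\x$. You merely spell out the good/bad split of coherent states that the paper handles implicitly, and your uniformity appeal is justified by the uniformity statement in Theorem~\ref{thm:walsh-proj}.
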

\begin{proof}
Since $\operatorname{Op}_{k,\ell}(a)$ is diagonal in the basis of $(k,\ell)$-coherent states $|\varepsilon'\cdot\varepsilon\rangle$, then
\begin{align*}
\sum_{\phi^{(m)}\in E_{(\jj)}}\langle\phi^{(m)}|\operatorname{Op}_{k,\ell}(a)|\phi^{(m)}\rangle &=\sum_{\phi^{(m)}\in E_{(\jj)}}\sum_{|\varepsilon'\cdot\varepsilon\rangle}\langle\phi^{(m)}|\varepsilon'\cdot\varepsilon\rangle\langle\varepsilon'\cdot\varepsilon|\operatorname{Op}_{k,\ell}(a)|\varepsilon'\cdot\varepsilon\rangle\langle\varepsilon'\cdot\varepsilon|\phi^{(m)}\rangle\\
&= \sum_{|\varepsilon\cdot\varepsilon\rangle}\langle\varepsilon'\cdot\varepsilon|P_\jj|\varepsilon'\cdot\varepsilon\rangle\langle \varepsilon'\cdot\varepsilon|\operatorname{Op}_{k,\ell}(a)|\varepsilon'\cdot\varepsilon\rangle.
\end{align*}
Using the diagonal projection matrix asymptotics in Theorem~\ref{thm:walsh-proj}, and that $|\langle\varepsilon'\cdot\varepsilon|\operatorname{Op}_{k,\ell}(a)|\varepsilon'\cdot\varepsilon\rangle|\le \|a\|_\infty$, shows that for $D\ge3$ this is $\frac{1}{4k}\operatorname{tr}\operatorname{Op}_{k,\ell}(a)+o\big(\frac{D^k}{4k}\|a\|_{\infty}\big)$. 
Multiplying by $\frac{4k}{D^k}$ and using $\operatorname{tr}\operatorname{Op}_{k,\ell}(a)=D^k\int_{\T^2}a(\x)\,d\x$ then gives the result.
\end{proof}

\subsection{Time evolution of the Walsh baker quantization}\label{subsec:walsh-proof}

In order to prove Theorem~\ref{thm:walsh-proj}, we need the following results on entries of matrix powers of $B_k^\Wa$. We note that the case $D=2$ in the position basis is very similar to the doubling map quantization studied in \cite[\S8.2]{pw}, and can be covered by the same analysis. In that case, understanding the matrix powers in the position basis could be done simply by analyzing the structure of separate regions of the matrix under matrix multiplication. However, in a general $(k,\ell)$-coherent state basis, the matrix structure is much more ``scrambled''.
For $D\ge3$, the structure is also further scrambled by a dit flip/reflection-like map for powers between $2k$ and $4k$.
As a result, we instead rely on the action \eqref{eqn:walsh-action} of $B_k^\Wa$ on tensor product states, and count nonzero matrix entries by counting solutions to a resulting set of equations.
An example of the early time evolution of $B_k^\Wa$ for $D=3$ and $k=4$ in a $(k,\ell=2)$-coherent state basis is shown in Fig.~\ref{fig:walsh-evolution}.

\begin{figure}[htb]
\includegraphics[width=6in]{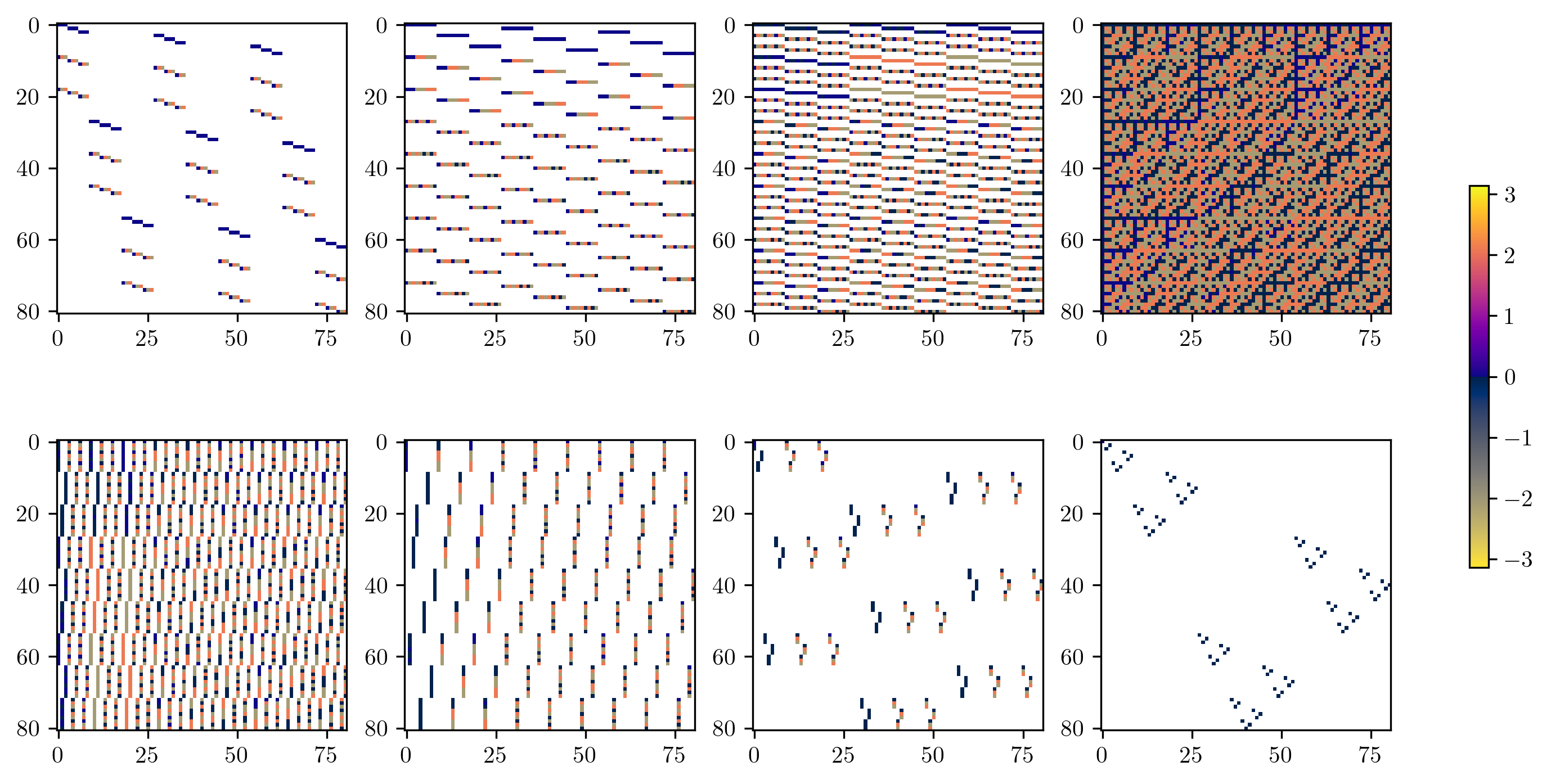}
\caption{Matrix powers $(B_k^\Wa)^t$ for $t\in\intbrr{1:2k}$ with $D=3$, $k=4$, $N=D^k=81$, in a $(k,\ell=2)$-coherent state basis. The nonzero matrix entries of $(B_k^\Wa)^t$ are shown in color according to their phase angle, while zero entries are shown in white.}\label{fig:walsh-evolution}
\end{figure}

\begin{prop}\label{prop:walsh-powers}
For $n\in\N$ and $j\in\Z$, let $[j]_{n}=j\;\mathrm{mod}\;n$ and $[j]_n\in\intbrr{0:n-1}$.  
Define
\[
\eta_k(j):=\begin{cases}
[j]_{2k},& [j]_{4k}\in\intbrr{0:k}\cup\intbrr{2k:3k}\\
2k-[j]_{2k},& [j]_{4k}\in\intbrr{k:2k-1}\cup\intbrr{3k:4k-1}
\end{cases},
\]
which is a triangle-shaped function, increasing and decreasing between $0$ and $2k$.
For $D\ge2$, $j\in\N$, and any $0\le \ell\le k$, then
\begin{enumerate}[(i)]
\item For $[j]_{2k}\ne 0$, there are exactly $D^{\eta_k(j)}$ $(k,\ell)$-coherent state basis vectors $|\varepsilon'\cdot\varepsilon\rangle$ such that $\langle \varepsilon'\cdot\varepsilon|(B_k^\Wa)^j|\varepsilon'\cdot\varepsilon\rangle\ne0$. For $[j]_{4k}=2k$, there is $D^{\eta_k(j)}=1$ such $|\varepsilon'\cdot\varepsilon\rangle$ if $D$ is odd, and $2^k$ such $|\varepsilon'\cdot\varepsilon\rangle$ if $D$ is even. For $[j]_{4k}=0$, there are $D^k$ solutions.
\item If  $\langle \varepsilon'\cdot\varepsilon|(B_k^\Wa)^j|\varepsilon'\cdot\varepsilon\rangle\ne0$, then it has absolute value $|\langle \varepsilon'\cdot\varepsilon|(B_k^\Wa)^j|\varepsilon'\cdot\varepsilon\rangle|=D^{-\eta_k(j)/2}$.
\item There are $D^k\cdot D^{\eta_k(j)}$ non-zero entries $\langle \delta'\cdot\delta|(B_k^\Wa)^j|\varepsilon'\cdot\varepsilon\rangle$, and for these entries, $|\langle\delta'\cdot\delta|(B_k^\Wa)^j|\varepsilon'\cdot\varepsilon\rangle|=D^{-\eta_k(j)/2}$.
\item For $[j]_{2k}\ne 0$, there are $D^{\eta_k(j)}$ position basis vectors $|x\rangle$ such that $\langle x+1|(B_k^\Wa)^j|x\rangle\ne0$, and also $D^{\eta_k(j)}$ such that $\langle x-1|(B_k^\Wa)^j|x\rangle\ne0$, where $x\pm1$ is taken modulo $D^k$. For $[j]_{4k}=2k$, there are no solutions if $D$ is even, and one solution to each if $D$ is odd. For $[j]_{4k}=0$, there are no solutions to either equation. 
\end{enumerate}
\end{prop}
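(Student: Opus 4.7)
The plan is to work throughout at the level of tensor products $\alpha_1 \otimes \cdots \otimes \alpha_k$ rather than with full $D^k \times D^k$ matrices, exploiting the fact that $B_k^{\Wa}$ preserves this class up to a cyclic shift of the $k$ factors and applications of $\hat{F}_D^\dagger$. First I would iterate~\eqref{eqn:walsh-action}, using $(B_k^{\Wa})^k = (\hat{F}_D^\dagger)^{\otimes k}$, to obtain for $j = qk+r$ with $0 \le r < k$ the closed-form action
\[
(B_k^{\Wa})^j(\alpha_1 \otimes \cdots \otimes \alpha_k) = \hat{F}_D^{\dagger q}\alpha_{r+1} \otimes \cdots \otimes \hat{F}_D^{\dagger q}\alpha_k \otimes \hat{F}_D^{\dagger(q+1)}\alpha_1 \otimes \cdots \otimes \hat{F}_D^{\dagger(q+1)}\alpha_r.
\]
Since $\hat{F}_D^4 = I$, only the residue mod $4$ of each exponent matters; the four archetypal single-slot operators are the identity, $\hat{F}_D^\dagger$, the reflection $R : e_s \mapsto e_{-s \bmod D}$, and $\hat{F}_D$.

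Next, I would expand each $(k,\ell)$-coherent state as $|\varepsilon'\cdot\varepsilon\rangle = \bigotimes_{p=1}^k \hat{F}_D^{\dagger[p>\ell]} e_{s_p}$, where $s_p = \varepsilon_p$ for $p \le \ell$ and $s_p = \varepsilon_{k+\ell+1-p}$ for $p > \ell$, and similarly relabel $(\delta',\delta)$ as $(t_1,\ldots,t_k)$. The matrix element then factors across the $k$ tensor slots,
\[
\langle\delta'\cdot\delta|(B_k^{\Wa})^j|\varepsilon'\cdot\varepsilon\rangle = \prod_{p'=1}^k \langle e_{t_{p'}}|\hat{F}_D^{\dagger m(p')}|e_{s_p}\rangle,\qquad p = [p'+r]_k,
\]
with $m(p') \equiv n_{p'} + [p>\ell] - [p'>\ell] \pmod 4$ and $n_{p'} = q$ for $p' \le k-r$, $n_{p'}=q+1$ otherwise. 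Inspecting the matrix entries of $\{I,\hat{F}_D^\dagger,R,\hat{F}_D\}$, each single-slot factor is either a Kronecker delta ($m\equiv 0$: $t_{p'}=s_p$), a reflection delta ($m\equiv 2$: $t_{p'} = -s_p \bmod D$), or a unit phase of modulus $D^{-1/2}$ ($m\equiv 1$ or $3$, which I will call \emph{Fourier} slots). Hence every nonzero matrix entry has modulus $D^{-L/2}$, where $L = L(j,\ell)$ is the number of Fourier slots.

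The key structural step is to prove that both $L$ and the Fourier count within each orbit of the shift $p' \mapsto [p'+r]_k$ are independent of $\ell$, and then to identify $L$ with $\eta_k(j)$. For the invariance I would observe that shifting $\ell \mapsto \ell+1$ changes $m(p')$ at exactly two positions, $p_1 = \ell+1$ (where $m$ decreases by $1$) and $p_2 = [\ell+1-r]_k$ (where $m$ increases by $1$); a case analysis on the indicators $\oneb_{\ell \ge k-r}$ and $\oneb_{\ell < r}$ shows that in every case the multiset $\{m(p_1),m(p_2)\}$ equals $\{q,q+1\}$ both before and after the shift, so $L$ is invariant. Moreover $p_1 - p_2 \equiv r \pmod k$, so the two affected slots lie in the same orbit, which preserves the orbit-wise count as well. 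Evaluating at the position basis $\ell = k$ (where $m(p')=n_{p'}$), the Fourier slots form a contiguous arc of length $r$ (for $q\in\{0,2\}$) or $k-r$ (for $q\in\{1,3\}$), matching $\eta_k(j)$ in every non-degenerate case and yielding part~(ii).

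Finally, to count solutions the shift $p' \mapsto p'+r$ partitions $\{1,\ldots,k\}$ into $d := \gcd(r,k)$ orbits of length $k/d$, which are cosets of $\langle d\rangle \subset \Z/k\Z$. Since $d \mid r$ and $d \mid (k-r)$, the Fourier arc (of length $r$ or $k-r$) distributes evenly, placing exactly $r/d$ or $(k-r)/d \ge 1$ Fourier positions in each orbit whenever $r>0$; thus no orbit forms a closed constraint cycle, and cutting each orbit at its Fourier positions produces disjoint arcs, each contributing one free $D$-ary parameter. This gives $D^{\eta_k(j)}$ diagonal solutions for~(i), and after also varying $(\delta',\delta)$ over the $D^{\eta_k(j)}$ admissible outputs per input, $D^k \cdot D^{\eta_k(j)}$ nonzero off-diagonal entries for~(iii). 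For~(iv) I would specialize to position basis and, case-by-case in $q$, compare the constraints $\delta_{p'} = \pm\varepsilon_{[p'+r]_k}$ with the dit-by-dit structure of $\varepsilon \pm 1$; the location $m_0$ of the carry/borrow interacts with the constraint arcs to pin down precisely $D^{\eta_k(j)}$ consistent $\varepsilon$. The degenerate cases $[j]_{4k}\in\{0,2k\}$ follow immediately from $(B_k^{\Wa})^0 = I$ and $(B_k^{\Wa})^{2k} = R^{\otimes k}$: the reflection constraint $2x_i \equiv 0 \pmod D$ at every slot yields $1$ or $2^k$ diagonal solutions (and $1$ or $0$ off-by-one solutions) according to the parity of $D$. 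The main obstacle I anticipate is the bookkeeping for the $\ell$-invariance and the carry analysis in~(iv); once these are handled, every magnitude and count drops out of the four-way single-slot classification above.
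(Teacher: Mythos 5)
Your proposal is correct, and it rests on the same basic mechanism as the paper---the tensor-product action \eqref{eqn:walsh-action}, slot-by-slot factorization of the matrix element, and counting solutions of the resulting dit constraints---but the organization is genuinely different. The paper first reduces to $j\in\intbrr{1:k}\cup\intbrr{2k:3k}$ via $(B_k^\Wa)^{2k+j}=((B_k^\Wa)^\dagger)^{2k-j}$, then writes separate two-row tables for $j\le\ell$ and $j>\ell$ (and their reflected versions), reads off which slots give restrictions and which give free $D^{-1/2}$ factors, and counts solutions with Lemma~\ref{lem:la}, whose one-line proof exploits that the restriction indices form a \emph{contiguous} cyclic arc. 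You instead write a single closed form for $(B_k^\Wa)^{qk+r}$ valid for all powers, classify each slot by the exponent of $\hat{F}_D^\dagger$ mod $4$, obtain the modulus $D^{-L/2}$ with $L$ the number of ``Fourier'' slots, and then (a) prove that $L$, and even its count within each orbit of $p'\mapsto[p'+r]_k$, is independent of $\ell$ by the two-slot exchange argument, (b) evaluate at $\ell=k$ to identify $L=\eta_k(j)$, and (c) count solutions by cutting each $r$-shift orbit at its Fourier slots. I checked the exchange step (the two affected slots always carry opposite parities when they are distinct, and the changes cancel when $r=0$) and the even distribution of the length-$r$ (or $k-r$) arc over the $\gcd(r,k)$ orbits, so parts (i)--(iii) go through, including the degenerate $[j]_{2k}=0$ cases. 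Your route buys uniformity in $j$ and $\ell$ and a conceptual explanation of the formula for $\eta_k(j)$; the paper's adjoint reduction and contiguous-arc bookkeeping buy shorter case analysis and a trivial counting lemma.

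The one place your plan is materially under-specified is (iv). The orbit-cutting count does not apply there as stated, because each constraint involves $f(\varepsilon)_{p'}$, a digit of $x\pm1$, which is coupled through carries to every $\varepsilon_m$ with $m>p'$, so the constraint graph is no longer a disjoint union of two-variable edges. The fix is the sequential argument the paper uses: the carry or borrow entering digit $p'$ is determined by the digits $\varepsilon_m$ with $m>p'$, so one processes the constraints in decreasing slot order; at each step the needed higher digits are either free or already determined, the map $s\mapsto[\pm s+c]_D$ is a bijection, and each constraint fixes exactly one new digit, leaving $\eta_k(j)$ free digits and hence $D^{\eta_k(j)}$ admissible $x$. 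Also, in the degenerate case $[j]_{4k}=2k$ the off-by-one count is decided at the least significant digit, where the equation is $2\varepsilon_k\equiv\mp1\pmod D$ (no solution for $D$ even, one for $D$ odd, after which no carry occurs and the remaining digits are forced to $0$), not by ``$2x_i\equiv0$ at every slot''; your stated conclusions are nevertheless the correct ones.
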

\begin{proof}[Proof of Proposition~\ref{prop:walsh-powers}]
For $D\ge3$, since $(B_k^\Wa)^{4k}=\operatorname{Id}_{D^k}$ and $(B_k^\Wa)^{2k+j}=((B_k^\Wa)^\dagger)^{2k-j}$ for $j\in\intbrr{0:2k}$ by unitarity, it suffices to prove the statements for $1\le j\le k$ and $2k\le j\le 3k$. 
For $D=2$, since $(B_k^\Wa)^{2k}=\operatorname{Id}_{2^k}$, it suffices to prove the desired properties just for $1\le j\le k$.

\vspace{1.5mm}
\noindent (i), (ii).
Let $\varepsilon=\varepsilon_1\ldots\varepsilon_\ell$ and $\varepsilon'=\varepsilon_{\ell+1}\ldots\varepsilon_{k}$, so
$
|\varepsilon'\cdot\varepsilon\rangle = |{\varepsilon_1}\rangle\otimes\cdots\otimes |{\varepsilon_\ell}\rangle\otimes\hat{F}_D^\dagger |{\varepsilon_k}\rangle\otimes\cdots\otimes\hat{F}_D^\dagger |{\varepsilon_{\ell+1}}\rangle.
$
First consider $1\le j\le \ell$; then using \eqref{eqn:walsh-action},
\begin{align*}
(B_k^\Wa)^j|\varepsilon'\cdot\varepsilon\rangle &= |{\varepsilon_{j+1}}\rangle
\otimes\cdots\otimes |{\varepsilon_\ell}\rangle\otimes \hat{F}_D^\dagger |{\varepsilon_{k}}\rangle\otimes \cdots\otimes\hat{F}_D^\dagger |{\varepsilon_{\ell+1}}\rangle\otimes\hat{F}_D^\dagger |{\varepsilon_1}\rangle\otimes\cdots\otimes \hat{F}_D^\dagger |{\varepsilon_j}\rangle.
\end{align*}
We dot this with $|\varepsilon'\cdot\varepsilon\rangle$ and consider solutions to $\langle \varepsilon'\cdot\varepsilon|(B_k^\Wa)^j|\varepsilon'\cdot\varepsilon\rangle\ne0$. 
Taking the dot product is clearer by writing in the following table format, where tensor product indices are written in the top row:
\begin{align*}
\def\arraystretch{1.3}
\begin{array}{c||ccc|ccc|ccc}
 & 1&\cdots & \ell-j &\ell-j+1 &\cdots & \ell&\ell+1 & &k\\\hline
|\varepsilon'\cdot\varepsilon\rangle&|{\varepsilon_1}\rangle & \cdots &|{\varepsilon_{\ell-j}}\rangle &|{\varepsilon_{\ell-j+1}}\rangle&\cdots& |{\varepsilon_{\ell}}\rangle&\hat{F}_D^\dagger |{\varepsilon_k}\rangle&\cdots &\hat{F}_D^\dagger |{\varepsilon_{\ell+1}}\rangle\\\hline
(B_k^\Wa)^j|\varepsilon'\cdot\varepsilon\rangle&|{\varepsilon_{j+1}}\rangle&\cdots  &|{\varepsilon_\ell}\rangle & \hat{F}_D^\dagger |{\varepsilon_k}\rangle &\cdots&\hat{F}_D^\dagger |{\varepsilon_{k-j+1}}\rangle &\hat{F}_D^\dagger |{\varepsilon_{k-j}}\rangle&\cdots& \hat{F}_D^\dagger |{\varepsilon_j}\rangle
\end{array}
\end{align*}
In the first section of length $\ell-j$, obtaining a nonzero dot product generates the $\ell-j$ restrictions $\varepsilon_1=\varepsilon_{j+1},\ldots,\varepsilon_{\ell-j}=\varepsilon_\ell$. Similarly, the third section of length $k-\ell$ with the DFT matrices  gives rise to $k-\ell$ similar restrictions. The middle section is of length $j$ and always produces nonzero dot products, regardless of the values of $\varepsilon,\varepsilon'$; the contribution from this section has absolute value
\begin{align}\label{eqn:wabs}
\left|\langle {\varepsilon_{\ell-j+1}}|\hat{F}_D^\dagger|{\varepsilon_k}\rangle\cdots\langle {\varepsilon_\ell}|\hat{F}_D^\dagger| {\varepsilon_{k-j+1}}\rangle\right|&=D^{-j/2}.
\end{align}
There are in total then $(\ell-j)+(k-\ell)=k-j$ restrictions for the variables $\varepsilon_1,\ldots,\varepsilon_k$.
For $j<k$, by Lemma~\ref{lem:la} below, with $L(x)=x$ and $A\subset\Z/k\Z$ the interval $\intbrr{\ell+1:k}\cup\intbrr{1:\ell-j}$, 
there are thus $D^j$ total solutions to $\langle \varepsilon'\cdot\varepsilon|(B_k^\Wa)^j|\varepsilon'\cdot\varepsilon\rangle\ne0$.

The consideration for $\ell< j\le k$ is similar. We write the table in abbreviated format, writing only whether the term is $I_D$, $\hat{F}_D^\dagger$, or $(\hat{F}_D^\dagger)^2=R_D$ where $R_D:|x\rangle\mapsto |-x\;\mathrm{mod}\;D\rangle$.  
The basis elements on which those operators act are not written, but we know the basis elements in the bottom row are always permuted cyclically by $j$ according to the action \eqref{eqn:walsh-action}.
\begin{align*}
\def\arraystretch{1.3}
\begin{array}{c||ccc|ccc|ccc}
 & 1&\cdots & \ell & \ell+1 &\cdots & k-j+\ell&k-j+\ell+1 & &k\\\hline
|\varepsilon'\cdot\varepsilon\rangle& I_D&\cdots& I_D&\hat{F}_D^\dagger &\cdots &\hat{F}_D^\dagger &\hat{F}_D^\dagger&\cdots&\hat{F}_D^\dagger \\\hline
(B_k^\Wa)^j|\varepsilon'\cdot\varepsilon\rangle& \hat{F}_D^\dagger &\cdots&\hat{F}_D^\dagger&\hat{F}_D^\dagger& \cdots &\hat{F}_D^\dagger &R_D&\cdots &R_D
\end{array}.
\end{align*}
The free variables appear when there is a matching of $\hat{F}_D^\dagger$ with $I_D$ or $R_D$, since the inner product will always have absolute value $D^{-1/2}$, and this occurs in the first $\ell$ entries and last $j-\ell$ entries. Thus the only restrictions are the $k-j$ from the middle section. If $k-j\ge1$ then applying Lemma~\ref{lem:la} below with $L(x)=x$ and $A=\intbrr{\ell+1:k-j+\ell}$ shows there are again $D^j$ total solutions $|\varepsilon'\cdot\varepsilon\rangle$. If $j=k$, then the middle section of the table does not exist, and the inner product is nonzero for any of the $D^j=D^k$ states $|\varepsilon'\cdot\varepsilon\rangle$. 

For $D\ge3$ and $2k\le j\le 3k$, we use a similar argument combined with the equation $(B_k^\Wa)^{2k+i}=(B_k^\Wa)^{2k}(B_k^\Wa)^i= R_D^{\otimes k} (B_k^\Wa)^i$, where we recall $R_D=(\hat{F}_D^\dagger)^2$ is the map $R_D:|x\rangle\mapsto |{-x\;\mathrm{mod}\;D}\rangle$.
The reference vector $|\varepsilon'\cdot \varepsilon\rangle$ is replaced by $R_D^{\otimes k}|\varepsilon'\cdot\varepsilon\rangle$ in the tables, which now read for $0\le i\le\ell$, 
\begin{align*}
\def\arraystretch{1.3}
\begin{array}{c||ccc|ccc|ccc}
 & 1&\cdots & \ell-i &\ell-i+1 &\cdots & \ell&\ell+1 & &k\\\hline
R_D^{\otimes k}|\varepsilon'\cdot\varepsilon\rangle& R_D & \cdots &R_D&R_D&\cdots& R_D&R_D\hat{F}_D^\dagger &\cdots &R_D\hat{F}_D^\dagger \\\hline
(B_k^\Wa)^i|\varepsilon'\cdot\varepsilon\rangle&I_D&\cdots  &I_D & \hat{F}_D^\dagger &\cdots&\hat{F}_D^\dagger &\hat{F}_D^\dagger &\cdots& \hat{F}_D^\dagger
\end{array},
\end{align*}
and for $\ell\le i\le k$,
\begin{align*}
\def\arraystretch{1.3}
\begin{array}{c||ccc|ccc|ccc}
& 1&\cdots & \ell & \ell+1 &\cdots & k-i+\ell&k-i+\ell+1 & &k\\\hline
R_D^{\otimes k}|\varepsilon'\cdot\varepsilon\rangle& R_D&\cdots& R_D&R_D\hat{F}_D^\dagger &\cdots &R_D\hat{F}_D^\dagger &R_D\hat{F}_D^\dagger&\cdots&R_D\hat{F}_D^\dagger \\\hline
(B_k^\Wa)^i|\varepsilon'\cdot\varepsilon\rangle& \hat{F}_D^\dagger &\cdots&\hat{F}_D^\dagger&\hat{F}_D^\dagger& \cdots &\hat{F}_D^\dagger &R_D&\cdots &R_D
\end{array}.
\end{align*}
For $0\le i\le \ell$, since $\hat{F}_DR_D\hat{F}_D^\dagger=R_D$ as $\hat{F}_D$ and $R_D$ commute, then since we only consider the inner product, we can replace the third section with $R_D$ in the top row and $I_D$ in the bottom row. This makes the first row all $R_D$, and each variable $\varepsilon_i$ shows up exactly once in the row.
If $i=0$, the system to solve is $-\varepsilon_m\;\mathrm{mod}\;D=\varepsilon_m$ for all $m\in\intbrr{1:k}$. For $D\ge3$ odd, this has one solution where all $\varepsilon_m=0$, while for $D\ge3$ even this has $2^k$ solutions where each $\varepsilon_m\in\{0,D/2\}$.

If $1\le i\le\ell$, then Lemma~\ref{lem:la} with $L(x)=-x$ and $A=\intbrr{\ell+1:k}\cup\intbrr{1:\ell-i}$ shows there are $D^i=D^{j-2k}=D^{\eta_k(j)}$ solutions to the system described by the table.

For $\ell< i\le k$, note that $R_D\hat{F}_D^\dagger=\hat{F}_D^\dagger R_D$, so the middle section in the last table above can be replaced with $R_D$ on the top row and $I_D$ on the bottom row. For $\ell \le i\le k-1$, Lemma~\ref{lem:la} with $L(x)=-x$ and $A=\intbrr{\ell+1:k-i+\ell}$ shows there are $D^i=D^{j-2k}=D^{\eta_k(j)}$ solutions. For $i=k$, the middle section of the table does not exist, and using  $R_D^2=I_D$ for the third section, shows the inner product is nonzero for any $|\varepsilon'\cdot\varepsilon\rangle$.
In all the cases, the same absolute value equality \eqref{eqn:wabs} holds in the cases the inner product is nonzero.

\vspace{1.5mm}
\noindent (iii). To count the nonzero off-diagonal entries, we use the same tables constructed above.
For each of the $D^k$ possible $|\varepsilon'\cdot\varepsilon\rangle$, which we can think of as fixed constants, we solve the linear system for the $k$ symbol coordinates of $|\delta'\cdot\delta\rangle$. Using the same argument as for the diagonal entries, we can count the number of restrictions, and similarly as in Lemma~\ref{lem:la} (but simpler since the $v_{[i+s]_D}$ variables are now constants and we do not need to consider the case $[j]_{2k}=0$ separately), there are $D^{\eta_k(j)}$ solutions $|\delta'\cdot\delta\rangle$. Thus in total there are $D^k\cdot D^{\eta_k(j)}$ solutions of $|\delta\cdot\delta\rangle,|\varepsilon'\cdot\varepsilon\rangle$. Moreover, for these nonzero matrix elements,  $|\langle\delta'\cdot\delta|(B_k^\Wa)^j|\varepsilon'\cdot\varepsilon\rangle|=D^{-\eta_k(j)/2}$ just as for the diagonal elements.

\vspace{1.5mm}
\noindent (iv). Since $\ell=k$ for the position basis, to count the number of $x$ such that $(B_k^\Wa)^j_{x\pm1,x}$ is nonzero, we can use just the two tables corresponding to $1\le j\le \ell=k$ and $0\le i\le \ell=k$ for $i+2k=j$. Since $\ell=k$, the tables have only two sections, and the interval $A$ containing the restrictions is always the first section, $\intbrr{1:k-j}$ or $\intbrr{1:k-i}$.

If $|x\rangle=|\varepsilon_1\rangle\otimes\cdots\otimes|\varepsilon_k\rangle$ with $x=\sum_{m=1}^k\varepsilon_mD^{k-m}\in\intbrr{0:D^k-1}$, and $|x\pm1\rangle=|\tilde{\varepsilon}_1\rangle\otimes\cdots\otimes|\tilde{\varepsilon}_k\rangle$, then $\tilde{\varepsilon}_m\in\{\varepsilon_m,\varepsilon_m\pm1\;\mathrm{mod}\;D\}$, depending on if we have to ``carry'' (or ``borrow'') ones when adding (subtracting). We can determine which value if we know the later place values $\varepsilon_{m+1},\ldots,\varepsilon_k$.
Let $f(\varepsilon_1,\ldots,\varepsilon_k):=(\tilde{\varepsilon}_1,\ldots,\tilde{\varepsilon}_k)$.
To solve $\langle x\pm1|(B_k^\Wa)^j|x\rangle\ne0$, we are solving the system, where $s=j$ (or $i$ if $2k\le j\le 3k$), and $\alpha\in\{\pm1\}$,
\begin{align*}
[\alpha f(\varepsilon)_m]_D-\varepsilon_{m+s}=0,\quad m\in A=\intbrr{1:k-s}.
\end{align*}
For $s\ge1$, the last $s$ variables, $\varepsilon_{k},\varepsilon_{k-1},\ldots,\varepsilon_{k-s+1}$, do not appear as the index $m$ in $f(\varepsilon)_m$, so can be taken to be the free variables. For any given values in $\intbrr{0:D-1}$ of these free variables, we can solve for the unique solution for the remaining variables by starting with the last equation $m=k-s$ and working upwards. In this way, we always know whether $f(\varepsilon)_m=\varepsilon_m$ or $\varepsilon_m\pm 1$, since it only depends on the values of $\varepsilon_n$ for $n>m$, which are either free or already solved. The map $x\mapsto [\alpha x]_D$ for $\alpha=\pm1$ is a bijection (permutation) on $\intbrr{0:D-1}$, so there is a unique solution $\varepsilon_m$. The $s$ free variables then generate $D^s$ solutions $|x\rangle$.

For $[j]_{4k}=2k$, corresponding to $s=i=0$ and the second to last table, we must solve the system $[-f(\varepsilon)_m]_D-\varepsilon_m=0$ for $m\in\intbrr{1:k}$. Starting with $m=k$, we always have $f(\varepsilon)_k=\varepsilon_k\pm1\;\mathrm{mod}\;D$ for the operation $x\pm1$,  and so we must have $\varepsilon_k\in \frac{1}{2}D\Z\mp\frac{1}{2}$, which has no solutions in $\Z$ if $D$ is even. If $D$ is odd, then there is a single solution in $\intbrr{0:D-1}$, $\varepsilon_k=\frac{D\mp 1}{2}$. For considering $x+1$, this is $\frac{D-1}{2}<D-1$ so there is no carrying of ones and $f(\varepsilon)_m=\varepsilon_m$ for all $m<k$. For $x-1$, $\varepsilon_k$ is $\frac{D+1}{2}>0$, so there is no borrowing of ones and $f(\varepsilon)_m=\varepsilon_m$ for all $m<k$. For $D$ odd and $m<k$, the equation $[-\varepsilon_m]_D-\varepsilon_m=0$ has only the solution $\varepsilon_m=0$, giving a total of one solution $|x\rangle=|0\rangle\otimes\ldots\otimes|0\rangle\otimes|\frac{D\mp1}{2}\rangle$.
\end{proof}

The following lemma was used in the proof of Proposition~\ref{prop:walsh-powers}.
\begin{lem}\label{lem:la}
For $\alpha\in\{\pm1\}$ and $b\in\Z$, let $L(x)=\alpha x+b$. Let $s\in\intbrr{1:k-1}$, and let $A\subset\Z/k\Z$ be an interval of length $k-s$ (which may wrap around past $k$). Consider the $(k-s)\times k$ system in variables $v_0,\ldots,v_{k-1}\in\intbrr{0:D-1}$,
\begin{align}\label{eqn:system}
[L(v_m)]_D - v_{[m+s]_k} = 0,\quad m\in A,
\end{align}
where $[y]_n\in\intbrr{0:n-1}$ denotes the representative of $y\;\mathrm{mod}\;n$.
Then there are $D^s$ solutions $v\in\intbrr{0:D-1}^k$ to \eqref{eqn:system}.
\end{lem}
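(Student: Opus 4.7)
The plan is to encode the $(k-s)\times k$ system as a directed graph on $\Z/k\Z$ and count solutions component by component. First I would set $\phi(m) := [m+s]_k$, reading the subscript of $v$ modulo $k$. Since $\alpha \in \{\pm 1\}$ is invertible modulo $D$, the affine map $L$ is a bijection on $\Z/D\Z$, so each equation $v_{\phi(m)} = [L(v_m)]_D$ is an \emph{invertible} constraint relating $v_m$ and $v_{\phi(m)}$. Build the directed graph $G$ on vertex set $\Z/k\Z$ with an edge $m \to \phi(m)$ for each $m \in A$: a vertex has out-degree $1$ iff it lies in $A$, and in-degree $1$ iff it lies in $\phi(A)$, so (as $\phi$ is a bijection of $\Z/k\Z$) both in- and out-degrees are bounded by $1$. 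Therefore $G$ decomposes into a disjoint union of directed paths and directed cycles.

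The central step is ruling out directed cycles. A cycle in $G$ would require a full $\phi$-orbit to be contained in $A$. Setting $d := \gcd(k,s)$, such an orbit is a coset of the cyclic subgroup $d\,\Z/k\Z = \{0, d, 2d, \ldots, k-d\}$ of $\Z/k\Z$, consisting of $k/d$ points spaced exactly $d$ apart around the circle. The shortest arc of $\Z/k\Z$ covering any such coset therefore has length $k - (d-1) = k - d + 1$, obtained by deleting a single gap of length $d-1$ between consecutive orbit points. For $A$ — an arc of length $k-s$ — to contain the coset, we would need $k-s \ge k-d+1$, i.e., $d \ge s+1$; but $d = \gcd(k,s) \le s$ since $s \ge 1$, a contradiction. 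Hence $G$ is a disjoint union of directed paths.

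The count now follows: the number of paths equals the number of in-degree-$0$ vertices, which is $|\Z/k\Z \setminus \phi(A)| = k - |A| = s$. Along any path $m_0 \to m_1 \to \cdots \to m_t$, an arbitrary choice of $v_{m_0} \in \intbrr{0:D-1}$ (giving $D$ options) uniquely propagates to $v_{m_1}, \ldots, v_{m_t}$ via successive applications of $[L(\cdot)]_D$, and this assignment automatically satisfies every edge-equation on the path. Different paths involve disjoint sets of variables, so the total solution count multiplies to $D^s$. The main obstacle is the cycle-exclusion step, which crucially uses both that $A$ is an \emph{interval} (not an arbitrary size-$(k-s)$ subset) and that $s \ge 1$; without either hypothesis, cyclic dependencies could introduce extra consistency obstructions or leave extra variables free, changing the count away from $D^s$.
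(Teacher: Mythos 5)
Your proof is correct. It rests on the same two pillars as the paper's argument---$L$ is a bijection mod $D$ because $\alpha=\pm1$, and the equations form chains along which one free ``seed'' value propagates, giving one factor of $D$ per chain and $s$ chains in total---but you establish the acyclicity of the dependency structure by a genuinely different device. The paper simply relabels the variables by a cyclic shift so that $A=\intbrr{0:k-s-1}$; then each equation determines $x_m$ from $x_{m+s}$ with $m+s\le k-1$, so the system is triangular and back-substitution from $m=k-s-1$ downward leaves exactly the $s$ top variables $x_{k-s},\ldots,x_{k-1}$ free. You instead keep the original labels, pass to the functional graph $m\mapsto[m+s]_k$ restricted to $A$, and rule out directed cycles by the coset-covering argument: a cycle forces a full coset of $\gcd(k,s)\,\Z/k\Z$ into $A$, which needs an arc of at least $k-\gcd(k,s)+1$ elements, impossible since $|A|=k-s$ and $\gcd(k,s)\le s$. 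The paper's relabeling is shorter, while your version makes explicit exactly where the hypothesis that $A$ is an \emph{interval} (rather than an arbitrary $(k-s)$-element set) is used, and your closing remark that cycles would perturb the count (extra consistency conditions or extra free variables) is accurate. One small point: the subscript in \eqref{eqn:system} should be read modulo $k$, not $D$, as you implicitly (and correctly) did when setting $\phi(m)=[m+s]_k$; the counting of in-degree-zero vertices, $k-|\phi(A)|=s$, and the disjointness of variables across path components then give exactly $D^s$ solutions, matching the paper.
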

\begin{proof}[Proof of Lemma~\ref{lem:la}]
Due to the cyclic symmetry of $[m+s]_k$, we may relabel the variables as $x_m=v_{[m+a]_k}$, where $a$ is the first entry of $A$, and take $A=\intbrr{0:k-s-1}$. Because $\alpha=\pm1$ and $b$ is just a shift, the map $x\mapsto [\alpha x+b]_D$ is a bijection (permutation) on $\intbrr{0:D-1}$. Thus solving the system from the bottom, starting with $m=k-s-1$, we see there are $s$ free variables $x_{k-1},x_{k-2},\ldots,x_{k-s}$, whose indices do not appear in $A$, and the rest of the $x_m$'s are determined by those. 
Then there are $D^s$ solutions by taking each free variable in $\intbrr{0:D-1}$.
\end{proof}

\subsection{Proof of Theorem~\ref{thm:walsh-proj}}\label{subsec:proof-thm8.2}
Going from Proposition~\ref{prop:walsh-powers} to Theorem~\ref{thm:walsh-proj} is an application of \cite[\S8]{pw}. We write the outline here for $D\ge3$. For $D=2$, one replaces instances of $4k$ with $2k$.
Instead of the Beurling--Selberg approximation or a Fourier series approximation like for Proposition~\ref{prop:mp-sf}, one takes the polynomial
\begin{align}
p_{k,\jj}(z)=1+\sum_{m=1}^{4k-1}\big(e^{-2\pi i\jj/(4k)}\big)^m z^m.
\end{align}
Since
$\frac{z^{4k}-1}{z-1}=1+z+z^2+\cdots+z^{4k-1}$ is zero at all $4k$-th roots of unity except for $z=1$, then $\frac{1}{4k}p_{k,\jj}(B_k^\Wa)$ is exactly the spectral projection onto the eigenspace of $e^{2\pi i\jj/(4k)}$. The matrix entry estimates in Proposition~\ref{prop:walsh-powers} for powers $1,\ldots,4k-1$, with the cut-off argument  in \cite[\S8.2--8.4/Fig.~8]{pw} to be able to ignore powers near $m=1,2k,4k$ (where the nonzero matrix entries are large), then prove Theorem~\ref{thm:walsh-proj}. 
To give a rough outline, one picks a cut-off $r(k)$, say $r(k)=k/2$, and then defines the set of good coordinates $G_{k,\ell}$ for the diagonal estimates as 
\begin{multline*}
G_{k,\ell}=\big\{|\varepsilon'\cdot\varepsilon\rangle:\langle\varepsilon'\cdot\varepsilon|(B_k^\Wa)^m|\varepsilon'\cdot\varepsilon\rangle=0\\\text{ for }m\in\intbrr{1:r(k)}\cup\intbrr{2k-r(k):2k+r(k)}\cup\intbrr{4k-r(k):4k-1} \big\}.
\end{multline*}
By Proposition~\ref{prop:walsh-powers}(i), for $D$ odd, this only excludes $\mathcal{O}(D^{r(k)})=o(D^k/(4k))$ coordinates. For $D$ even and $D\ge4$, this excludes $\mathcal{O}(D^{r(k)})+2^k$ coordinates,  with the extra term from the contribution of solutions when $m=2k$. However this quantity is still $o(D^k/(4k))$ for $D>2$. (When $D=2$, one stops the polynomial $p_{k,\jj}$ at $m=2k-1$ and there is no need to consider $m=2k$.)
For coherent states $|\varepsilon'\cdot\varepsilon\rangle\in G_{k,\ell}$, then
\begin{multline*}
\langle\varepsilon'\cdot\varepsilon|P_\jj|\varepsilon'\cdot\varepsilon\rangle = \frac{1}{4k}\Bigg(1+\sum_{m=r(k)+1}^{2k-r(k)-1}\big(e^{-2\pi i\jj/(4k)}\big)^m\langle\varepsilon'\cdot\varepsilon|(B_k^\Wa)^m|\varepsilon'\cdot\varepsilon\rangle +\\
+ \sum_{m=2k+r(k)+1}^{4k-r(k)-1}\big(e^{-2\pi i\jj/(4k)}\big)^m\langle\varepsilon'\cdot\varepsilon|(B_k^\Wa)^m|\varepsilon'\cdot\varepsilon\rangle\Bigg),
\end{multline*}
but the terms involving $\langle\varepsilon'\cdot\varepsilon|(B_k^\Wa)^j|\varepsilon'\cdot\varepsilon\rangle$ are small and in total only contribute $\mathcal{O}(D^{-r(k)/2})$ due to Proposition~\ref{prop:walsh-powers}(ii). This gives \eqref{eqn:wdiag} with $\mathcal{O}(D^{-r(k)/2})$ as a more precise $o(1)$ error term. The off-diagonal estimates \eqref{eqn:woffdiag} and those for coordinates $(x\pm1,x)$ are similar using the rest of Proposition~\ref{prop:walsh-powers}. For similar details see also \cite[\S8.4]{pw}.

\subsection{Proof of Theorem~\ref{thm:walsh-main}}\label{sec:awalsh}

\subsubsection{Part (i)} This follows from Theorem~\ref{thm:walsh-proj} using the same methods as in \cite[Theorem 2.5(b)/\S6.2(b)]{pw}, using the quantitative version of Theorem~\ref{thm:conv} which was developed in \cite{Meckes,ChatterjeeMeckes}. 
Let $(\psi^{(k,m)})_{m=1}^{D^k}\in\Omega_k$ denote an orthonormal eigenbasis for $B_k^\Wa$. 
The main part is to establish the following quantitative inequality, for some choice of $\varepsilon_k\to0$,
\begin{align}\label{eqn:gauss-prob}
\P\left[\max_{m\in\intbrr{1:D^k}}\left|\int f(x)\,d\mu^{\psi^{(k,m)}}_{k,\ell(k)}(x)-\E f(Z)\right|>\varepsilon_k\|f\|_{\operatorname{Lip}}\right]\le CD^k\exp\left[-c\left(\frac{D^k}{4k}\right)^{1/2}(1-o(1))\right],
\end{align}
where $f:\C\to\C$ is any bounded Lipschitz function with Lipschitz constant $\|f\|_{\mathrm{Lip}}:=\sup_{x\ne y}\frac{|f(x)-f(y)|}{|x-y|}<\infty$, and $Z\sim N_\C(0,1)$.

The quantitative estimates from \cite{Meckes,ChatterjeeMeckes} lead to
\begin{thm}[see {\cite[Corollary 6.2]{pw}}]
Let $\C^n=V^{[1]}\oplus\cdots\oplus V^{[\kappa]}$, and let $P^{[m]}$ be the orthogonal projection onto the subspace $V^{[m]}$. Suppose there is $A$ and $d_1,\ldots,d_\kappa\in\R_+$ so that
\begin{equation}\label{eqn:quant-cond}
	\sum_{x=1}^n\left|\|P^{[m]}e_x\|_2^2-\frac{d_m}{n}\right|\le A,\quad \forall m\in\intbrr{\kappa}.
\end{equation}
Choose a random orthonormal basis $(\phi^{[j]})_{j=1}^n$ for $\C^n$ by choosing a random orthonormal basis from each $V^{[m]}$ (according to Haar measure) and embedding it by inclusion in $\C^n$. For each $j\in\intbrr{n}$, let 
\[
\mu^{[j]}:=\frac{1}{n}\sum_{x=1}^n\delta_{\sqrt{n}\phi^{[j]}_x},
\]
the empirical distribution for the $j$th basis vector's coordinates. There are absolute numerical constants $C,c>0$ so that for any $f:\C\to\C$ bounded $L$-Lipschitz and $\varepsilon>\frac{2L(2A+3)}{(\min d_m-A)-1}$,
\begin{equation}
	\mathbb{P}\left[\max_{j\in\intbrr{1:n}}\left|\int f(x)\,d\mu^{[j]}(x)-\mathbb{E}f( Z)\right|>\varepsilon\right]\le Cn\exp\left(-\frac{c\varepsilon^2(\min d_m-A)}{L^2}\right),
\end{equation}
where $Z\sim N_\C(0,1)$.
\end{thm}
Applying this with $n=D^k$ and each $V^{[j]}$ denoting an eigenspace of $B_k^\Wa$, the diagonal estimates in Theorem~\ref{thm:walsh-proj} allow us to take all $d_m=D^k/(4k)$, and $A=o(D^k/(4k))$ in \eqref{eqn:quant-cond}. 
Setting
\[
\varepsilon_k=\max\left(\frac{4A+6}{(d_m-A)-1},\frac{1}{(D^k/4k)^{1/2}}\right)\to0
\]
yields \eqref{eqn:gauss-prob}.

Finally, let $(f_j)_{j=1}^\infty$ be a countable set of Lipschitz functions with compact support that is dense in $C_c(\C)$ in the $\|\cdot\|_\infty$ norm, and take
\begin{align}
\Pi_k=\left\{(\psi^{(k,m)})_{m=1}^{D^k}\in\Omega_k:\forall m,j\in\intbrr{1:D^k},\left|\int_\C f_j\,d\mu^{\psi^{(k,m)}}_{k,\ell(k)}-\E f_j(Z)\right|\le \varepsilon_k\|f_j\|_\mathrm{Lip}\right\}.
\end{align}
Applying \eqref{eqn:gauss-prob} with a union bound over $j\in\intbrr{1:D^k}$ shows that
\begin{align}
\P[\Pi_k^c] &\le C (D^k)^2\exp\left[-c\left(\frac{D^k}{4k}\right)^{1/2}(1-o(1))\right]\to0.
\end{align}
By the definition of $\Pi_k$, for any $f_j$, as $k\to\infty$,
\begin{align}\label{eqn:eq}
\int_\C f_j\,d\mu_{k,\ell(k)}^{\psi^{(k,m_k)}}\to\E f_j(Z),
\end{align}
for any sequence $(m_k)_k$, $m_k\in\intbrr{1:D^k}$.
Denseness of the $(f_j)_j$ implies \eqref{eqn:eq} also holds for compactly supported functions $f\in C_c(\C)$, and this vague convergence with tightness implies the desired weak convergence.

\subsubsection{Part (ii)} This follows from Theorems~\ref{thm:walsh-proj} (matrix powers) and \ref{thm:walsh-lweyl} (generalized Weyl law) using similar methods as in Theorem~\ref{thm:rw}(ii) or \cite[\S6.2(c)]{pw}, using the Hanson--Wright inequality (cf. \cite[Theorem 4.1]{ChatterjeeGalkowski}) and the quantitative convergence as in part (i) above.
First one establishes the inequality for some $\varepsilon_k\to0$,
\begin{align}\label{eqn:equi-prob}
\P\left[\max_{m\in\intbrr{1:D^k}}\left|\langle \psi^{(k,m)}|\operatorname{Op}_{k,\ell}(a)|\psi^{(k,m)}\rangle-\int_{\T^2}a(\x)\,d\x\right|>\varepsilon_k\|a\|_\infty\right]\le CD^{k}\exp\left[-c\left(\frac{D^k}{4k}\right)^{1/2}(1-o(1))\right],
\end{align}
using the method in the proof of Theorem~\ref{thm:rw}(ii) in Section~\ref{subsubsec:hw}, followed by a union bound (e.g. \cite[Lemma 6.3]{pw}). 

Then one can take a countable subset $(a_j)_j$ in $\operatorname{Lip}(\T^2)$ that is dense with respect to the $\|\cdot\|_\infty$ norm, and set
\begin{align}
\Gamma_k=\left\{(\psi^{(k,m)})_{m=1}^{D^k}\in\Omega_k:\forall m,j\in\intbrr{1:D^k},\left|\langle \psi^{(k,m)}|\operatorname{Op}_{k,\ell}(a_j)|\psi^{(k,m)}\rangle-\int_{\T^2}a_j(\x)\,d\x\right|\le \varepsilon_k\|a_j\|_\infty\right\},
\end{align}
so that
\begin{align}
\P[\Gamma_k^c] &\le C(D^k)^2\exp\left[-c\left(\frac{D^k}{4k}\right)^{1/2}(1-o(1))\right]\to0.
\end{align}
Denseness of $(a_j)_j$ combined with the bound $\|\operatorname{Op}_{k,\ell}(a-a_j)\|\le \|a-a_j\|_\infty$ shows that equidistribution for all $a\in\operatorname{Lip}(\T^2)$ holds as well.

\subsubsection{Parts (iii), (iv), (v)} These are essentially the same as in the proof of Theorem~\ref{thm:rw} given in Section~\ref{subsec:rw-proof}. We need the relevant projection matrix estimates, which are provided by Theorem~\ref{thm:walsh-proj} and the more specific estimate Eq.~\eqref{eqn:walsh-gset}. For (iii) and (iv), since we work with random {unit} vectors instead of the random Gaussian vector \eqref{eqn:rw}, we also need to use concentration of the norm $\|g\|_2$ near $\sqrt{n}$ for $g\sim N(0,I_n)$; see e.g. \cite[Theorem 3.1.1]{Vershynin}. This allows one to essentially replace a unit vector $u\sim\frac{g}{\|g\|_2}$ with the Gaussian vector $\frac{g}{\sqrt{n}}$ for large $n$. Part (v) involves sign changes so is unchanged whether one normalizes by $\|g\|_2$ or $\sqrt{\operatorname{dim}E_{(\jj)}}$.

\appendix

\section{Windowed quantum ergodicity}\label{sec:wqe}
In this section we provide the proof of  Theorem~\ref{thm:qv} on the windowed quantum variance decay. In \cite{DNW}, the (non-windowed) quantum variance decay was proved using exponential decay of correlations and trace properties. For ease of applying the windowed version of the generalized Weyl law, we follow the more usual proof method described for example in \cite{zworski,MOK}, using just ergodicity. In this method, one replaces use of the generalized Weyl law with its windowed version Theorem~\ref{thm:lweyl}.
Corollary~\ref{cor:qe} then follows from Theorem~\ref{thm:qv} by the standard Chebyshev--Markov and density argument, e.g. see \cite[Ch.15]{zworski} or \cite{MOK,DNW}. 

First, we collect several results we will need for the proof.
In what follows we drop the $N$ superscript on the eigenvalues $\theta^{(j,N)}$ and (orthonormal) eigenvectors $\varphi^{(j,N)}$.

\subsection{Preliminaries}
There is an explicit bound on the rate of convergence for $C^1$ functions for the baker map (using the equivalence with a two-sided Bernoulli shift) in the $L^2$ ergodic theorem, which follows from the following exponential decay of correlations (also used in \cite{DNW}),
\[
\left|\int_{\T^2}a(\x)b(B^{-n}(\x))\,d\x-\int_{\T^2}a(\x)\,d\x\int_{\T^2}b(\x)\,d\x\right| \le C\|a\|_{C^1}\|b\|_{C^1}e^{-\Gamma|n|}.
\]
By considering observables $f$ with $\int_{\T^2}f(\mathbf{y})\,d\mathbf{y}=0$ and changing variables under $B^r$ which is measure-preserving, one obtains,
\begin{lem}[ergodicity rate]\label{lem:erg-rate}
There is a constant $C>0$ so that for the classical baker map $B:\T^2\to\T^2$ and any smooth observable $f\in C^\infty(\T^2)$,
\begin{align}
\int_{\T^2}\left|\frac{1}{T}\sum_{t=0}^{T-1}f\circ B^{-t}(\x)-\int_{\T^2}f(\mathbf{y})\,d\mathbf{y}\right|^2\,d\x &\le \frac{C\|f\|_{C^1}^2}{T}.
\end{align}
\end{lem}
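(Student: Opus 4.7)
The plan is to reduce to a mean-zero observable, expand the squared $L^2$ norm into a double sum of two-point correlations, and then apply the exponential decay of correlations bound quoted just above the lemma statement.

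First I would write $f = f_0 + \bar f$ where $\bar f := \int_{\T^2} f(\mathbf{y})\,d\mathbf{y}$ and $f_0 := f - \bar f$ has zero mean. Since $\frac{1}{T}\sum_{t=0}^{T-1}f\circ B^{-t} - \bar f = \frac{1}{T}\sum_{t=0}^{T-1} f_0\circ B^{-t}$ and $\|f_0\|_{C^1}\le 2\|f\|_{C^1}$, it suffices to prove the inequality for $f_0$ in place of $f$. So assume $\int_{\T^2} f\,d\x = 0$.

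Next, expand the square:
\begin{align*}
\int_{\T^2}\bigg|\frac{1}{T}\sum_{t=0}^{T-1}f\circ B^{-t}(\x)\bigg|^2 d\x &= \frac{1}{T^2}\sum_{s,t=0}^{T-1}\int_{\T^2} f(B^{-t}\x)\,\overline{f(B^{-s}\x)}\,d\x.
\end{align*}
For each pair $(s,t)$, apply the measure preservation of $B$ via the change of variables $\mathbf{y}=B^{-s}\x$ (so that $B^{-t}\x=B^{s-t}\mathbf{y}$) to obtain
\begin{align*}
\int_{\T^2} f(B^{-t}\x)\,\overline{f(B^{-s}\x)}\,d\x &= \int_{\T^2} \overline{f(\mathbf{y})}\,f(B^{-(t-s)}\mathbf{y})\,d\mathbf{y}.
\end{align*}
Since $\int f = 0$, the exponential decay of correlations displayed in the preceding paragraph gives
\begin{align*}
\bigg|\int_{\T^2} \overline{f(\mathbf{y})}\,f(B^{-(t-s)}\mathbf{y})\,d\mathbf{y}\bigg| \le C\|f\|_{C^1}^2 e^{-\Gamma|t-s|}.
\end{align*}

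Summing the geometric bound, with the number of pairs at fixed separation $|t-s|=n$ being at most $T$,
\begin{align*}
\frac{1}{T^2}\sum_{s,t=0}^{T-1}\bigl|\cdots\bigr| \le \frac{C\|f\|_{C^1}^2}{T^2}\sum_{s,t=0}^{T-1} e^{-\Gamma|t-s|} \le \frac{C\|f\|_{C^1}^2}{T^2}\cdot T \sum_{n\in\Z}e^{-\Gamma|n|}\le \frac{C'\|f\|_{C^1}^2}{T},
\end{align*}
as desired. There is no real obstacle here; the only minor point is to carefully use measure preservation so that each two-point correlation reduces to a single correlation at time lag $t-s$, and to absorb the mean-subtraction step into a slightly larger constant.
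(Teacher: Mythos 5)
Your proposal is correct and follows essentially the same route as the paper, which only sketches this argument: reduce to a mean-zero observable, expand the square, use measure preservation of $B$ to reduce each term to a correlation at lag $t-s$, apply the quoted exponential decay of correlations, and sum the geometric series. No gaps; the mean-subtraction and change-of-variables details you spell out are exactly the steps the paper leaves implicit.
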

An explicit decay rate is not needed to prove quantum ergodicity, but it does allow one to obtain quantitative error estimates.

Next, we need the classical-quantum correspondence principle, which relates quantum evolution by $\hat{B}_N$ to classical evolution by $B$ for short times.
\begin{thm}[Egorov theorem, {\cite[Eq. (5.27)]{DNW}}]\label{thm:egorov}
Let $\delta\in(0,2^{-t-1}),\gamma\in(0,1/2)$. Then if $a=a_N\in C^\infty(\T^2)$ is supported in $\mathcal{G}_{t,\delta,\gamma,N}$, as $N\to\infty$,
\begin{multline}
\left\|\hat{B}_N^t\operatorname{Op}_N^\W(a)\hat{B}_N^{-t}-\operatorname{Op}_N^\W(a\circ B^{-t})\right\|\lesssim \|a\|_{C^0}N^{5/4}2^{t/4}e^{-\pi N\min(2^t\delta^2,\frac{\gamma^2}{2^t})}+\\
+\frac{2^t}{N}(\|a\|_{C^5}+\|a\circ B^{-t}\|_{C^5}),
\end{multline}
where the implied constant in the notation $\lesssim$ is uniform in $t,\delta,\gamma$.
\end{thm}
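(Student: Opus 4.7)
The plan is to derive the Egorov bound by lifting the pointwise coherent-state evolution of Proposition~\ref{prop:iter5} to an operator-norm statement, using the coherent-state resolution of identity and an anti-Wick--Weyl comparison. Define the anti-Wick quantization with squeezing $\sigma \in [1/N,N]$ by
\[
\operatorname{Op}_N^{\mathrm{AW},\sigma}(a) := N\int_{\T^2} a(\x)\,|\Psi_{\x,\sigma,\T^2}\rangle\langle\Psi_{\x,\sigma,\T^2}|\,d\x,
\]
and recall the standard comparison (which I would either cite or re-derive from Taylor-expanding the Gaussian smoothing of the Weyl symbol) of the form
\[
\big\|\operatorname{Op}_N^\W(a)-\operatorname{Op}_N^{\mathrm{AW},\sigma}(a)\big\|\,\lesssim\,\frac{\|a\|_{C^5}}{N}\max\!\bigl(\sigma^{1/2},\sigma^{-1/2}\bigr),
\]
the exponent $5$ being chosen to match the final bound. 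On the left-hand side of Egorov I would use $\sigma=1$, so this first comparison costs at most $\|a\|_{C^5}/N$, fitting inside the second term of the stated estimate.

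Next I would write
\[
\hat{B}_N^t\operatorname{Op}_N^{\mathrm{AW},1}(a)\hat{B}_N^{-t}=N\!\int_{\T^2}a(\x)\,|\hat{B}_N^t\Psi_{\x,1,\T^2}\rangle\langle\hat{B}_N^t\Psi_{\x,1,\T^2}|\,d\x,
\]
and restrict the integration to $\operatorname{supp}(a)\subset\mathcal{G}_{t,\delta,\gamma,N}$, where Proposition~\ref{prop:iter5} applies. There it gives
\[
\hat{B}_N^t\Psi_{\x,1,\T^2}=e^{iN\pi\Theta_t(\x)}\Psi_{B^t\x,4^{-t},\T^2}+R_N(\x),\qquad \|R_N(\x)\|=\mathcal{O}\bigl(N^{3/4}e^{-\pi N\min(2^t\delta^2,\gamma^2/2^t)}\bigr),
\]
where I have used iteration to propagate $\theta$ to the minimum over the forward trajectory. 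The unimodular phase cancels in the rank-one projector; expanding the bilinear $|\cdot\rangle\langle\cdot|$ into clean, cross, and remainder terms and integrating against $a$ yields an operator-norm error of size $\|a\|_{C^0}N^{5/4}2^{t/4}\exp(-\pi N\min(2^t\delta^2,\gamma^2/2^t))$, which is precisely the first term of the claimed bound (the $2^{t/4}$ coming from the $L^2$ norm $\|\Psi_{B^t\x,4^{-t},\T^2}\|\lesssim 1$ combined with the $\sigma^{1/4}$ in Proposition~\ref{prop:iter5}).

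The main term that remains is $N\int_{\T^2}a(\x)|\Psi_{B^t\x,4^{-t},\T^2}\rangle\langle\Psi_{B^t\x,4^{-t},\T^2}|\,d\x$. Because $B$ is measure-preserving (and a bi-Lipschitz bijection off the discontinuity set, which is disjoint from $\operatorname{supp}(a)$ by hypothesis), the change of variables $\mathbf{y}=B^t\x$ converts this integral into $\operatorname{Op}_N^{\mathrm{AW},4^{-t}}(a\circ B^{-t})$. Applying the anti-Wick--Weyl comparison a second time, now with the narrow squeezing $\sigma=4^{-t}$, produces an error
\[
\big\|\operatorname{Op}_N^{\mathrm{AW},4^{-t}}(a\circ B^{-t})-\operatorname{Op}_N^\W(a\circ B^{-t})\big\|\lesssim \frac{2^t}{N}\|a\circ B^{-t}\|_{C^5},
\]
which together with the first-comparison error gives the $\frac{2^t}{N}(\|a\|_{C^5}+\|a\circ B^{-t}\|_{C^5})$ contribution.

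The hard part is the sharp $\sigma$-dependence in the anti-Wick--Weyl comparison at squeezing $\sigma=4^{-t}$: after $t$ iterations the coherent states are squeezed to width $2^{-t}$ in $q$ and $2^{t}$ in $p$, so one needs a quantitative bound that tracks how Gaussian smoothing in phase space produces an error controlled by derivatives of $a\circ B^{-t}$ with the correct power of $2^t$ rather than the naive $4^t$. Making this precise is where the $C^5$ regularity is consumed; once the comparison is established with the stated $\sqrt{\max(\sigma,\sigma^{-1})}$ behaviour, the remaining steps (coherent-state resolution, measure-preserving change of variables, and combining the exponentially small tail with the polynomial-in-$t$ main error) are routine.
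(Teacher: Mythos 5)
First, note that the paper does not prove this theorem at all: it is quoted verbatim from \cite{DNW} (their equation (5.27)), so the only fair comparison is with that source. Your overall architecture --- coherent-state resolution of the identity, propagation of coherent states via (an iterated) Proposition~\ref{prop:iter5}, a Weyl/anti-Wick comparison on each side, and the measure-preserving change of variables $\mathbf{y}=B^t\x$ --- is indeed the route taken in \cite{DNW}, so the strategy is the right one.

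However, the step you yourself flag as ``the hard part'' is a genuine gap, and the lemma you posit to fill it is false as stated. The comparison $\|\operatorname{Op}_N^\W(b)-\operatorname{Op}_N^{\mathrm{AW},\sigma}(b)\|\lesssim N^{-1}\|b\|_{C^5}\max(\sigma^{1/2},\sigma^{-1/2})$ cannot hold for general symbols: up to exponentially small torus corrections, the anti-Wick symbol is the Weyl symbol smoothed by a Gaussian with variances of order $(N\sigma)^{-1}$ in $q$ and $\sigma N^{-1}$ in $p$, so the leading discrepancy is of size $N^{-1}\bigl(\sigma^{-1}\|\partial_q^2 b\|_\infty+\sigma\|\partial_p^2 b\|_\infty\bigr)$; taking $b=b(p)$ and letting $\sigma\to\infty$ shows that no number of extra derivatives improves the generic dependence from $\max(\sigma,\sigma^{-1})$ to its square root. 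At $\sigma=4^{-t}$ the generic bound gives $4^t/N$, which does not fit inside the stated $2^t/N$ term. What actually rescues the argument is not a sharper symbol-independent comparison but the anisotropy of the specific symbol $a\circ B^{-t}$: as in Lemma~\ref{lem:norm}, each $\partial_q$ costs a factor $2^{-t}$ and each $\partial_p$ a factor $2^t$, which precisely compensates the $\sigma^{\mp1}$ in the smoothing error; exploiting this structure is where the real work (and the $C^5$ norms of both $a$ and $a\circ B^{-t}$) enters, and your proposal supplies none of it. Two secondary mismatches point the same way: with initial squeezing $\sigma=1$, Proposition~\ref{prop:iter5} gives the exponent $\min(\delta^2,\gamma^2)$, not $\min(2^t\delta^2,\gamma^2/2^t)$ --- ``propagating $\theta$ along the forward trajectory'' is not what that proposition says, since its $\theta$ is independent of the number of iterations --- and the stated exponent together with the $2^{t/4}$ prefactor arise naturally only if one starts from squeezing $\sigma=2^t$ (so the evolved states have squeezing $2^{-t}$), which is the choice your sketch would need to make to reproduce the quoted bound.
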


We define cut-off functions as in \cite[\S5.3.1]{DNW}. For $0<\beta<1/4$, let $\tilde{\Chi}_\beta\in C^\infty(\R/\Z)$ which is zero in $[-\beta,\beta]\mod \Z$ and is 1 for $q\in[2\beta,1-2\beta]\mod \Z$. Specifically, we can start with a smooth bump function $\eta$ supported in $[-1/2,1/2]$, define $\eta_{\beta}(q):=\beta^{-1}\eta(\beta^{-1}q)$, and then take $\tilde{\Chi}_\beta:=\oneb_{[3\beta/2,1-3\beta/2]}*\eta_\beta$. The derivatives of $\tilde{\Chi}_\beta$ then satisfy $\|\partial^j\tilde{\Chi}_\beta\|_\infty\le \beta^{-j}\|\partial^j\eta\|_{L^1}$.

Now define for any $n\ge0$, 
\begin{align*}
\Chi_{\beta,n}(\x)&:=\tilde{\Chi}_\beta(2^nq)\tilde{\Chi}_\beta(p),\qquad \text{for }\x=(q,p).
\end{align*}
For $a\in C^\infty(\T^2)$, split $a$ into a ``good part'' and a ``bad part'',
\begin{align*}
a_n(\x)&:=a(\x)\Chi_{\delta,n}(\x),\qquad a_n^\mathrm{bad}(\x)=a(\x)-a_n(\x).
\end{align*}
The good part $a_n$ is supported on $\mathcal{G}_{n,\beta/2^n,\beta,N}$, while the bad part $a_n^\mathrm{bad}$ is supported on a region of area $\mathcal{O}(\beta)$. Additionally, $a_n$ behaves nicely through $t$ iterations of the classical map $B$, for $t\le n$.

Since the error terms in our estimates will involve quantities like $\|f\|_{C^m}$, the following norm estimates will be useful.
\begin{lem}[norm estimates, \cite{DNW}]\label{lem:norm}
For $a\in C^\infty(\T^2)$ and $j\in\N_0$, we have the following estimates, for $t\le n$,
\begin{align}\label{eqn:norm-bound}
\max(\|a_n\|_{C^j},\|a_n^\mathrm{bad}\|_{C^j},\|a_n\circ B^{-t}\|_{C^j}) &\le C_j\|a\|_{C^j}2^{nj}\beta^{-j}.
\end{align}
\end{lem}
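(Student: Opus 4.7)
The plan is to unravel each of the three norms via the product/chain rule and use the explicit structure of the cut-off $\Chi_{\beta,n}$ and the inverse baker map.

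First I would record the one-variable estimate. Since $\tilde{\Chi}_\beta=\oneb_{[3\beta/2,1-3\beta/2]}*\eta_\beta$ and $\eta_\beta(q)=\beta^{-1}\eta(\beta^{-1}q)$, differentiation passes onto $\eta_\beta$ and gives $\|\partial^j\tilde{\Chi}_\beta\|_\infty\le\beta^{-j}\|\partial^j\eta\|_{L^1}$, as already noted in the paper. Applying the chain rule to $q\mapsto\tilde{\Chi}_\beta(2^n q)$ then yields
\[
\|\partial_q^\alpha\partial_p^{\beta_0}\Chi_{\beta,n}\|_\infty \le C_{\alpha+\beta_0}\,2^{n\alpha}\,\beta^{-(\alpha+\beta_0)},
\]
for any multi-index $(\alpha,\beta_0)$, where the $2^{n\alpha}$ factor comes from the $q$-derivatives and the $\beta^{-1}$ factors come from each derivative on $\tilde{\Chi}_\beta$.

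Next, I would handle $\|a_n\|_{C^j}$ by Leibniz on the product $a\cdot\Chi_{\beta,n}$. For any multi-index of total order $\le j$, the worst case has all derivatives landing on $\Chi_{\beta,n}$, which contributes $2^{nj}\beta^{-j}\|a\|_\infty$, while intermediate cases are dominated by this. The estimate $\|a_n^{\mathrm{bad}}\|_{C^j}\le C_j\|a\|_{C^j}2^{nj}\beta^{-j}$ is then immediate since $a_n^{\mathrm{bad}}=a(1-\Chi_{\beta,n})$ and $\|1-\Chi_{\beta,n}\|_{C^j}$ is controlled the same way (and the $1$ contributes only to the $j=0$ piece, bounded by $\|a\|_{C^j}$).

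The main point, and the place where one has to be a bit careful, is the bound for $\|a_n\circ B^{-t}\|_{C^j}$ when $t\le n$. Away from its discontinuities the inverse baker map $B^{-t}$ contracts the $q$-direction by $2^{-t}$ and expands the $p$-direction by $2^{t}$, and the Jacobian is locally diagonal; moreover $a_n$ is supported away from the discontinuities by construction, so the chain rule applies pointwise on the support. Writing $\partial^\alpha_q\partial^{\beta_0}_p(a_n\circ B^{-t})=2^{-t\alpha}2^{t\beta_0}(\partial_q^\alpha\partial_p^{\beta_0}a_n)\circ B^{-t}$ and combining with the previous step gives
\[
\|\partial_q^\alpha\partial_p^{\beta_0}(a_n\circ B^{-t})\|_\infty\le C_{\alpha+\beta_0}\,\|a\|_{C^{\alpha+\beta_0}}\,2^{-t\alpha}2^{t\beta_0}2^{n\alpha}\beta^{-(\alpha+\beta_0)}.
\]
Since $t\le n$, one has $2^{-t\alpha}2^{t\beta_0}2^{n\alpha}\le 2^{n\beta_0}2^{n\alpha}=2^{n(\alpha+\beta_0)}\le 2^{nj}$, which yields the claimed bound. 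The key (and only nontrivial) input is precisely this regime restriction $t\le n$: the contracting $q$-direction exactly cancels the $2^{nj}$ price one would otherwise pay from the oscillations of $\tilde{\Chi}_\beta(2^n\cdot)$ along the forward orbit, so the expanding $p$-direction (which contributes $2^{t\beta_0}\le 2^{n\beta_0}$) remains dominated by $2^{nj}$. If one tried to push to $t>n$, the $2^{-t\alpha}2^{n\alpha}$ factor becomes $<1$ but $2^{t\beta_0}>2^{n\beta_0}$ would exceed the stated bound, so the hypothesis $t\le n$ is sharp for this formulation.
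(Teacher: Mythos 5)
Your proposal is correct and follows essentially the same route as the paper: Leibniz on $a\cdot\Chi_{\beta,n}$ (with the $2^{n}$ factor attached only to $q$-derivatives), the observation that $a_n\circ B^{-t}$ vanishes near the discontinuities of $B^{-t}$ so the locally affine, diagonal chain rule applies, and the combination $2^{(n-t)\alpha}2^{t\beta_0}\le 2^{nj}$ for $t\le n$, with the bad part handled trivially from $a_n^{\mathrm{bad}}=a-a_n$. No gaps.
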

\begin{proof}

By the Leibniz product formula, for $\gamma=(\gamma_1,\gamma_2)\in\N_0^2$ and $|\gamma|=\gamma_1+\gamma_2$,
\begin{align*}
\|\partial^\gamma a_n\|_\infty &\le C_{|\gamma|}\|a\|_{C^{|\gamma|}}2^{n\gamma_1}\beta^{-|\gamma|}.
\end{align*}

Also,
\begin{align*}
|\partial^\gamma(a_n\circ B^{-t})(q,p)| &=\left|\partial_q^{\gamma_1}\partial_p^{\gamma_2}\left[a_n(q/2^t-\xi(p)/2^t,2^tp-\lfloor 2^tp\rfloor)\right]\right| \\
&=2^{-t\gamma_1}2^{t\gamma_2}\left|(\partial^{\gamma_1}_q\partial_p^{\gamma_2}a_n)(q/2^t+\xi/2^t,2^tp-\lfloor 2^tp\rfloor)\right|\\
&\le 2^{-t\gamma_1}2^{t\gamma_2}\|\partial^\gamma a_n\|_\infty \le C_{|\gamma|}\|a\|_{C^{|\gamma|}}2^{(n-t)\gamma_1}2^{t\gamma_2}\beta^{-|\gamma|},
\end{align*}
where we were able to take derivatives of $a_n$ since the argument $(q/2^t+\xi/2^t,2^tp-\floor{2^tp})$ is only discontinuous at points outside the support of $a_n$.
For $t\le n$, then $(n-t)\gamma_1+t\gamma_2\le n(\gamma_1+\gamma_2)=n|\gamma|$, and so we obtain the upper bound in \eqref{eqn:norm-bound}.

The estimate with $a_n^\mathrm{bad}$ follows from that for $\|a_n\|_{C^j}$ since $a_n^\mathrm{bad}=a-a_n$.
\end{proof}

\subsection{Proof of Theorem~\ref{thm:qv} quantum variance decay}\label{subsec:qe-proof}
We may assume $\int_{\T^2} a(q,p)\,dq\,dp=0$. In what follows, the parameters $n,T,\beta$ will depend on $N$. Splitting $a=a_n+a_n^\mathrm{bad}$, then
\begin{multline}\label{eqn:qv-split}
\frac{2\pi}{N|I(N)|}\sum_{j\in I(N)}\big|\big\langle\varphi^{(j)}|\operatorname{Op}_N^\W(a)|\varphi^{(j)}\big\rangle\big|^2 \\
\le \frac{4\pi}{N|I(N)|}\sum_{j\in I(N)}\left(|\big\langle\varphi^{(j)}|\operatorname{Op}_N^\W(a_n)|\varphi^{(j)}\big\rangle\big|^2+\big|\big\langle\varphi^{(j)}|\operatorname{Op}_N^\W(a_n^\mathrm{bad})|\varphi^{(j)}\big\rangle\big|^2\right).
\end{multline}
Our goal is to show each of the terms in the sum is $o(1)$. The ``good part'' is the term involving $a_n$ which is supported away from the discontinuities, and the ``bad part'' is the term involving $a_n^\text{bad}$.

\subsubsection{Good part} 
This follows the standard quantum variance decay argument, though we need to pay more careful attention to the error bounds.
By the Egorov theorem (Theorem~\ref{thm:egorov}) with symbol $a_n$ supported in $\mathcal{G}_{n,\beta/2^n,\beta,N}$, for any $t\le n$ and eigenstate $\varphi^{(j)}$ of $\hat{B}_N$,
\begin{align*}
\big\langle\varphi^{(j)}|\operatorname{Op}_N^\W(a_n)|\varphi^{(j)}\big\rangle &=\langle \varphi^{(j)}|\hat{B}_N^t\operatorname{Op}_N^\W(a_n)\hat{B}_N^{-t}|\varphi^{(j)}\big\rangle \\
\numberthis\begin{split}\label{eqn:op-iter}
&=\big\langle\varphi^{(j)}|\operatorname{Op}_N^\W(a_n\circ B^{-t})|\varphi^{(j)}\big\rangle+\\
&\hspace{2cm}+\mathcal{O}\bigg(\|a_n\|_{C^0}N^{5/4}2^{t/4}e^{-\pi N\beta^2/2^t}+\frac{2^t(\|a_n\|_{C^5}+\|a_n\circ B^{-t}\|_{C^5})}{N}\bigg).
\end{split}
\end{align*}
By the norm estimates in Lemma~\ref{lem:norm},
\begin{align*}
\|a_n\|_{C^5} &\le C \|a\|_{C^5}2^{5n}\beta^{-5},\quad\text{and}\quad\|a_n\circ B^{-t}\|_{C^5} \le C\|a\|_{C^5}2^{5n}\beta^{-5}.
\end{align*}

The next step is to average over $t$, up to some time $T$. Letting $[a_n]_T:=\frac{1}{T}\sum_{t=0}^{T-1}a_n\circ B^{-t}$, then by \eqref{eqn:op-iter},
\begin{align}\label{eqn:7.5}
\big\langle\varphi^{(j)}|\operatorname{Op}_N^\W(a_n)|\varphi^{(j)}\big\rangle &=\big\langle\varphi^{(j)}|\operatorname{Op}_N^\W([a_n]_T)|\varphi^{(j)}\big\rangle +\mathcal{O}\Big(\frac{1}{T}\Big(2^{T/4}N^{5/4}e^{-\pi N\beta^2/2^T}\|a\|_{C^0}+\frac{2^T2^{5n}\|a\|_{C^5}}{N\beta^5}\Big)\Big).
\end{align}
The terms in \eqref{eqn:qv-split} are squared, so applying Cauchy--Schwarz, we obtain that
\begin{align*}
\big|\big\langle\varphi^{(j)}|\operatorname{Op}_N^\W([a_n]_T)|\varphi^{(j)}\big\rangle\big|^2 &\le \big\langle\varphi^{(j)}|\operatorname{Op}_N^\W([a_n]_T)^*\operatorname{Op}_N^\W([a_n]_T)|\varphi^{(j)}\big\rangle\\
&= \big\langle\varphi^{(j)}|\operatorname{Op}_N^\W(|[a_n]_T|^2)|\varphi^{(j)}\big\rangle +\mathcal{O}\Big(\frac{\||[a_n]_T|^2\|_{C^4}^2}{N}\Big),\numberthis\label{eqn:cs-term}
\end{align*}
where we also used e.g. \cite[Lemma 3.1]{MOK},
\begin{align}
\|\operatorname{Op}_N^\W(a)\operatorname{Op}_N^\W(b)-\operatorname{Op}_N^\W(ab)\|&\le C\frac{\|a\|_{C^4}\|b\|_{C^4}}{N}.
\end{align}
For the error term in \eqref{eqn:cs-term}, by the Leibniz product formula,
\begin{align*}
\||[a_n]_T|^2\|_{C^4} \le C_4\|[a_n]_T\|_{C^4}^2
&\le C_4\left(\frac{1}{T}\sum_{t=0}^{T-1}\|a_n\circ B^{-t}\|_{C^4}\right)^2\\
&\le C_4\|a\|_{C^4}^22^{8n}\beta^{-8}.
\end{align*}
Finally, using \eqref{eqn:7.5} and \eqref{eqn:cs-term}, we can then bound the quantum variance contribution from the good part $a_n$ as
\begin{multline}\label{eqn:qv-bound}
\frac{4\pi}{N|I(N)|}\sum_{\theta^{(j)}\in I(N)}\big|\big\langle\varphi^{(j)}|\operatorname{Op}_N^\W(a_n)|\varphi^{(j)}\big\rangle\big|^2 \\
\le \frac{8\pi}{N|I(N)|}\sum_{\theta^{(j)}\in I(N)}\big\langle\varphi^{(j)}|\operatorname{Op}_N^\W(|[a_n]_T|^2)|\varphi^{(j)}\big\rangle +\hspace{4cm}
\\
+ \mathcal{O}\bigg(\frac{1}{T^2}\Big(2^{T/4}N^{5/4}e^{-\pi N\beta^2/2^T}\|a\|_{C^0}+\frac{2^T2^{5n}\|a\|_{C^5}}{N\beta^5}\Big)^2\bigg)
+ \mathcal{O}\bigg(\frac{1}{N}\|a\|_{C^4}^42^{16n}\beta^{-16}\bigg).
\end{multline}
By the windowed generalized Weyl law (Theorem~\ref{thm:lweyl}, \eqref{eqn:lweyl}) and the ergodicity convergence rate in Lemma~\ref{lem:erg-rate}, the main term $\frac{2\pi}{N|I(N)|}\sum_{\theta^{(j)}\in I(N)}\big\langle\varphi^{(j)}|\operatorname{Op}_N^\W(|[a_n]_T|^2)|\varphi^{(j)}\big\rangle$ above is
\begin{align*}
\int_{\T^2} \left|[a_n]_T(\x)\right|^2 \,d\x +\mathcal{O}\Big(&(1+\||[a_n]_T|^2\|_{C^3})\frac{1}{|I(N)|J}\Big)\\ &\le 
2\int_{\T^2}\left|[a]_T(\x)\right|^2\,d\x+2\int_{\T^2}\left|[a_n^\mathrm{bad}]_T(\x)\right|^2\,d\x+\mathcal{O}\Big((1+\||[a_n]_T|^2\|_{C^3})\frac{1}{|I(N)|J}\Big)\\
&\le \frac{C\|a\|_{C^1}^2}{T}+\|a\|_\infty^2\mathcal{O}(\beta)+\frac{C(1+\|a\|_{C^3}^22^{6n}\beta^{-6})}{|I(N)|J},\numberthis\label{eqn:qv-good}
\end{align*}
where the second term in the last expression occurs since $a_n^\mathrm{bad}$ is supported on a region of area $\mathcal{O}(\beta)$.
We take $n=n(N)\to\infty$ sufficiently slowly, and  $\beta=\beta(N)\to0$ sufficiently slowly,  so that $\frac{2^{6n}\beta^{-6}}{|I(N)|J}\to0$. 
For example, making no attempt to optimize, one can take $\beta=n^{-1}$ and $n=\frac{1}{100}\log_2(|I(N)|J)$.
Since we need $T\le n$, simply take $T=n$, and then \eqref{eqn:qv-good} is of order 
\[
\frac{\|a\|_{C^3}}{\log(|I(N)|J)}+\frac{\|a\|_\infty^2}{\log(|I(N)|J)}+\frac{1}{|I(N)|J}+\frac{\|a\|_{C^3}^2(\log(|I(N)|J))^6}{(|I(N)|J)^{94/100}}\to0.
\]
These choices also ensure the error term in \eqref{eqn:qv-bound} is $o(1)$.

If one has a specific rate $|I(N)|\to0$, then one may optimize to obtain a better decay estimate.

\subsubsection{Bad part}
For the bad part, we just use the estimate,
\begin{align*}
\big|\big\langle\varphi^{(j)}|\operatorname{Op}_N^\W(a_n^\mathrm{bad})|\varphi^{(j)}\big\rangle\big|^2 &\le \big\langle\varphi^{(j)}|\operatorname{Op}_N^\W(a_n^\mathrm{bad})^*\operatorname{Op}_N^\W(a_n^\mathrm{bad})|\varphi^{(j)}\big\rangle \\
&= \big\langle\varphi^{(j)}|\operatorname{Op}_N^\W(|a_n^\mathrm{bad}|^2)|\varphi^{(j)}\big\rangle +\mathcal{O}\Big(\frac{\||a_n^\mathrm{bad}|^2\|_{C^4}^2}{N}\Big).
\end{align*}
There is the same bound as for good part $a_n$,  $\||a_n^\mathrm{bad}|^2\|_{C^4}\le C\|a\|_{C^4}2^{4n}\beta^{-4}$. Applying the windowed generalized Weyl law,
\begin{align*}
\frac{2\pi}{N|I(N)|}\sum_{\theta^{(j)}\in I(N)}\big|\big\langle\varphi^{(j)}|\operatorname{Op}_N^\W(a_n^\mathrm{bad})|\varphi^{(j)}\big\rangle\big|^2 &\le \frac{2\pi}{N|I(N)|}\sum_{\theta^{(j)}\in I(N)}\big\langle\varphi^{(j)}|\operatorname{Op}_N^\W(|a_n^\mathrm{bad}|^2)|\varphi^{(j)}\big\rangle +\mathcal{O}\Big(\frac{\|a\|_{C^4}^22^{8n}}{N\beta^8}\Big)\\
&=\int_{\T^2}|a_n^\mathrm{bad}(\x)|^2\,d\x+\mathcal{O}\Big(\frac{\|a\|_{C^3}^22^{6n}\beta^{-6}}{|I(N)|J}\Big)+\mathcal{O}\Big(\frac{\|a\|_{C^4}^22^{8n}}{N\beta^8}\Big)\\
&=\mathcal{O}(\beta)\|a\|_\infty^2+o(1).\numberthis\label{eqn:qv-bad}
\end{align*}
Thus combining \eqref{eqn:qv-bound}, \eqref{eqn:qv-good}, and \eqref{eqn:qv-bad} in \eqref{eqn:qv-split}, we obtain
\begin{align}
\lim_{N\to\infty}\frac{2\pi}{N|I(N)|}\sum_{j\in I(N)}\big|\big\langle\varphi^{(j)}|\operatorname{Op}_N^\W(a)|\varphi^{(j)}\big\rangle\big|^2 &=0,
\end{align}
proving Theorem~\ref{thm:qv}.\qed

\section{Extension to $q_N$}\label{sec:qN}

\subsection{Spectral estimates}\label{subsec:qN}
In this section, we state and prove the extension of Theorem~\ref{thm:Pmat} to functions $q_N$.
\begin{thm}[windowed spectral functions]\label{thm:Qmat} 
Let $N\in2\N$, and let $(e^{i\theta^{(j,N)}},\varphi^{(j,N)})_{j}$ be  eigenvalue-eigenvector pairs corresponding to an orthonormal eigenbasis $(\varphi^{(j,N)})_j$ of $\hat{B}_N$. Suppose $I(N)$ is a sequence of intervals in $\R/(2\pi \Z)$ such that
$|I(N)|\log N\to\infty$ as $N\to\infty$.
Let $q_N:\R/(2\pi\Z)\to\C$ be a sequence of $C^2$ functions with $\|q_N''\|_\infty=o(\log N)$, and define the operator
\begin{align*}
Q_{N,I(N)}:=(q_N\Chi_{I(N)})(\hat{B}_N)=\sum_{\theta^{(j,N)}\in I(N)}q_N(\theta^{(j,N)})|\varphi^{(j,N)}\rangle\langle\varphi^{(j,N)}|,
\end{align*}
where $|\varphi^{(j,N)}\rangle\langle\varphi^{(j,N)}|$ is the orthogonal projection onto the eigenstate $\varphi^{(j,N)}$.
Then for at least $N(1-o(|I(N)|))$ coordinates $x\in\intbrr{0:N-1}$, we have the pointwise estimate
\begin{align}\label{eqn:diag-Q}
(Q_{N,I(N)})_{xx}&=\frac{|I(N)|}{2\pi}\left(\frac{1}{|I(N)|}\int_{I(N)}q_N(z)\,dz+ o(1)(1+\|q_N\|_\infty)\right),
\end{align}
and for at least $N^2(1-o(|I(N)|))$ pairs $(x,y)\in\intbrr{0:N-1}^2$, we have the bound,
\begin{align}\label{eqn:off-diag-Q}
(Q_{N,I(N)})_{xy}&=o(1)|I(N)|(1+\|q_N\|_\infty),\; x\ne y,
\end{align}
with asymptotic decay rates uniform over the allowable $x,y$, and the location of $I(N)$.
The points $x$ and pairs $(x,y)$ to avoid are the same as those in Theorem~\ref{thm:Pmat}; more precisely, in terms of sets  defined in Section~\ref{subsec:sets} and parameters defined in \eqref{eqn:param2}, Eq.~\eqref{eqn:diag-Q} holds for $x\not\in\da_{J,\delta,\gamma,N}^W$, and \eqref{eqn:off-diag-Q} holds for $(x,y)\not\in\tilde{A}_{J,\delta,\gamma,N}^W$.
\end{thm}
\begin{proof}
We will extend equations \eqref{eqn:P-diag} and \eqref{eqn:P-offdiag} for $P_{I(N)}$ to those for $Q_{N,I(N)}=(q_N\Chi_{I(N)})(\hat{B}_N)$.
In the definition of $J=(\log_2 N)\varepsilon(N)$ in equation \eqref{eqn:param2}, choose $\varepsilon(N)=\max\left(\big(\frac{\|q_N''\|_\infty}{\log N}\big)^{1/2},\frac{1}{(|I(N)|\log N)^{1/2}}\right)$. Then $\varepsilon(N)\to0$ and $\varepsilon(N)|I(N)|\log N\to\infty$ as required for Definition~\ref{def:param}. We obtain \eqref{eqn:P-diag} and \eqref{eqn:P-offdiag} via Theorem~\ref{prop:mpowers}. This choice of $\varepsilon(N)$ with $\|q_N''\|_\infty=o(\log N)$ implies $\frac{\|q_N''\|_\infty}{J}\to0$ as well, which we will use in the estimates below. In what follows, the error terms $o(1)$ may depend on this rate $\frac{\|q_N''\|_\infty}{J}\to0$.

Instead of $\Chi_{I(N)}$ as for $P_{I(N)}$, we have the product  $q_N\Chi_{I(N)}$. Approximate $q_N$ by its Fourier series partial sum $q_{N,J/2}$ up to degree $J/2$, and use the decomposition,
\begin{align}\label{eqn:q-decomp}
q_{N}\Chi_{I(N)} &= q_N(\Chi_{I(N)}-G_{I(N),J/2}^{(+)}) + (q_N-q_{N,J/2})G_{I(N),J/2}^{(+)} + q_{N,J/2}G_{I(N),J/2}^{(+)},
\end{align}
where $G_{I(N),J/2}^{(+)}$ is the Selberg polynomial defined in \eqref{eqn:G}.
In the last term $q_{N,J/2}G_{I(N),J/2}^{(+)}$, both factors are trigonometric polynomials of degree  at most $J/2$, and so we can compute its application on $\hat{B}_N$ easily. We will show the other two terms are small. For notational convenience let $f_N=q_{N,J/2}G_{I(N),J/2}^{(+)}$. By abuse of notation regarding functions on $\R/(2\pi\Z)$ vs functions on the unit circle in $\C$, then
\begin{align*}
q_{N,J/2}G_{I(N),J/2}^{(+)}(\hat{B}_N) &= \sum_{|k|\le J}\hat{f_N}(k)\hat{B}_N^k \\
&=\left(\frac{1}{2\pi}\int_0^{2\pi}q_{N,J/2}(x)G_{I(N),J/2}^{(+)}(x)\,dx\right)\operatorname{Id}+\sum_{1\le |k|\le J}\hat{f_N}(k)\hat{B}_N^k.\numberthis\label{eqn:q-fourier}
\end{align*}
The integral on the identity term is $\frac{1}{2\pi}\int_{I(N)}q_{N}(x)\,dx+\mathcal{O}\big(\frac{\|q_N\|_\infty}{J}+\frac{|I(N)|\|q_N''\|_\infty}{J}\big)$, using that $q_N$ is $C^2$ so has the Fourier coefficient decay in \eqref{eqn:fourier-bound}.

For the non-identity terms in \eqref{eqn:q-fourier}, the Fourier coefficients for $1\le |k|\le J$ are
\begin{align*}
|\hat{f_N}(k)| &\le \frac{1}{2\pi}\int_0^{2\pi}|q_{N,J/2}(x)||G_{I(N),J/2}^{(+)}(x)|\,dx \\
&\le \frac{\|q_{N,J/2}\|_\infty}{2\pi}\left(|I(N)|+\frac{4\pi}{J}\right) \le \frac{|I(N)|}{2\pi}\left(\|q_{N}\|_\infty+\frac{C\|q_N''\|_\infty}{J}\right)(1+o(1)).
\end{align*}
Restricting to $(x,y)\not\in\tilde{A}_{J,\delta,\gamma,N}^W$ with parameters \eqref{eqn:param2} and using $(\hat{B}_N^k)_{xy}\le 2^{-k}r(N)$ with $r(N)\to0$ (Theorem~\ref{prop:mpowers}) for such $(x,y)$, we obtain  $\left|\sum_{1\le |k|\le J}\hat{f_N}(k)(\hat{B}_N^k)_{xy}\right|=o(|I(N)|)\|q_{N,J/2}\|_\infty$. 
So for $(x,y)\not\in\tilde{A}_{J,\delta,\gamma,N}^W$ and using $\|q_N''\|_\infty J^{-1}\to0$ and $J^{-1}=o(|I(N)|)$,
\begin{align}\label{eqn:A3}
q_{N,J/2}G_{I(N),J/2}^{(+)}(\hat{B}_N)_{xy} &= \frac{\delta_{xy}}{2\pi}\int_{I(N)}q_{N}(t)\,dt +(1+ \|q_N\|_\infty) o(|I(N)|).
\end{align}

So it just remains to show the other terms in \eqref{eqn:q-decomp} applied to $\hat{B}_N$ produce small matrix entries. For the first term $q_N(\Chi_{I(N)}-G_{I(N),J/2}^{(+)})$, applying off-diagonal estimates like in Section~\ref{subsec:off-diag} shows,
\begin{align*}
\left|q_N(\Chi_{I(N)}-G_{I(N),J/2}^{(+)})(\hat{B}_N)_{xy}\right|&= \left|\sum_{j=1}^N q_N(\theta^{(j)}) (G_{I(N),J/2}^{(+)}(\theta^{(j)})-\Chi_{I(N)}(\theta^{(j)}))\langle x|\varphi^{(j)}\rangle\langle\varphi^{(j)}|y\rangle\right|\\
&\le\|q_N\|_\infty \left((G_{I(N),J/2}^{(+)}(\hat{B}_N))_{xx}-(P_{I(N)})_{xx}\right)^{1/2}\left((G_{I(N),J/2}^{(+)}(\hat{B}_N))_{yy}-(P_{I(N)})_{yy}\right)^{1/2} \\
&\le o(1)|I(N)|\|q_N\|_\infty,
\end{align*}
where in the last line we used that $x,y\not\in\da_{J,\delta,\gamma,N}^W$ in order to use the diagonal estimates like in Section~\ref{subsec:diag-power}.

For the second term $(q_N-q_{N,J/2})G_{I(N),J/2}^{(+)}$, use that $q_N$ is $C^2$ and so has good Fourier coefficient decay \eqref{eqn:fourier-bound}.
Then for $x,y\not\in\da_{J,\delta,\gamma,N}^W$,
\begin{align*}
\big|\langle x|(q_N&-q_{N,J/2})G_{I(N),J/2}^{(+)}(\hat{B}_N)|y\rangle\big| \\
&=\left|\sum_{j=1}^N\left(\sum_{|k|\ge \lfloor J/2\rfloor +1}\hat{q_N}(k)e^{2\pi ik\theta^{(j)}}\right)G_{I(N),J/2}^{(+)}(\theta^{(j)})\langle x|\varphi^{(j)}\rangle\langle\varphi^{(j)}|y\rangle\right|\\
&\le \sum_{|k|\ge J/2+1}|\hat{q_N}(k)|\left(\sum_{j=1}^N G_{I(N),J/2}^{(+)}(\theta^{(j)})|\langle x|\varphi^{(j)}\rangle|^2\right)^{1/2}\left(\sum_{j=1}^N G_{I(N),J/2}^{(+)}(\theta^{(j)})|\langle y|\varphi^{(j)}\rangle|^2\right)^{1/2}\\
&\le \frac{C\|q_N''\|_\infty}{J}\frac{|I(N)|}{2\pi}(1+o(1)).
\end{align*}
This is $o(|I(N)|)$ since $\frac{\|q_N''\|_\infty}{J}\to0$.

In conclusion, from \eqref{eqn:q-decomp}, \eqref{eqn:A3}, and the above estimates, we obtain for  $(x,y)\not\in\tilde{A}_{J,\delta,\gamma,N}^W$,
\begin{align*}
q_N\Chi_{I(N)}(\hat{B}_N)_{xy} &= \frac{\delta_{xy}}{2\pi}\int_{I(N)}q_{N}(t)\,dt +(1+ \|q_N\|_\infty) o(|I(N)|),
\end{align*}
as desired.
\end{proof}

\subsection{Generalized Weyl law}\label{appsubsec:weyl}
\begin{sloppypar}
The extension of Theorem~\ref{thm:lweyl} to $q_N$ in \eqref{eqn:lweyl-q} follows the proof in Section~\ref{sec:lweyl-proof}, using $|(Q_{N,I(N)})_{xy}|\le \|q_N\|_\infty$ in general,  and using Theorem~\ref{thm:Qmat}  to obtain $|(Q_{N,I(N)})_{xy}|\le o(1)|I(N)|(1+\|q_N\|_\infty)$ for $(x,y)\not\in\tilde{A}_{J,\delta,\gamma,N}^W$ with $x\ne y$, and that $(Q_{N,I(N)})_{xx}=\frac{|I(N)|}{2\pi}\left(\fint_{I(N)} q_N(z)\,dz+o(1)(1+\|q_N\|_\infty)\right)$ for $(x,x)\in S$.
\end{sloppypar}

\subsection{Eigenvalue counting extension}\label{appsubsec:ev}
For the proof of Corollary~\ref{cor:weyl} Eq.~\eqref{eqn:weyl-qn}, recall that with the choices in \eqref{eqn:param2}, 
\begin{align*}
	\#\da_{J,\delta,\gamma,N}^W &\le C(\gamma N+2^J\delta N+2^JW)=\mathcal{O}(N^{2/3})=o(N|I(N)|).
\end{align*}
Eq.~\eqref{eqn:weyl-qn} then follows from the diagonal entry estimate \eqref{eqn:diag-Q} of Theorem~\ref{thm:Qmat}, along with $|(Q_{N,I(N)})_{xx}|\le \|q_N\|_\infty$, yielding
\begin{align*}
	\operatorname{tr}Q_{N,I(N)} &= \sum_{x\not\in\da_{J,\delta,\gamma,N}^W}\frac{|I(N)|}{2\pi}\left(\fint_{I(N)}q_N(z)\,dz+o(1)(1+\|q_N\|_\infty)\right)+\sum_{x\in\da_{J,\delta,\gamma,N}^W}(Q_{N,I(N)})_{xx}\\
	&=\frac{N|I(N)}{2\pi}\left(\fint_{I(N)}q_N(z)\,dz+o(1)(1+\|q_N\|_\infty)\right).
\end{align*}

\subsection*{Acknowledgments} 
This work was supported by Simons Foundation grant 563916, SM. The author would like to thank Victor Galitski, Abu Musa Patoary, and Amit Vikram for discussions on baker's map quantizations, and the referees for their careful reading and helpful remarks which have improved the presentation and clarity of this paper.

\subsection*{Data availability statement}
We do not analyse or generate any datasets.

\bibliographystyle{abbrv}
\bibliography{baker_weyl.bib}

\end{document}